\documentclass[lettersize,journal]{IEEEtran}
\usepackage{cite}
\usepackage{amsmath,amssymb,amsfonts}
\usepackage{algorithmic}
\usepackage{graphicx}
\usepackage{algorithm,algorithmic}
\newtheorem{theorem}{Theorem}
\newtheorem{corollary}{Corollary}
\newtheorem{definition}{Definition}

\newtheorem{remark}{Remark}

\newtheorem{lemma}{Lemma}

\usepackage{hyperref}
\usepackage{textcomp}
\def\BibTeX{{\rm B\kern-.05em{\sc i\kern-.025em b}\kern-.08em
    T\kern-.1667em\lower.7ex\hbox{E}\kern-.125emX}}
\begin{document}
\title{Robust Semi-passive Velocity Field Control with Boundedness Guarantees for Safe Interaction between Mechanical Systems and Physical Environment}
\author{Van Trong Dang, Sumitaka Honji, and Takahiro Wada
\thanks{This work was partially supported by JSPS KAKENHI (Grant Number 24H00298), Japan.}
\thanks{The authors are with the Graduate School of Science and Technology, Nara Institute of Science and Technology, Ikoma, Nara 630-0192, Japan (e-mail: van\_trong.dang.ve6@naist.ac.jp, honji.sumitaka@naist.ac.jp, and t.wada@is.naist.jp).}
%\thanks{Second B. Author Jr. was with Rice University, Houston, TX 77005 USA. He is  now with the Department of Physics, Colorado State University, Fort Collins,  CO 80523 USA (e-mail: author@lamar.colostate.edu).}
%\thanks{Third C. Author is with  the Electrical Engineering Department, University of Colorado, Boulder, CO  80309 USA, on leave from the National Research Institute for Metals, Tsukuba, Japan (e-mail: author@nrim.go.jp).}
}

\maketitle

\begin{abstract}
Controllers that guarantee energetic passivity with respect to the pair of external force and velocity realize safe interaction between the mechanical system and its physical environment. However, solely adhering to energetic passivity constraints may impose fundamental limitations on control performance and, in some cases, prevent the successful execution of controlled tasks. In addition, external disturbances from the physical environment can drive the system’s energy level and states beyond operational regions, thereby undermining task performance and safety. In this paper, we study a robust time-varying semi-passive velocity field control to aim to relax the inherently conservative nature of fully passive control methods in a controlled manner. Specifically, the proposed control method guarantees passivity of the closed-loop system with respect to the force–velocity
input–output pair when the energy level exceeds a predefined level, while permitting non-passive behaviors to preserve task performance otherwise. Furthermore, the energy level and the states of the closed-loop system are proved to converge to bounded domains even in the presence of unpredicted disturbances, respectively. Additionally, the proposed method also enables constraining power flow between the closed-loop system and its physical environment to enhance safety in the interaction process. Numerical simulation examples demonstrate the effectiveness of the proposed method.
\end{abstract}

\begin{IEEEkeywords}
Lyapunov method, robust control, semi-passive velocity field control (SPVFC), uniform ultimate boundedness (UUB).
\end{IEEEkeywords}

\section{Introduction}
\label{sec:introduction}
\IEEEPARstart{E}{nsuring} stability and safety during physical interaction with unstructured environments plays a central role in controlling the class of nonlinear mechanical systems, such as robot manipulators, mobile robots, inverted pendulum systems, and so on \cite{c1,c2}. In this context, the concept of energetic passivity in \cite{c3,c4,c5} with respect to the force–velocity input–output pair provides a principled theory for safe interaction. Specifically, the concept enables the closed-loop system to interact in an energetically passive manner with its physical environment, rather than primarily minimizing tracking error as sliding mode control in \cite{c6} or constraining state as barrier Lyapunov function approach in \cite{c7} without considering energy relations. Although the energetic passivity of the feedback interconnection between the mechanical system and its physical environment provides a sufficient condition for input–output stability and is widely used to promote safe interaction \cite{c8}, its inherent conservatism may significantly restrict the achievable performance of control systems during physical interaction tasks. For example, a passive system may exhibit sluggish behaviors or even stalled responses because it solely dissipates energy supplied by the external environment without generating energy internally. Meanwhile, upper-bound energy constraints are insufficient to ensure safety in real-world applications due to ignoring the limitation of the energy transfer rate (i.e., power flow). Despite its passive nature, the system may exhibit significant energy and instantaneous power flow that give rise to undesired or dangerous behaviors in unstructured environments. Therefore, it is necessary to evolve a new passivity-based control approach satisfying stability and safe physical interaction while relaxing the conservative nature in classical passive formulations.

In recent decades, passivity-based control (PBC) approaches have been extensively developed and investigated in the literature. Accordingly, various notable works \cite{c12,c13,c14,c15,c16,c17,c18,c19,c20,c21,c22,c23,c24} have emerged that seek to generalize or maintain passivity at all time. Specifically, taking the benefits of sliding mode control, robust PBC methods in \cite{c12,c13,c14} were applied to nonlinear mechanical systems under external perturbations.  To satisfy both passive and safety simultaneously, the ideas of PBC and control barrier functions were combined in the studies \cite{c15,c16}. In this manner, an enhanced method in \cite{c17} rendered closed-loop passivity and a safety-critical set forward invariant, even in the presence of external disturbances. Furthermore, in the context of physical interaction with unmodeled dynamic environments, the concept of the energy tank in \cite{c18,c19,c20} is widely used by leveraging energy accumulation and redistribution mechanisms. To enhance overall safety, an adaptive PBC using a virtual energy tank was proposed in \cite{c21}, enabling the limitation of power flow. Moreover, passive velocity field control (PVFC) approaches in \cite{c22,c23,c24} were designed for contour following tasks by combining the concepts of the fictitious flywheel serving as a fictitious energy storage component and time-invariant velocity field. However, a drawback of the aforementioned PBC approaches is their inherently conservative nature when enforcing tight passivity conditions. This is because the task performance can significantly degrade in the case of energy-depleted conditions. Consequently, while satisfying passivity conditions, passive systems may be unable to fully execute the intended tasks. 

%Recently, extensions in \cite{c9,c10,c11} of classical passivity theory have been proposed to address its limitations in sampled-data and discrete-time domains. Specifically, Krasovskii passivity in \cite{c9,c10} was introduced to sampled-data stabilization and output consensus control, while a new notion of $\varrho$-passivity in \cite{c11} provides a unified framework to preserve passivity across abstract time domains. 

For relaxing the conservative nature of PBC, it is necessary to incorporate controlled energy-compensation mechanisms that enable more flexible and responsive system behavior while preserving overall stability. Inspired by the original PVFC \cite{c22,c23,c24}, the works in \cite{c25,c26} introduced velocity field control with energy compensation that adaptively injects or dissipates energy based on a predefined threshold. Accordingly, an improved velocity field control was developed in \cite{c27} to compensate energy within a finite-time interval while ensuring the stability of the closed-loop system. Furthermore, for balancing performance and stability in physical interaction, an ultimate passivity approach in \cite{c28} was proposed based on the switching principle between nominal and conservative control modes. Although these studies mitigated the inherent conservativeness of PBC by introducing controlled energy supplementation, the occurrence of high energy and instantaneous power flow may compromise safety during physical interaction.

Motivated by the aforementioned observations, this paper develops a robust time-varying semi-passive velocity field control (SPVFC) for the class of nonlinear mechanical systems due to its representativeness and generality. Specifically, the proposed semi-passive control method guarantees that the closed-loop mechanical system is passive with respect to the force–velocity input–output pair when the energy level exceeds a predefined level, while allowing non-passive behaviors to preserve task performance otherwise. The robust SPVFC approach facilitates a systematic relaxation of the inherent conservativeness of PBC \cite{c18,c19,c20,c21,c22,c23,c24} to ensure task performance while simultaneously enhancing safety during physical interaction with unstructured environments. In contrast to the switching-based passivity approach in \cite{c28}, which may induce discontinuities and lead to undesirable oscillatory behavior, the proposed method ensures a smooth and continuous transition between passive and non-passive regimes. In addition, the present study extends and addresses the remaining problems of the preliminary work presented in \cite{c27}, which did not account for inherent external disturbances from the physical environment and power flow constraints. In the present study, it is worth emphasizing that the energy-compensation mechanism ensures the system's energy level converges rapidly to a designed region within a finite-time interval. Furthermore, the states of the closed-loop mechanical system are also proven to be uniformly ultimately bounded (UUB) with an exponential convergence rate despite the existence of external disturbances. In addition to relaxing the conservative nature and achieving the desired task, the power flow under the proposed approach is bounded for enhancing safety during interaction. By appropriately configuring the control parameters, the operator can define the bounded operating regions, enforce the safety-invariant set, and regulate the convergence rate to meet the requirements of specific applications.

The remainder of this paper is organized as follows. Section II gives the mathematical preliminaries and foundational theories. In Section III, we introduce the robust time-varying semi-passive velocity field control method and the main contributions. The effectiveness of the proposed approach is investigated through numerical simulation examples in Section IV. Section V presents some discussions and limitations of the proposed method. Finally, Section VI concludes this paper.

\textbf{\textit{Notations:}} $\mathbb{R}$ and $ {{\mathbb{R}}_{+}}$ are the sets of real numbers and non-negative real numbers, respectively. The $n\times n$ identity matrix is denoted by ${{\mathbf{I}}_{n}}$. For a vector $\mathbf{x}=\left[ \begin{matrix}
   {{x}_{1}} & {{x}_{2}} & \ldots  & {{x}_{n}}  \\
\end{matrix} \right]^{\top}\in {{\mathbb{R}}^{n}}$, $\text{diag}\left( \mathbf{x} \right)$ is a diagonal matrix constructed by $\mathbf{x}$. The symbols ${{\left\| \mathbf{x} \right\|}_{p}}$ and ${{\left\| \mathbf{x} \right\|}_{\infty}}$ are the
$p-$norm and infinity norm of $\mathbf{x}$, respectively. These are defined by ${{\left\| \mathbf{x} \right\|}_{p}}:={{\left( {{\left| {{x}_{1}} \right|}^{p}}+{{\left| {{x}_{2}} \right|}^{p}}+\ldots +{{\left| {{x}_{n}} \right|}^{p}} \right)}^{\frac{1}{p}}}$ and ${{\left\| \mathbf{x} \right\|}_{\infty }}:=\underset{1\le i\le n}{\mathop{\max }}\,\left| {{x}_{i}} \right|$. In particular, the notation $\left\| \mathbf{x} \right\|$ is used to represent the Euclidean norm ${{\left\| \mathbf{x} \right\|}_{2}}$. The generalized power of real numbers is defined as ${{\left\lfloor{\mathbf{x}}\right\rceil}^{\zeta}}:={{\left[ {\mathop{\rm \text{sgn}}} ({x_1}){\left| {{x_1}} \right|^\zeta},{\mathop{\rm \text{sgn}}} ({x_2}){\left| {{x_2}} \right|^\zeta},\ldots ,{\mathop{\rm \text{sgn}}} ({x_n}){\left| {{x_n}} \right|^\zeta} \right]}^{\top}}$ with $\zeta \in {{\mathbb{R}}_{+}}$. The gradient of the time-varying function is $ {{\dot{f}}}\left( \mathbf{x},t \right)=\sum\limits_{i=1}^{n}{\frac{\partial {{f}}\left( \mathbf{x},t \right)}{\partial {{x}_{i}}}{{{\dot{x}}}_{i}}}+\frac{\partial {{f}}\left( \mathbf{x},t \right)}{\partial t}$. 
%The symbol ${{\nabla }_{\left( \centerdot  \right)}}\mathbf{f}\left( \mathbf{x},t \right)$ denotes the gradient  of the time-varying vector function  $\mathbf{f}\left( \mathbf{x},t \right)={{\left[ \begin{matrix}
 %  {{f}_{1}}\left( \mathbf{x},t \right) & {{f}_{2}}\left( \mathbf{x},t \right) & \ldots  & {{f}_{m}}\left( \mathbf{x},t \right) \\
%\end{matrix} \right]}^{\top}}\in {{\mathbb{R}}^{m}}$ with respect to $\left( \centerdot  \right)$, that is, ${{\nabla }_{t}}\mathbf{f}\left( \mathbf{x},t \right):={{\left[ \begin{matrix}
  % {{\partial }_{t}}{{f}_{1}}\left( \mathbf{x},t \right) & {{\partial }_{t}}{{f}_{2}}\left( \mathbf{x},t \right) & \ldots  & {{\partial }_{t}}{{f}_{m}}\left( \mathbf{x},t \right)  \\
%\end{matrix} \right]}^{\top}}$ 
%in which  ${{\partial }_{t}} {{f}_{j}}\left( \mathbf{x},t \right):=\sum\limits_{i=1}^{n}{\frac{\partial {{f}_{j}}\left( \mathbf{x},t \right)}{\partial {{x}_{i}}}{{{\dot{x}}}_{i}}}+\frac{\partial {{f}_{j}}\left( \mathbf{x},t \right)}{\partial t}$, $j=1,2,\ldots m$.

\section{Preliminaries}
\label{sec2}

\subsection{Problem Formulation}
Considering the multi-input multi-output (MIMO) nonlinear mechanical system described by:
\begin{equation}
\label{eq1}
\mathbf{M}\left( \mathbf{q} \right)\mathbf{\ddot{q}}+\mathbf{C}\left( \mathbf{q},\mathbf{\dot{q}} \right)\mathbf{\dot{q}}=\pmb{\tau }+{{\pmb{\tau }}_{ext}},
\end{equation}
in which $\mathbf{q}={{\left[ \begin{matrix}
   {{q}_{1}} & {{q}_{2}} & \ldots  & {{q}_{n}}  \\
\end{matrix} \right]}^{\top}},\,\mathbf{\dot{q}},\,\mathbf{\ddot{q}}\in {{\mathbb{R}}^{n}}$ are the position, velocity, and acceleration vectors of the system, respectively. $\mathbf{M}\left( \mathbf{q} \right)\in {{\mathbb{R}}^{n\times n}}$ denotes the inertia matrix which is symmetric and positive definite and $\mathbf{C}\left( \mathbf{q},\mathbf{\dot{q}} \right)\in {{\mathbb{R}}^{n\times n}}$ is Coriolis and centrifugal matrix. $\left( {{{\mathbf{\dot{M}}}}}\left( {{\mathbf{q}}} \right)-2{{\mathbf{C}}}\left( {{\mathbf{q}}},{{{\mathbf{\dot{q}}}}} \right) \right)$ is a skew-symmetric matrix. $\pmb{\tau }$ and ${{\pmb{\tau }}_{ext}}\in {{\mathbb{R}}^{n}}$ are the generalized control inputs and external torque from the physical environment, respectively. $\mathbf{y}={{\left[ \begin{matrix}
   {{\mathbf{q}}^{\top}} & {{{\mathbf{\dot{q}}}}^{\top}}  \\
\end{matrix} \right]}^{\top}}\in {{\mathbb{R}}^{2n}}$ denotes the system output vector. In this study, we assume that the external torque vector is bounded. This is because the external torque from the physical environment typically has finite energy, thus rendering the reasonableness of the above assumption. In addition, as the gravity term can be straightforwardly compensated using established techniques (e.g., as demonstrated in Chapter 2 of \cite{c29}), we exclude this term from the dynamics model to mitigate the complexity involved in designing and analyzing the controller in the following sections.

For physical interactions between mechanical systems with unstructured environments, a fundamental challenge lies in balancing task performance and interaction safety. First, during physical interactions, the mechanical system is subjected to external disturbances from the environment, which pose significant challenges in regulating both motion and energy dynamics. For instance, such disturbances can lead to an excessive decrease or increase in the system's energy level, thereby causing undesirably sluggish or overly rapid responses, respectively. This implies that the task performance or interaction safety can be compromised due to external disturbances. Additionally, even when the closed-loop mechanical system satisfies the upper-bound energy constraints, the instantaneous power flow between the system and environment may still induce undesirable or potentially hazardous behaviors. Second, while purely passive behavior may cause the system to stall or fail to complete tasks, overly aggressive actions can compromise safety and violate the stability requirements of physical interaction. Thus, the development of a robust time-varying SPVFC for the closed-loop mechanical system, which is capable of relaxing conservative nature, maintaining bounded energy and motion, and enforcing power flow constraints, is expected to ensure safety while achieving the desired task performance, even in the presence of external disturbances.

%For physical interactions between mechanical systems with unstructured environments, a fundamental challenge lies in balancing task performance and interaction safety. Indeed, exhibiting purely passive behavior may lead the system to stall or fail to accomplish tasks, whereas overly aggressive actions can compromise safety and violate interaction stability requirements. When the closed-loop mechanical system is regulated by the concept of SPVFC, passivity can be relaxed in a controlled manner to maintain task performance. Specifically, SPVFC enables the system to temporarily generate energy internally, within a well-defined energy-bounded region, while ensuring that such non-passive behavior is strictly confined and recoverable. In addition, the amount of energy and the power flow that can injure the system and environment are constrained, thereby enhancing safety during physical contacts. Thus, by applying the concept of robust time-varying SPVFC, the closed-loop mechanical system is expected to achieve the desired task performance while simultaneously ensuring safety, even in the presence of uncertain nature.

\subsection{Fundamental Theories}

\begin{definition} 
\label{def1} {\emph{(Passivity)}}
\cite{c2} A dynamic system with input $\mathbf{u}$, output $\mathbf{y}$, and state $\mathbf{{z}}$ is passive if there exist a non-negative storage function $S\left( \mathbf{z} \right)$ and a non-negative dissipation function $D\left( \mathbf{z} \right)$ such that the following is satisfied for any $t\in {{\mathbb{R}}_{+}}$: 
\begin{equation}
\label{eq2}
\int\limits_{0}^{t}{{{\mathbf{y}}^{\top}}\left( \varepsilon  \right)\mathbf{u}\left( \varepsilon  \right)}d\varepsilon -\int\limits_{0}^{t}{D\left( \mathbf{z }\left( \varepsilon  \right) \right)}d\varepsilon \ge -S\left( \mathbf{z }\left( 0 \right) \right).
\end{equation}
in which the amount of energy is dissipated that is given by $\int\limits_{0}^{t}{D\left( \mathbf{z }\left( \varepsilon  \right) \right)}d\varepsilon$.
\end{definition}

\begin{theorem}
\label{thm1} {\emph{(UUB by Lyapunov Analysis)}} \cite{c30,c31}
For a nonlinear system with an unknown but bounded disturbance, there exists a continuously partially differentiable function $L\left( \mathbf{z},t \right):\left( \mathbb{D}\subset {{\mathbb{R}}^{n}} \right)\times {{\mathbb{R}}_{+}}\to \mathbb{R}$ and class $\mathcal{K}$ functions ${{\alpha }_{1}}\left( \centerdot  \right)$, ${{\alpha }_{2}}\left( \centerdot  \right)$, ${{\alpha }_{3}}\left( \centerdot  \right)$ such that: 
\begin{equation}
\left\{ \begin{aligned}
  & {{\alpha }_{1}}\left( \left\| \mathbf{z} \right\| \right)\le L\left( \mathbf{z},t \right)\le {{\alpha }_{2}}\left( \left\| \mathbf{z} \right\| \right),\,\,\,\,\left( \mathbf{z},t \right)\in \mathbb{D}\times {{\mathbb{R}}_{+}} \\ 
 & {{\partial }_{t}}L\left( \mathbf{z},t \right)\le -{{\alpha }_{3}}\left( \left\| \mathbf{z} \right\| \right),\,\,\,\,\left( \mathbf{z},t \right)\in \mathbb{D}\times {{\mathbb{R}}_{+}},\,\left\| \mathbf{z} \right\|>B \\ 
\end{aligned} \right.
\label{eq3}
\end{equation}
for some $B>0$ where the ball of radius $B$  is contained in $\mathbb{D}$, then the system is UUB with bound $\alpha _{1}^{-1}\left( {{\alpha }_{2}}\left( B \right) \right)$. Furthermore, if $\mathbb{D}= {{\mathbb{R}}^{n}}$ and ${{\alpha }_{1}}\left( \centerdot  \right),{{\alpha }_{2}}\left( \centerdot  \right),{{\alpha }_{3}}\left( \centerdot  \right)$ are class $\mathcal{K_\infty}$ functions, the system is globally UUB.  
\end{theorem}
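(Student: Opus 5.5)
Theorem~\ref{thm1} is the standard Khalil-type ultimate boundedness result. I would prove it by a sublevel-set invariance argument applied to $L$ along solutions $\mathbf{z}(t)$. Here $\partial_t L$ is read as the total derivative along trajectories, in line with the paper's notation for $\dot f(\mathbf{x},t)$.

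Fix $B>0$ with the closed ball $\bar{\mathbb{B}}_B\subset\mathbb{D}$. Choose $r>B$ so that $\bar{\mathbb{B}}_r\subset\mathbb{D}$ and $\alpha_2(B)<\alpha_1(r)$; this is possible for $r$ slightly larger than needed when $\mathbb{D}$ is large enough, and in the global $\mathcal{K}_\infty$ case it always holds. Set $c:=\alpha_2(B)$ and consider the time-varying sublevel set
\begin{equation*}
\Omega_t:=\{\mathbf{z}: L(\mathbf{z},t)\le c\}.
\end{equation*}
The left inequality of \eqref{eq3} gives $\Omega_t\subset\{\|\mathbf{z}\|\le \alpha_1^{-1}(c)\}$. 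The right inequality gives $\bar{\mathbb{B}}_B\subset\Omega_t$.

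\textbf{Step 1: positive invariance of $\Omega_t$.} Suppose a solution starts with $L(\mathbf{z}(t_0),t_0)\le c$ and later $L$ exceeds $c$. At any time where $L(\mathbf{z}(t),t)=c$ and is about to increase, we have $\alpha_2(\|\mathbf{z}\|)\ge c=\alpha_2(B)$. If $\|\mathbf{z}\|=B$ exactly, a short continuity argument handles it; otherwise $\|\mathbf{z}\|>B$. In the latter case the second line of \eqref{eq3} gives $\dot L\le-\alpha_3(\|\mathbf{z}\|)<0$, which is a contradiction. I would make this rigorous with a comparison-lemma or first-exit-time argument.

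\textbf{Step 2: finite-time entry.} For an initial state with $L(\mathbf{z}(t_0),t_0)=:L_0\in(c,\alpha_1(r)]$, the state stays in the annulus $\{\mathbf{z}: B\le\|\mathbf{z}\|\le r\}$ as long as $L>c$. The upper bound is $\|\mathbf{z}\|\le\alpha_1^{-1}(L)\le r$, using monotonicity from Step 1 applied to the level $L_0$. There $\dot L\le-\alpha_3(B)$, so
\begin{equation*}
L(\mathbf{z}(t),t)\le L_0-\alpha_3(B)(t-t_0).
\end{equation*}
Hence $\Omega_t$ is reached within $T=(L_0-c)/\alpha_3(B)$, and $T$ depends only on the initial bound, which gives uniformity. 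Combining Steps 1 and 2, for all $t\ge t_0+T$,
\begin{equation*}
\|\mathbf{z}(t)\|\le\alpha_1^{-1}(L(\mathbf{z}(t),t))\le\alpha_1^{-1}(\alpha_2(B)),
\end{equation*}
which is the claimed ultimate bound. For the global statement with $\mathcal{K}_\infty$ functions, any initial state satisfies $L_0\le\alpha_2(\|\mathbf{z}(t_0)\|)<\infty=\lim\alpha_1$, so $r$ can be chosen arbitrarily large.

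\textbf{Main obstacle.} I expect the invariance argument in Step 1 to be the delicate part. It needs care at the boundary level set and with the time-varying nature of $\Omega_t$. The disturbance also enters only through the derivative inequality, so existence of solutions on the interval must be ensured by confinement to a compact set. I would handle this by contradiction on the first time $L$ exceeds $c+\epsilon$, and then let $\epsilon\to0$.
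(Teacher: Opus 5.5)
Your proof is correct and is essentially the standard sublevel-set argument behind the Khalil-type ultimate boundedness theorem, which the paper imports from its cited references without giving a proof; the compatibility condition you add ($\alpha_2(B)<\alpha_1(r)$ for some $r$ with $\bar{\mathbb{B}}_r\subset\mathbb{D}$, with initial states satisfying $L(\mathbf{z}(t_0),t_0)\le\alpha_1(r)$) is exactly the hypothesis of that cited result that the paper's local wording omits, so it is genuinely needed rather than a weakness of your argument. The boundary case $\|\mathbf{z}\|=B$ in Step 1 needs no separate continuity argument: on any time interval where $L>c=\alpha_2(B)$ you have $\alpha_2(\|\mathbf{z}\|)\ge L>\alpha_2(B)$, hence $\|\mathbf{z}\|>B$ strictly, so your first-exit argument closes the step directly.
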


\begin{lemma}
\label{lm1}
\cite{c32} The following inequality holds for arbitrary ${{z}_{1}}$, ${{z}_{2}}$ $\in \mathbb{R}$, and any real-valued function $\mu \left( {{z}_{1}},{{z}_{2}} \right)\in {{\mathbb{R}}_{+}}\backslash \left\{ 0 \right\}$:  
\begin{equation}
\label{eq4}
        {{\left| {{z}_{1}} \right|}^{{{\rho }_{1}}}}{{\left| {{z}_{2}} \right|}^{{{\rho}_{2}}}}\le \frac{{{\rho}_{1}}{{\mu}}}{{{\rho }_{1}}+{{\rho}_{2}}}{{\left| {{z }_{1}} \right|}^{{{\rho}_{1}}+{{\rho}_{2}}}}+\frac{{{\rho}_{2}}{{\mu}}^{-\frac{{{\rho}_{1}}}{{{\rho}_{2}}}}}{{{\rho}_{1}}+{{\rho}_{2}}}{{\left| {{z }_{2}} \right|}^{{{\rho}_{1}}+{{\rho }_{2}}}},
\end{equation}
where $\rho_1$ and $\rho_2$ are positive real numbers.
\end{lemma}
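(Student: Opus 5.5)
This is a weighted Young inequality, so the plan is to derive it from concavity of the logarithm (equivalently, the weighted AM--GM inequality). If $z_1=0$ or $z_2=0$, the left-hand side of \eqref{eq4} vanishes and the right-hand side is nonnegative, so that case is immediate. Otherwise, fix the point $(z_1,z_2)$ and write $\mu=\mu(z_1,z_2)>0$. Since the inequality is checked pointwise, the dependence of $\mu$ on $(z_1,z_2)$ plays no role; $\mu$ can be treated as a fixed positive constant throughout.

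Set $p=\frac{\rho_1+\rho_2}{\rho_1}$ and $q=\frac{\rho_1+\rho_2}{\rho_2}$, so that $\frac1p+\frac1q=1$ with $p,q>1$. Define
\[
a=\left(\mu^{1/p}|z_1|^{\rho_1}\right)^{p}=\mu|z_1|^{\rho_1+\rho_2},\qquad
b=\left(\mu^{-1/p}|z_2|^{\rho_2}\right)^{q}=\mu^{-q/p}|z_2|^{\rho_1+\rho_2}.
\]
Since $q/p=\rho_1/\rho_2$, we have $b=\mu^{-\rho_1/\rho_2}|z_2|^{\rho_1+\rho_2}$.

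Concavity of $\ln$ gives $a^{1/p}b^{1/q}\le \frac1p a+\frac1q b$. Here $a^{1/p}b^{1/q}=\mu^{1/p}|z_1|^{\rho_1}\,\mu^{-1/p}|z_2|^{\rho_2}=|z_1|^{\rho_1}|z_2|^{\rho_2}$, because the factors of $\mu$ cancel. Substituting $\frac1p=\frac{\rho_1}{\rho_1+\rho_2}$ and $\frac1q=\frac{\rho_2}{\rho_1+\rho_2}$ then gives exactly \eqref{eq4}.

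The only delicate point is choosing how to split $\mu$ between the two factors so that the exponent on the second term comes out as $-\rho_1/\rho_2$. This is handled by putting $\mu^{1/p}$ on the first factor and $\mu^{-1/p}$ on the second, as above. Everything else is routine algebra.
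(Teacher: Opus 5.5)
Your proof is correct: the split $\mu^{1/p}$, $\mu^{-1/p}$ with $p=\frac{\rho_1+\rho_2}{\rho_1}$, $q=\frac{\rho_1+\rho_2}{\rho_2}$ gives exactly the coefficients $\frac{\rho_1\mu}{\rho_1+\rho_2}$ and $\frac{\rho_2\mu^{-\rho_1/\rho_2}}{\rho_1+\rho_2}$ through weighted AM--GM, and you handle the degenerate case $z_1z_2=0$ and the pointwise dependence of $\mu$. The paper does not prove the lemma; it imports it from the cited reference, where it is the same standard weighted Young inequality, so your argument follows the intended route.
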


\begin{lemma}
\label{lm2}
\cite{c33} The following inequality holds for any real number ${{z}_{i}},\,\,i={1,\ldots ,n}$ and $0<\rho_3 \le1$:  
\begin{equation}
\label{eq5}
\sum\limits_{i=1}^{n}{{{\left| {{z}_{i}} \right|}^{\rho_3}}}\ge {{\left( \sum\limits_{i=1}^{n}{\left| {{z}_{i}} \right|} \right)}^{\rho_3}}.
\end{equation}
\end{lemma}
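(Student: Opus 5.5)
We prove Lemma~\ref{lm2} by reducing it to a single concavity-type (subadditivity) inequality for the map $s\mapsto s^{\rho_3}$ on $\mathbb{R}_+$, and then inducting on $n$. Set $a_i:=|z_i|\ge 0$ and $A:=\sum_{i=1}^{n}a_i$. If $A=0$, every $a_i=0$ and both sides of (\ref{eq5}) vanish, so assume $A>0$. The case $\rho_3=1$ is an equality, so we may also take $0<\rho_3<1$.

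\textbf{Normalization.} Put $t_i:=a_i/A\in[0,1]$, so that $\sum_i t_i=1$. Dividing (\ref{eq5}) by $A^{\rho_3}>0$ shows it is equivalent to
\begin{equation*}
\sum_{i=1}^{n} t_i^{\rho_3}\ \ge\ 1 .
\end{equation*}
Since $0\le t_i\le 1$ and $0<\rho_3\le 1$, we have $t_i^{\rho_3}\ge t_i$. For $t_i=0$ this holds trivially. For $t_i\in(0,1]$ it follows because $t_i^{\rho_3-1}\ge 1$, as the exponent $\rho_3-1$ is nonpositive and the base is at most $1$. Summing over $i$ gives $\sum_i t_i^{\rho_3}\ge\sum_i t_i=1$, which is the claim.

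\textbf{Alternative route.} One could instead show the two-term inequality $(a+b)^{\rho_3}\le a^{\rho_3}+b^{\rho_3}$ and induct on $n$. For the two-term case, fix $b\ge0$ and note that $g(a)=a^{\rho_3}+b^{\rho_3}-(a+b)^{\rho_3}$ satisfies $g(0)=0$ and $g'(a)=\rho_3\left(a^{\rho_3-1}-(a+b)^{\rho_3-1}\right)\ge 0$, since $s\mapsto s^{\rho_3-1}$ is nonincreasing. The normalization argument avoids this induction entirely, so I would present that one.

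There is no real obstacle. The only points needing care are the degenerate cases: $A=0$, terms with $t_i=0$ (where $t_i^{\rho_3-1}$ is undefined, so $t_i^{\rho_3}\ge t_i$ must be checked directly), and the endpoint $\rho_3=1$, where the inequality holds with equality.
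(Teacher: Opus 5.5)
Your proof is correct and complete. Normalizing by $A=\sum_i|z_i|>0$ and using the pointwise bound $t_i^{\rho_3}\ge t_i$ on $[0,1]$, summed over $i$, gives exactly (\ref{eq5}). You also handle the degenerate cases $A=0$, $t_i=0$ and $\rho_3=1$ correctly. Your two-term alternative via the monotonicity of $g$ is valid as well: apply it for $a>0$ and use continuity of $g$ at $a=0$.

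The paper does not prove this lemma; it imports it from the cited reference \cite{c33}, so there is no in-paper argument to compare against. Your normalization argument is the standard proof. The paper applies the lemma in \eqref{eq19} with $|z_i|=(\dot{q}^a_i)^2$ and $\rho_3=\frac{\zeta_1+\zeta_2}{2\zeta_2}\in(\tfrac12,1)$, and your hypothesis range covers that case.
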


% \begin{lemma}
% \label{lm3}
% (Cauchy--Schwarz Inequality) \cite{c54} For arbitrary $\mathbf{A},\,\,\mathbf{B}\in {{R}^{n}} $,
% the following inequality holds:  
% \begin{equation}
% \label{eq7}
% {{\left( {{\mathbf{A}}^{\top}}\mathbf{B} \right)}^{2}}\le \left( {{\mathbf{A}}^{\top}}\mathbf{A} \right)\left( {{\mathbf{B}}^{\top}}\mathbf{B} \right)
% \end{equation}
% \end{lemma}

\section{Main Results}
\label{sec3}
In this section, a robust time-varying SPVFC is designed to follow the desired trajectory, which is encoded by the concept of time-varying velocity field. Notably, the features of the proposed approach, ensuring task performance and safe interaction, are analyzed from theoretical viewpoints.

\subsection{Robust Time-varying SPVFC}
\label{3a}
In the present study, the fictitious flywheel concept is utilized to store and supplement kinetic energy during physical interactions. In this manner, an augmented state is incorporated with the dynamics model \eqref{eq1} as follows: 
\begin{equation}
\label{eq6}
{{\mathbf{M}}^{a}}\left( {{\mathbf{q}}^{a}} \right){{\mathbf{\ddot{q}}}^{a}}+{{\mathbf{C}}^{a}}\left( {{\mathbf{q}}^{a}},{{{\mathbf{\dot{q}}}}^{a}} \right){{\mathbf{\dot{q}}}^{a}}={{\pmb{\tau }}^{a}}+\pmb{\tau }_{ext}^{a},
\end{equation}
where ${{\mathbf{q}}^{a}}={{\left[ \begin{matrix}
   {{\mathbf{q}}^{\top}} & {{q}_{f}}  \\
\end{matrix} \right]}^{\top}}\in {{\mathbb{R}}^{n+1}}
$, $\mathbf{q}$, and ${{q}_{f}}$ are the states of the augmented system, mechanical system, and flywheel, respectively. ${{\pmb{\tau }}^{a}}={{\left[ \begin{matrix}
   {{\pmb{\tau }}^{\top}} & {{\tau }_{f}}  \\
\end{matrix} \right]}^{\top}}$  and $\pmb{\tau }_{ext}^{a}={{\left[ \begin{matrix}
   \pmb{\tau }_{ext}^{\top} & 0  \\
\end{matrix} \right]}^{\top}}$ are the augmented control input and external force vector, respectively. ${{\mathbf{M}}^{a}}\left( {{\mathbf{q}}^{a}} \right):=\left[ \begin{matrix}
   \mathbf{M}\left( \mathbf{q} \right) & {{\mathbf{0}}_{n\times 1}}  \\
   {{\mathbf{0}}_{1\times n}}  & {{m}_{f}}  \\
\end{matrix} \right]$ is the augmented inertia matrix, which is established by the inertia matrix of the mechanical system $\mathbf{M}\left( \mathbf{q} \right)$ and the mass of the flywheel ${{m}_{f}}$. ${{\mathbf{C}}^{a}}\left( {{\mathbf{q}}^{a}},{{{\mathbf{\dot{q}}}}^{a}} \right):=\left[ \begin{matrix}
   \mathbf{C}\left( \mathbf{q},\mathbf{\dot{q}} \right) & {{\mathbf{0}}_{n\times 1}}  \\
   {{\mathbf{0}}_{1\times n}} & 0  \\
\end{matrix} \right]$ with $\mathbf{C}\left( \mathbf{q},\mathbf{\dot{q}} \right)$ denotes the Coriolis and centrifugal matrix of the augmented system. $\left( {{{\mathbf{\dot{M}}}}^{a}}\left( {{\mathbf{q}}^{a}} \right)-2{{\mathbf{C}}^{a}}\left( {{\mathbf{q}}^{a}},{{{\mathbf{\dot{q}}}}^{a}} \right) \right)$ is also a skew-symmetric matrix.

To encode a desired time-variant trajectory $\mathbf{q}_{d}\left( t \right)$, we propose a time-varying velocity field instead of the time-invariant velocity fields used in \cite{c22,c23,c24,c25,c26}. The desired time-varying velocity field for the augmented system $ {{\mathbf{V}}^{a}}\left( {{\mathbf{q}}^{a}},t \right):={{\left[ \begin{matrix}
   {{\mathbf{V}}^{\top}}\left( \mathbf{q},t \right) & {{V}_{f}}\left( \mathbf{q},t \right)  \\
\end{matrix} \right]}^{\top}}$ is defined as:
\begin{equation}
\left\{ \begin{aligned}
  & \mathbf{V}\left( \mathbf{q},t \right):=\mathbf{\dot{q}}_{d}\left( t \right)-\pmb{\psi}\left( \mathbf{q}\left( t \right)-\mathbf{q}_{d}\left( t \right) \right) \\ 
 & {{V}_{f}}\left( \mathbf{q},t \right):=\sqrt{\frac{2}{{{m}_{f}}}\left( {{E}^{a}}-\frac{1}{2}{{\mathbf{V}}^{\top}}\left( \mathbf{q},t \right)\mathbf{M}\left( \mathbf{q} \right)\mathbf{V}\left( \mathbf{q},t \right) \right)} \\ 
\end{aligned} \right.
\label{eq7}
\end{equation}
where $\mathbf{V}\left( \mathbf{q},t \right)$ and ${{V}_{f}}\left( \mathbf{q},t \right)$ denote the desired time-varying velocity field of the mechanical system and flywheel, respectively. $\pmb{\psi}$ is the control parameter matrix. To derive the second equation in \eqref{eq7}, the principle of conservation of kinetic energy is employed, expressed as ${{k}^{a}}\left( {{\mathbf{q}}^{a}},{{\mathbf{V}}^{a}}\left( {{\mathbf{q}}^{a}},t \right) \right):=\frac{1}{2}{{\mathbf{V}}^{\top}}\left( \mathbf{q},t \right)\mathbf{M}\left( \mathbf{q} \right)\mathbf{V}\left( \mathbf{q},t \right)+\frac{1}{2}{{m}_{f}}V_{f}^{2}\left( \mathbf{q},t \right)={{E}^{a}}>0$, where ${{E}^{a}}$ is designed such that the formulation under the square root is positive. It should be noted that the time-varying velocity field in \eqref{eq7} is formulated to address timed trajectory tracking tasks during interaction with the physical environment, whereas existing studies \cite{c22,c23,c24,c25,c26} primarily consider time-invariant velocity fields for contour-following tasks.

%Obviously, the above feature of the second equation in \eqref{eq7} is violated without the augmented state of the fictitious flywheel.

Inspired by the preliminary version \cite{c27}, an improved robust time-varying SPVFC is developed for regulating the passivity domain, guaranteeing stability, as well as constraining energy level and power flow. By incorporating energy-compensation control terms, including integer and fractional-order elements, the control law is designed as:
\begin{equation}
\begin{aligned}
  & {{\pmb{\tau }}^{a}}\left( {{\mathbf{q}}^{a}},{{{\mathbf{\dot{q}}}}^{a}},t \right)={{\mathbf{R}}_{1}}\left( {{\mathbf{q}}^{a}},{{{\mathbf{\dot{q}}}}^{a}},t \right){{{\mathbf{\dot{q}}}}^{a}}+{{\mathbf{R}}_{2}}\left( {{\mathbf{q}}^{a}},{{{\mathbf{\dot{q}}}}^{a}},t \right){{{\mathbf{\dot{q}}}}^{a}} \\ 
 & \,\,\,\,\,\,\,\,\,\,\,\,\,\,\,\,\,\,\,\,\,\,\,\,\,\,\,\,\,\,\,\,\,\,\,\,\,\,\,\,-{{\mathbf{S}}_{1}}\left( {{\mathbf{q}}^{a}},{{{\mathbf{\dot{q}}}}^{a}} \right){{{\mathbf{\dot{q}}}}^{a}}-{{\mathbf{S}}_{2}}\left( {{\mathbf{q}}^{a}},{{{\mathbf{\dot{q}}}}^{a}} \right){{\left\lfloor {{{\mathbf{\dot{q}}}}^{a}} \right\rceil}^{\frac{{{\zeta}_{1}}}{{{\zeta}_{2}}}}}, \\ 
\end{aligned}
\label{eq8}
\end{equation}
in which ${{\zeta}_{1}},\,{{\zeta}_{2}}$ $\left( {{\zeta}_{1}}<{{\zeta}_{2}} \right)$ are  positive odd integers. ${{\mathbf{R}}_{1}}\left( {{\mathbf{q}}^{a}},{{{\mathbf{\dot{q}}}}^{a}},t \right):=\frac{1}{2{{E}^{a}}}\left( \mathbf{w}{{\mathbf{P}}^{\top}}-\mathbf{P}{{\mathbf{w}}^{\top}} \right)$ and ${{\mathbf{R}}_{2}}\left( {{\mathbf{q}}^{a}},{{{\mathbf{\dot{q}}}}^{a}},t \right):=\kappa \left( \mathbf{P}{{\mathbf{p}}^{\top}}-\mathbf{p}{{\mathbf{P}}^{\top}} \right)$ are skew-symmetric matrices with $\mathbf{w}:={{\mathbf{M}}^{a}}\left( {{\mathbf{q}}^{a}} \right){{{\nabla }_{t}{\mathbf{{V}}}}^{a}\left( \mathbf{q},t \right)}+{{\mathbf{C}}^{a}}\left( {{\mathbf{q}}^{a}},{{{\mathbf{\dot{q}}}}^{a}} \right){{\mathbf{V}}^{a}}\left( \mathbf{q},t \right)$, $\mathbf{P}:={{\mathbf{M}}^{a}}\left( {{\mathbf{q}}^{a}} \right){{\mathbf{V}}^{a}}\left( \mathbf{q},t \right)$, and $\mathbf{p}:={{\mathbf{M}}^{a}}\left( {{\mathbf{q}}^{a}} \right){{{\mathbf{\dot{q}}}}^{a}}$. In addition, ${{\mathbf{S}}_{1}}\left( {{\mathbf{q}}^{a}},{{{\mathbf{\dot{q}}}}^{a}} \right):=s\left( e_E \right){{\mathbf{K}}_{1}}$ and ${{\mathbf{S}}_{2}}\left( {{\mathbf{q}}^{a}},{{{\mathbf{\dot{q}}}}^{a}} \right):=s\left( e_E \right){{\mathbf{K}}_{2}}$ are matrices constructed to regulate energy level and power flow. $\kappa$, ${{\mathbf{K}}_{1}}$, and ${{\mathbf{K}}_{2}}$ represent control parameters. Furthermore, the smooth saturation function illustrated in Fig. \ref{fig0} is introduced as follows:
\begin{equation}
s\left( e_E \right):=\left\{ \begin{aligned}
  & \,\,\,\,\,\,\,\,\,\,\,\,\,\,\,\,\,-{{\eta }_{\min }}\,\,\,\,\,\,\,\,\,\,\,\,\,\,\,\,\,\,\,\,\,\,\,\,\,\,\,\,\,\,\,\,\,\,\,\,\,e_E<-{{\delta }_{1}}-{{\delta }_{2}} \\ 
 & \phi \left( -{{\eta }_{\min }},\frac{{{e}_{E}}+{{\delta }_{2}}}{{{\delta }_{1}}} \right)\,\,\,\,\,{{-\delta }_{1}}-{{\delta }_{2}}\le e_E<{{-\delta }_{2}} \\ 
 & \,\,\,\,\,\,\,\,\,\,\,\,\,\,\,\,\,\,\,\,0\,\,\,\,\,\,\,\,\,\,\,\,\,\,\,\,\,\,\,\,\,\,\,\,\,\,\,\,\,\,\,\,\,\,\,\,\,\,\,\,\,\,\,\,\,{{-\delta }_{2}}\le e_E\le {{\delta }_{3}} \\ 
 & \phi \left( {{\eta }_{\max }},\frac{{{e}_{E}}-{{\delta }_{3}}}{{{\delta }_{4}}} \right)\,\,\,\,\,\,\,\,\,\,\,\,\,\,\,{{\delta }_{3}}<e_E\le{{\delta }_{3}}+{{\delta }_{4}} \\ 
 &\,\,\,\,\,\,\,\,\,\,\,\,\,\,\,\,\,{{\eta }_{\max }}\,\,\,\,\,\,\,\,\,\,\,\,\,\,\,\,\,\,\,\,\,\,\,\,\,\,\,\,\,\,\,\,\,\,\,\,\,\,\,\,\,\,e_E> {{\delta }_{3}}+{{\delta }_{4}} \\ 
\end{aligned} \right.
\label{eq9}
\end{equation}
in which $\phi \left( *,\circ  \right):=\frac{*}{2}\left( 1-\cos \left( \pi \circ  \right) \right)$ and $e_E:={{k}^{a}}\left( {{\mathbf{q}}^{a}},{{{\mathbf{\dot{q}}}}^{a}} \right)-k_{d}^{a}$. ${{k}^{a}}\left( {{\mathbf{q}}^{a}},{{{\mathbf{\dot{q}}}}^{a}} \right):=\frac{1}{2}{{\mathbf{\dot{q}}}^{\top}}\mathbf{M}\left( \mathbf{q} \right)\mathbf{\dot{q}}+\frac{1}{2}{{m}_{f}}\dot{q}_{f}^{2}$ and $k_{d}^{a}$ are the actual and desired kinetic energy level of the augmented system. ${{\eta }_{\min }}, {{\eta }_{\max }}>0$, ${{\delta }_{1}},\,{{\delta }_{2}}>0$ with ${{\delta }_{1}}+{{\delta }_{2}}<k_{d}^{a}$, and ${{\delta }_{3}},\,{{\delta }_{4}}>0$ denote the designed constant values. 

%Thus, by incorporating two compensated energy terms with the smooth saturation function, the robust time-varying SPVFC aims to drive and maintain the energy error $e_E$ within the designed operating interval $\left[ {{\delta }_{2}},{{\delta }_{3}} \right]$.

\begin{figure}[t]
    \centering
\includegraphics[width=0.65\linewidth]{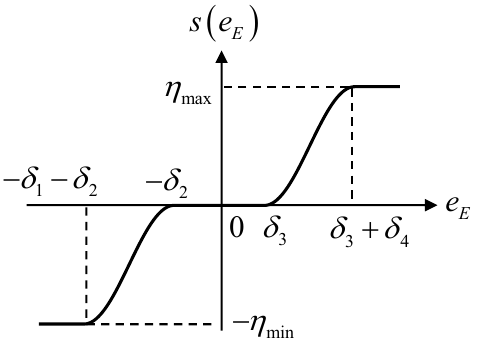}
    \caption{Illustration of  the smooth saturation function \eqref{eq9}.}
    \label{fig0}
\end{figure}

\begin{remark}
\label{remark0}
Although the use of only the first two terms in \eqref{eq8}, as in prior works \cite{c22,c23,c24}, guarantees energetic passivity of the closed-loop system with respect to the force–velocity input–output pair, this approach tends to be conservative and degraded task performance under energy-depleted conditions. Therefore, the present study proposes the SPVFC in \eqref{eq8} by integrating the two terms adopted from existing methods \cite{c22,c23,c24} with two additional control terms ${{\mathbf{S}}_{1}}\left( {{\mathbf{q}}^{a}},{{{\mathbf{\dot{q}}}}^{a}} \right){{{\mathbf{\dot{q}}}}^{a}}, \,{{\mathbf{S}}_{2}}\left( {{\mathbf{q}}^{a}},{{{\mathbf{\dot{q}}}}^{a}} \right){{\left\lfloor {{{\mathbf{\dot{q}}}}^{a}} \right\rceil}^{\frac{{{\zeta}_{1}}}{{{\zeta}_{2}}}}}$ to balance task performance and physical interaction safety.  Specifically, as illustrated in Fig. \ref{fig0}, when $e_E\ge\delta_3$, the system actively dissipates energy to mitigate excessive accumulation and prevent potentially hazardous interactions, while preserving passivity. Conversely, when $e_E\le-\delta_2$, energy is injected into the closed-loop system to avoid energy depletion and ensure the completion of the desired tasks. By incorporating two compensated energy terms, the robust time-varying SPVFC aims to drive and maintain the energy error $e_E$ within the designed operating interval $\left[ {{-\delta }_{2}},{{\delta }_{3}} \right]$. Furthermore, the inclusion of both ${{\mathbf{S}}_{1}}\left( {{\mathbf{q}}^{a}},{{{\mathbf{\dot{q}}}}^{a}} \right){{{\mathbf{\dot{q}}}}^{a}}$ and ${{\mathbf{S}}_{2}}\left( {{\mathbf{q}}^{a}},{{{\mathbf{\dot{q}}}}^{a}} \right){{\left\lfloor {{{\mathbf{\dot{q}}}}^{a}} \right\rceil}^{\frac{{{\zeta}_{1}}}{{{\zeta}_{2}}}}}$ in the controller \eqref{eq8} improves energy regulation performance, where the linear term enhances the compensation rate and the nonlinear fractional-power term ensures energy compensation in finite time. 
\end{remark}

By applying the proposed robust time-varying SPVFC \eqref{eq8} for the augmented mechanical system \eqref{eq6}, the following theorems and lemmas hold.

\subsection{Passivity and its conditions}
\label{3b}
\begin{theorem}
\label{thm2}
Given the augmented system \eqref{eq6}, the robust time-varying SPVFC law \eqref{eq8}, positive odd integers ${{\zeta}_{1}},\,{{\zeta}_{2}}$ $\left( {{\zeta}_{1}}<{{\zeta}_{2}} \right)$, and positive-definite parameter matrices ${{\mathbf{K}}_{1}}$, ${{\mathbf{K}}_{2}}$, the closed-loop mechanical system is passive with respect to the input–output pair $\pmb{\tau }_{ext}^{a}$ and ${{{\mathbf{\dot{q}}}}^{a}}$ as long as the energy constraint ${{k}^{a}}\left( {{\mathbf{q}}^{a}},{{{\mathbf{\dot{q}}}}^{a}} \right)\ge k_{d}^{a}-{{\delta }_{2}}$ is satisfied for all times.

\textit{Proof:} Differentiating the energy storage function ${{k}^{a}}\left( {{\mathbf{q}}^{a}},{{{\mathbf{\dot{q}}}}^{a}} \right)=\frac{1}{2}{{{\mathbf{\dot{q}}}}^{a\top}}{{\mathbf{M}}^{a}}\left( {{\mathbf{q}}^{a}} \right){{{\mathbf{\dot{q}}}}^{a}}$, combining with the augmented system \eqref{eq6}, and adopting the control law \eqref{eq8}, we obtain:
\begin{equation}
\label{eq10}
\begin{aligned}
  & \frac{d}{dt}{{k}^{a}}\left( {{\mathbf{q}}^{a}}\left( t \right),{{{\mathbf{\dot{q}}}}^{a}}\left( t \right) \right)\\ & =
  \frac{1}{2}{{{\mathbf{\dot{q}}}}^{a\top}}{{{\mathbf{\dot{M}}}}^{a}}\left( {{\mathbf{q}}^{a}} \right){{{\mathbf{\dot{q}}}}^{a}} +{{{\mathbf{\dot{q}}}}^{a\top}}\left( {{\pmb{\tau }}^{a}}+\pmb{\tau }_{ext}^{a}-{{\mathbf{C}}^{a}}\left( {{\mathbf{q}}^{a}},{{{\mathbf{\dot{q}}}}^{a}} \right){{{\mathbf{\dot{q}}}}^{a}} \right) \\ 
 & ={{{\mathbf{\dot{q}}}}^{a\top}}\pmb{\tau }_{ext}^{a}+\frac{1}{2}{{{\mathbf{\dot{q}}}}^{a\top}}\left({{{\mathbf{\dot{M}}}}^{a}}\left( {{\mathbf{q}}^{a}} \right)-2{{\mathbf{C}}^{a}}\left( {{\mathbf{q}}^{a}},{{{\mathbf{\dot{q}}}}^{a}} \right) \right){{{\mathbf{\dot{q}}}}^{a}} \\ 
 &\,\,\,\,\,\,+{{{\mathbf{\dot{q}}}}^{a\top}}{{\mathbf{R}}_{1}}\left( {{\mathbf{q}}^{a}},{{{\mathbf{\dot{q}}}}^{a}},t \right){{{\mathbf{\dot{q}}}}^{a}}+{{{\mathbf{\dot{q}}}}^{a\top}}{{\mathbf{R}}_{2}}\left( {{\mathbf{q}}^{a}},{{{\mathbf{\dot{q}}}}^{a}},t \right){{{\mathbf{\dot{q}}}}^{a}} \\ 
 &\,\,\,\,\,\,-{{{\mathbf{\dot{q}}}}^{a\top}}{{\mathbf{S}}_{1}}\left( {{\mathbf{q}}^{a}},{{{\mathbf{\dot{q}}}}^{a}} \right){{{\mathbf{\dot{q}}}}^{a}}-{{{\mathbf{\dot{q}}}}^{a\top}}{{\mathbf{S}}_{2}}\left( {{\mathbf{q}}^{a}},{{{\mathbf{\dot{q}}}}^{a}} \right){{\left\lfloor {{{\mathbf{\dot{q}}}}^{a}} \right\rceil}^{\frac{{{\zeta}_{1}}}{{{\zeta}_{2}}}}}. 
\end{aligned}
\end{equation}
Note that all ${{{\mathbf{\dot{M}}}}^{a}}\left( {{\mathbf{q}}^{a}} \right)-2{{\mathbf{C}}^{a}}\left( {{\mathbf{q}}^{a}},{{{\mathbf{\dot{q}}}}^{a}} \right)$, ${{\mathbf{R}}_{1}}\left( {{\mathbf{q}}^{a}},{{{\mathbf{\dot{q}}}}^{a}},t \right)$, and ${{\mathbf{R}}_{2}}\left( {{\mathbf{q}}^{a}},{{{\mathbf{\dot{q}}}}^{a}},t \right)$ are skew symmetric; therefore, \eqref{eq10} becomes:
\begin{equation}
\label{eq11}
\begin{aligned}
  & \frac{d}{dt}{{k}^{a}}\left( {{\mathbf{q}}^{a}}\left( t \right),{{{\mathbf{\dot{q}}}}^{a}}\left( t \right) \right) ={{{\mathbf{\dot{q}}}}^{a\top}}\pmb{\tau }_{ext}^{a}-{{{\mathbf{\dot{q}}}}^{a\top}}{{\mathbf{S}}_{1}}\left( {{\mathbf{q}}^{a}},{{{\mathbf{\dot{q}}}}^{a}} \right){{{\mathbf{\dot{q}}}}^{a}}\\ & \,\,\,\,\,\,\,\,\,\,\,\,\,\,\,\,\,\,\,\,\,\,\,\,\,\,\,\,\,\,\,\,\,\,\,\,\,\,\,\,\,\,\,\,\,\,\,\,\,\,\,\,\,\,\,\,-{{{\mathbf{\dot{q}}}}^{a\top}}{{\mathbf{S}}_{2}}\left( {{\mathbf{q}}^{a}},{{{\mathbf{\dot{q}}}}^{a}} \right){{\left\lfloor {{{\mathbf{\dot{q}}}}^{a}} \right\rceil}^{\frac{{{\zeta}_{1}}}{{{\zeta}_{2}}}}}. 
\end{aligned}
\end{equation}
By calculating the integration of both sides of \eqref{eq11}, we have:
\begin{equation}
\begin{aligned}
  &\int\limits_{0}^{t}{{{{\mathbf{\dot{q}}}}^{a\top}}\left( \varepsilon\right)\pmb{\tau }_{ext}^{a}\left(\varepsilon  \right)d\varepsilon }-{{D}_{1}}-{{D}_{2}}={{k}^{a}}\left( {{\mathbf{q}}^{a}}\left( t \right),{{{\mathbf{\dot{q}}}}^{a}}\left( t \right) \right) \\ 
 &\,\,\,\,\,\,\,\,\,\,\,\,\,\,\,\,\,\,\,\,\,\,\,\,\,\,\,\,\,\,\,\,\,\,\,\,\,\,\,\,\,\,\,\,\,\,\,\,\,\,\,\,\,\,\,\,\,\,\,\,\,\,\,\,\,\,\,\,\,\,\,\,\,\,\,\,\,\,\,\,\,\,\,\,\,\,\,\,\, -{{k}^{a}}\left( {{\mathbf{q}}^{a}}\left( 0 \right),{{{\mathbf{\dot{q}}}}^{a}}\left( 0 \right) \right).
\end{aligned}
\label{eq12}
\end{equation}
If ${{\zeta}_{1}},\,{{\zeta}_{2}}$ $\left( {{\zeta}_{1}}<{{\zeta}_{2}} \right)$ are  positive odd integers and ${{\mathbf{K}}_{1}},\,{{\mathbf{K}}_{2}}$ are positive define matrices,
${{D}_{1}}=\int\limits_{0}^{t}{s\left( {{k}^{a}}\left( {{\mathbf{q}}^{a}}\left( \varepsilon\right),{{{\mathbf{\dot{q}}}}^{a}}\left(\varepsilon\right) \right)-k_{d}^{a} \right){{{\mathbf{\dot{q}}}}^{a\top}}\left(\varepsilon \right){{\mathbf{K}}_{1}}{{{\mathbf{\dot{q}}}}^{a}}\left(\varepsilon \right)d\varepsilon}$ and
${{D}_{2}}=\int\limits_{0}^{t}{s\left( {{k}^{a}}\left( {{\mathbf{q}}^{a}}\left( \varepsilon\right),{{{\mathbf{\dot{q}}}}^{a}}\left(\varepsilon  \right) \right)-k_{d}^{a} \right){{{\mathbf{\dot{q}}}}^{a\top}}\left(\varepsilon  \right){{\mathbf{K}}_{2}}{{\left\lfloor {{{\mathbf{\dot{q}}}}^{a}}\left(\varepsilon  \right) \right\rceil}^{\frac{{{\zeta}_{1}}}{{{\zeta}_{2}}}}}d\varepsilon }$ are  non-negative functions when ${{k}^{a}}\left( {{\mathbf{q}}^{a}},{{{\mathbf{\dot{q}}}}^{a}} \right)\ge k_{d}^{a}-{{\delta }_{2}}$. Indeed, ${{k}^{a}}\left( {{\mathbf{q}}^{a}}\left( t \right),{{{\mathbf{\dot{q}}}}^{a}}\left( t \right) \right)$ is consistently positive; therefore, \eqref{eq12} is presented by:
\begin{equation}
-{{k}^{a}}\left( {{\mathbf{q}}^{a}}\left( 0 \right),{{{\mathbf{\dot{q}}}}^{a}}\left( 0 \right) \right)\le \int\limits_{0}^{t}{{{{\mathbf{\dot{q}}}}^{a\top}}\left(\varepsilon  \right)\pmb{\tau }_{ext}^{a}\left(\varepsilon  \right)d\varepsilon }-{{D}_{1}}-{{D}_{2}}
\label{eq13}
\end{equation}
Following Definition \ref{def1}, the closed-loop mechanical system is passive with the pair of $\left(\pmb{\tau }_{ext}^{a},{{{\mathbf{\dot{q}}}}^{a}}\right)$ and the storage function ${{k}^{a}}\left( {{\mathbf{q}}^{a}},{{{\mathbf{\dot{q}}}}^{a}} \right)$ if and only if ${{k}^{a}}\left( {{\mathbf{q}}^{a}},{{{\mathbf{\dot{q}}}}^{a}} \right)\ge k_{d}^{a}-{{\delta }_{2}}$. This completes the proof. \hfill $\blacksquare$
\end{theorem}

\begin{remark}
\label{remark1}
According to the definition of the smooth continuous saturation function in \eqref{eq9} and Fig. \ref{fig0}, ${{D}_{1}},\,{{D}_{2}}>0$ if ${{k}^{a}}\left( {{\mathbf{q}}^{a}},{{{\mathbf{\dot{q}}}}^{a}} \right)> k_{d}^{a}+{{\delta }_{3}}$; ${{D}_{1}},\,{{D}_{2}}<0$ if ${{k}^{a}}\left( {{\mathbf{q}}^{a}},{{{\mathbf{\dot{q}}}}^{a}} \right)< k_{d}^{a}-{{\delta }_{2}}$; and otherwise ${{D}_{1}},\,{{D}_{2}}=0$. In this manner, the closed-loop mechanical system exhibits passive behavior when ${{k}^{a}}\left( {{\mathbf{q}}^{a}},{{{\mathbf{\dot{q}}}}^{a}} \right)\ge k_{d}^{a}-{{\delta }_{2}}$, as established in Theorem \ref{thm2}, whereas non-passive behavior is permitted when ${{k}^{a}}\left( {{\mathbf{q}}^{a}},{{{\mathbf{\dot{q}}}}^{a}} \right)< k_{d}^{a}-{{\delta }_{2}}$ to inject energy and maintain task performance. Thus, the continuously smooth saturation function \eqref{eq9} in ${{\mathbf{S}}_{1}}\left( {{\mathbf{q}}^{a}},{{{\mathbf{\dot{q}}}}^{a}} \right){{{\mathbf{\dot{q}}}}^{a}}=s\left( e_E \right){{\mathbf{K}}_{1}}{{{\mathbf{\dot{q}}}}^{a}}$ and ${{\mathbf{S}}_{2}}\left( {{\mathbf{q}}^{a}},{{{\mathbf{\dot{q}}}}^{a}} \right){{\left\lfloor {{{\mathbf{\dot{q}}}}^{a}} \right\rceil}^{\frac{{{\zeta}_{1}}}{{{\zeta}_{2}}}}}=s\left( e_E \right){{\mathbf{K}}_{2}}{{\left\lfloor {{{\mathbf{\dot{q}}}}^{a}} \right\rceil}^{\frac{{{\zeta}_{1}}}{{{\zeta}_{2}}}}}$ of the controller \eqref{eq8} alleviates conservatism, enabling a smooth transition between passive and non-passive behaviors. Notice that the passive and non-passive domains can be adjusted and configured by selecting the appropriate parameters $k_{d}^{a}$ and ${{\delta }_{2}}$. 

%As a result, following Theorem \ref{thm2}, the amount of energy is dissipated by the positive definite terms ${{D}_{1}}$ and ${{D}_{2}}$ when the energy level exceeds the given threshold ${{k}^{a}}\left( {{\mathbf{q}}^{a}},{{{\mathbf{\dot{q}}}}^{a}} \right)> k_{d}^{a}+{{\delta }_{3}}$. Meanwhile, the closed-loop mechanical system is supplemented by the energy from ${{D}_{1}}$ and ${{D}_{2}}$ in the case of ${{k}^{a}}\left( {{\mathbf{q}}^{a}},{{{\mathbf{\dot{q}}}}^{a}} \right)< k_{d}^{a}+{{\delta }_{2}}$ to perform the desired task. Thus, the SPVFC control law in \eqref{eq8} relaxes the conservative nature by using smooth continuous transition between passive and non-passive behaviors in a controlled manner. Notice that the passive and non-passive domains can be adjusted and configured by selecting the appropriate parameters $k_{d}^{a}$, ${{\delta }_{2}}$, and ${{\delta }_{3}}$.
\end{remark}

From Theorem \ref{thm2} and Remarks \ref{remark0}, \ref{remark1}, the system's energy tends toward the region $\left[ k_{d}^{a}-{{\delta }_{2}},k_{d}^{a}+{{\delta }_{3}} \right]$ due to two terms $D_1$ and $D_2$. Theorem \ref{thm3} in the next subsection will investigate this trend under external forces from physical environments to present tightness conditions and energy-compensation rate.

\subsection{Energy Boundedness Analysis}
\begin{theorem}
\label{thm3}
Considering the augmented mechanical system with external disturbance \eqref{eq6} under the control action \eqref{eq8}, the kinetic energy level of the closed-loop system converges to a region $B_1$ defined by:
\begin{equation}
\label{eq14}
{{B}_{1}}:=\left\{ {{k}^{a}}\left( {{\mathbf{q}}^{a}},{{{\mathbf{\dot{q}}}}^{a}} \right)\in {{\mathbb{R}}_{+}}\left| \begin{aligned}
  & {{k}^{a}}\left( {{\mathbf{q}}^{a}},{{{\mathbf{\dot{q}}}}^{a}} \right)\ge k_{d}^{a}-{{\delta }_{2}} \\ 
 & {{k}^{a}}\left( {{\mathbf{q}}^{a}},{{{\mathbf{\dot{q}}}}^{a}} \right)\le k_{d}^{a}+{{\delta }_{3}} \\ 
\end{aligned} \right. \right\},
\end{equation}
after an explicit finite settling time ${{T}_{E}}\le \max \left\{ {{T}_{1}},{{T}_{2}} \right\}$. Here,
\begin{equation}
\label{eq25}
{{T}_{1}}\le\frac{2{{\zeta }_{2}}}{{{\gamma }_{1}}\left( {{\zeta }_{2}}-{{\zeta }_{1}} \right)}\ln \left( \frac{{{\gamma }_{1}}{{\left( {{k}^{a}}\left( {{\mathbf{q}}^{a}}\left( 0 \right),{{{\mathbf{\dot{q}}}}^{a}}\left( 0 \right) \right) \right)}^{\frac{{{\zeta }_{2}}-{{\zeta }_{1}}}{2{{\zeta }_{2}}}}}+{{\gamma }_{4}}}{{{\gamma }_{1}}{{\left( k_{d}^{a}+\delta_3 \right)}^{\frac{{{\zeta }_{2}}-{{\zeta }_{1}}}{2{{\zeta }_{2}}}}}+{{\gamma }_{4}}} \right),
\end{equation}
and 
\begin{equation}
\label{eq27}
{{T}_{2}}\le \frac{2{{\zeta }_{2}}}{{{\gamma }_{1}}\left( {{\zeta }_{2}}-{{\zeta }_{1}} \right)}\ln \left( \frac{{{\gamma }_{1}}{{\left( k_{d}^{a}-\delta_2 \right)}^{\frac{{{\zeta }_{2}}-{{\zeta }_{1}}}{2{{\zeta }_{2}}}}}+{{\gamma }_{4}}}{{{\gamma }_{1}}{{\left( {{k}^{a}}\left( {{\mathbf{q}}^{a}}\left( 0 \right),{{{\mathbf{\dot{q}}}}^{a}}\left( 0 \right) \right) \right)}^{\frac{{{\zeta }_{2}}-{{\zeta }_{1}}}{2{{\zeta }_{2}}}}}+{{\gamma }_{4}}} \right),
\end{equation}
denote the convergence times for ${{k}^{a}}\left( {{\mathbf{q}}^{a}},{{{\mathbf{\dot{q}}}}^{a}} \right)>k_{d}^{a}+{{\delta }_{3}}$ and ${{k}^{a}}\left( {{\mathbf{q}}^{a}},{{{\mathbf{\dot{q}}}}^{a}} \right)<k_{d}^{a}-{{\delta }_{2}}$, respectively, where the auxiliary control parameters $\gamma_1$ and $\gamma_4$ are proportional to $\mathbf{K}_1$ and $\mathbf{K}_2$ in \eqref{eq8}. 

\textit{Proof:} Utilizing the continuously differentiable positive function ${{V}_{1}}\left( e_{E}^{\delta } \right):=\frac{1}{2}\left(e_{E}^{\delta }\right)^2=\frac{1}{2}{{\left( {{k}^{a}}\left( {{\mathbf{q}}^{a}},{{{\mathbf{\dot{q}}}}^{a}} \right)-\left(k_{d}^{a}+\delta\right) \right)}^{2}}$ with $\delta \in \left\{ -{{\delta }_{2}},{{\delta }_{3}} \right\}$, its derivative is computed and combined with the augmented mechanical system \eqref{eq6} as follows: 
\begin{equation}
\label{eq15}
\begin{aligned}
  & {{{\dot{V}}}_{1}}\left( e_{E}^{\delta } \right)=e_{E}^{\delta }\left( {{{\mathbf{\dot{q}}}}^{a\top}}{{\mathbf{M}}^{a}}\left( {{\mathbf{q}}^{a}} \right){{{\mathbf{\ddot{q}}}}^{a}}+\frac{1}{2}{{{\mathbf{\dot{q}}}}^{a\top}}{{{\mathbf{\dot{M}}}}^{a}}\left( {{\mathbf{q}}^{a}} \right){{{\mathbf{\dot{q}}}}^{a}} \right) \\ 
 & \,\,\,\,\,\,\,\,\,\,\,\,\,\,\,\,\,\,\,\,=e_{E}^{\delta }\left( \begin{aligned}
  & \frac{1}{2}{{{\mathbf{\dot{q}}}}^{a\top}}{{{\mathbf{\dot{M}}}}^{a}}\left( {{\mathbf{q}}^{a}} \right){{{\mathbf{\dot{q}}}}^{a}} \\ 
 & +{{{\mathbf{\dot{q}}}}^{a\top}}\left( {{\pmb{\tau }}^{a}}+\pmb{\tau }_{ext}^{a}-{{\mathbf{C}}^{a}}\left( {{\mathbf{q}}^{a}},{{{\mathbf{\dot{q}}}}^{a}} \right){{{\mathbf{\dot{q}}}}^{a}} \right) \\ 
\end{aligned} \right). \\ 
\end{aligned}
\end{equation}
Applying the robust time-varying SPVFC \eqref{eq8}, yields:
\begin{equation}
\label{eq16}
{{\dot{V}}_{1}}=e_{E}^{\delta }\left( \begin{aligned}
  & \frac{1}{2}{{{\mathbf{\dot{q}}}}^{a\top}}\left( {{{\mathbf{\dot{M}}}}^{a}}\left( {{\mathbf{q}}^{a}} \right)-2{{\mathbf{C}}^{a}}\left( {{\mathbf{q}}^{a}},{{{\mathbf{\dot{q}}}}^{a}} \right) \right){{{\mathbf{\dot{q}}}}^{a}}+{{{\mathbf{\dot{q}}}}^{a\top}}\pmb{\tau }_{ext}^{a} \\ 
 & +{{{\mathbf{\dot{q}}}}^{a\top}}{{\mathbf{R}}_{1}}\left( {{\mathbf{q}}^{a}},\,{{{\mathbf{\dot{q}}}}^{a}},t \right){{{\mathbf{\dot{q}}}}^{a}}+{{{\mathbf{\dot{q}}}}^{a\top}}{{\mathbf{R}}_{2}}\left( {{\mathbf{q}}^{a}},\,{{{\mathbf{\dot{q}}}}^{a}},t \right){{{\mathbf{\dot{q}}}}^{a}} \\ 
 & -{{{\mathbf{\dot{q}}}}^{a\top}}{{\mathbf{S}}_{1}}\left( {{\mathbf{q}}^{a}},\,{{{\mathbf{\dot{q}}}}^{a}} \right){{{\mathbf{\dot{q}}}}^{a}}-{{{\mathbf{\dot{q}}}}^{a\top}}{{\mathbf{S}}_{2}}\left( {{\mathbf{q}}^{a}},\,{{{\mathbf{\dot{q}}}}^{a}} \right){{\left\lfloor {{{\mathbf{\dot{q}}}}^{a}} \right\rceil}^{\frac{{{\zeta}_{1}}}{{{\zeta}_{2}}}}} \\ 
\end{aligned} \right)
\end{equation}
Due to the property of skew-symmetric matrices, $e_{E}^{\delta }s\left( e_E \right)\ge 0$, and $e_E\notin \left[ -{{\delta }_{2}},{{\delta }_{3}} \right]$, \eqref{eq16} can be rewritten as:
\begin{equation}
\label{eq17}
\begin{aligned}
  & {{{\dot{V}}}_{1}}=e_{E}^{\delta }{{{\mathbf{\dot{q}}}}^{a\top}}\pmb{\tau }_{ext}^{a}-e_{E}^{\delta }s\left( e_E \right){{{\mathbf{\dot{q}}}}^{a\top}}{{\mathbf{K}}_{1}}{{{\mathbf{\dot{q}}}}^{a}} \\ 
 & \,\,\,\,\,\,\,\,\,\,\,\,-e_{E}^{\delta }s\left( e_E \right){{{\mathbf{\dot{q}}}}^{a\top}}{{\mathbf{K}}_{2}}{{\left\lfloor {{{\mathbf{\dot{q}}}}^{a}} \right\rceil}^{\frac{{{\zeta}_{1}}}{{{\zeta}_{2}}}}} \\ 
 & \,\,\,\,\le \left| e_{E}^{\delta } \right|\left\| {{{\mathbf{\dot{q}}}}^{a\top}} \right\|{{\left\| \pmb{\tau }_{ext}^{a} \right\|}}-{s_{\min}}\left| e_{E}^{\delta } \right|{{{\mathbf{\dot{q}}}}^{a\top}}{{\mathbf{K}}_{1}}{{{\mathbf{\dot{q}}}}^{a}} \\ 
 & \,\,\,\,\,\,\,\,\,\,\,\,-{s_{\min}}\left|  e_{E}^{\delta } \right|{{{\mathbf{\dot{q}}}}^{a\top}}{{\mathbf{K}}_{2}}{{\left\lfloor {{{\mathbf{\dot{q}}}}^{a}} \right\rceil}^{\frac{{{\zeta}_{1}}}{{{\zeta}_{2}}}}}, \\ 
\end{aligned}
\end{equation}
in which ${s_{\min}}=\min \left( \left| s\left( {{e}_{E}} \right) \right| \right)$ in the case of $e_E\notin \left[- {{\delta }_{2}},{{\delta }_{3}} \right]$. Using the inequality in Lemma \ref{lm1} with $\mathbf{K_1}$, $\mathbf{K_2}$, and ${{\mathbf{M}}^{a}}\left( {{\mathbf{q}}^{a}} \right)$ established by trigonometric functions or constants, we obtain:
\begin{equation}
\label{eq18}
\left\{ \begin{aligned}
  & \left\| {{{\mathbf{\dot{q}}}}^{a\top}} \right\|{{\left\| \pmb{\tau }_{ext}^{a} \right\|}\le \frac{{{\mu_1\left\| {{{\mathbf{\dot{q}}}}^{a}} \right\|}^{2}}}{2}+\frac{\left\| \pmb{\tau }_{ext}^{a} \right\|^{2}}{2\mu_1}} \\ 
 & {{k}^{a}}\left( {{\mathbf{q}}^{a}},{{{\mathbf{\dot{q}}}}^{a}} \right)\ge \frac{1}{2}{{l}_{\min }}\left( {{\mathbf{M}}^{a}}\left( {{\mathbf{q}}^{a}} \right) \right){{\left\| {{{\mathbf{\dot{q}}}}^{a}} \right\|}^{2}} \\ 
 & {{k}^{a}}\left( {{\mathbf{q}}^{a}},{{{\mathbf{\dot{q}}}}^{a}} \right)\le \frac{1}{2}{{l}_{\max }}\left( {{\mathbf{M}}^{a}}\left( {{\mathbf{q}}^{a}} \right) \right){{\left\| {{{\mathbf{\dot{q}}}}^{a}} \right\|}^{2}} \\ 
 & {{l}_{\min }}\left( {{\mathbf{K}}_{1}} \right){{\left\| {{{\mathbf{\dot{q}}}}^{a}} \right\|}^{2}}\le {{{\mathbf{\dot{q}}}}^{a\top}}{{\mathbf{K}}_{1}}{{{\mathbf{\dot{q}}}}^{a}}\le {{l}_{\max }}\left( {{\mathbf{K}}_{1}} \right){{\left\| {{{\mathbf{\dot{q}}}}^{a}} \right\|}^{2}} \\ 
 & {{{\mathbf{\dot{q}}}}^{a\top}}{{\mathbf{K}}_{2}}{{\left\lfloor {{{\mathbf{\dot{q}}}}^{a}} \right\rceil}^{\frac{{{\zeta}_{1}}}{{{\zeta}_{2}}}}}\ge {{l}_{\min }}\left( {{\mathbf{K}}_{2}} \right){{\left( {{\left\| {{{\mathbf{\dot{q}}}}^{a}} \right\|}_{\frac{{{\zeta }_{1}}+{{\zeta }_{2}}}{{{\zeta }_{2}}}}} \right)}^{\frac{{{\zeta }_{1}}+{{\zeta }_{2}}}{{{\zeta }_{2}}}}} \\ 
 & {{{\mathbf{\dot{q}}}}^{a\top}}{{\mathbf{K}}_{2}}{{\left\lfloor {{{\mathbf{\dot{q}}}}^{a}} \right\rceil}^{\frac{{{\zeta}_{1}}}{{{\zeta}_{2}}}}}\le {{l}_{\max }}\left( {{\mathbf{K}}_{2}} \right){{\left( {{\left\| {{{\mathbf{\dot{q}}}}^{a}} \right\|}_{\frac{{{\zeta }_{1}}+{{\zeta }_{2}}}{{{\zeta }_{2}}}}} \right)}^{\frac{{{\zeta }_{1}}+{{\zeta }_{2}}}{{{\zeta }_{2}}}}} \\ 
\end{aligned} \right.
\end{equation}
where $\mu_1$ is an arbitrary positive constant from the inequality in Lemma \ref{lm1}. ${{l}_{\min }}\left( \bullet  \right)$ and ${{l}_{\max }}\left( \bullet  \right)$ are the smallest and largest eigenvalues of the matrix $\left( \bullet  \right)$. Substituting \eqref{eq18} into \eqref{eq17} and adopting Lemma \ref{lm2}, yields:
\begin{equation}
\label{eq19}
\begin{aligned}
  & {{{\dot{V}}}_{1}}\,\le -\left( s_{\min}{{l}_{\min }}\left( {{\mathbf{K}}_{1}} \right)-\frac{\mu_1}{2} \right)\sqrt{2{{V}_{1}}}\left\| {{{\mathbf{\dot{q}}}}^{a}} \right\|^{2} \\ 
 & \,\,\,\,\,\,\,\,\,\,\,\,\,-s_{\min}{l_{\min}}\left( {{\mathbf{K}}_{2}} \right)\sqrt{2{{V}_{1}}}{{\left( \left\| {{{\mathbf{\dot{q}}}}^{a}} \right\|^{2} \right)}^{\frac{{{\zeta }_{1}}+{{\zeta }_{2}}}{2{{\zeta }_{2}}}}} +\sqrt{2{{V}_{1}}}\frac{\left\| \pmb{\tau }_{ext}^{a} \right\|^{2}}{2\mu_1} \\ 
 & \,\,\,\,\,\,\le -{{\gamma }_{1}}\sqrt{2{{V}_{1}}}{{k}^{a}}\left( {{\mathbf{q}}^{a}},{{{\mathbf{\dot{q}}}}^{a}} \right)-{{\gamma }_{2}}{{\gamma }_{3}}\sqrt{2{{V}_{1}}}{{k}^{a}}\left( {{\mathbf{q}}^{a}},{{{\mathbf{\dot{q}}}}^{a}} \right) \\ 
 &\,\,\,\,\,\,\,\,\,\,\,\,\,-{{\gamma }_{4}}\sqrt{2{{V}_{1}}}{{\left( {{k}^{a}}\left( {{\mathbf{q}}^{a}},{{{\mathbf{\dot{q}}}}^{a}} \right) \right)}^{\frac{{{\zeta }_{1}}+{{\zeta }_{2}}}{2{{\zeta }_{2}}}}}+\sqrt{2{{V}_{1}}}\frac{\left\| \pmb{\tau }_{ext}^{a} \right\|^{2}}{2\mu_1}, \\ 
\end{aligned}
\end{equation}
with $\gamma_1$, $\gamma_2$, $\gamma_3$, and $\gamma_4$ are defined based on control parameters in the following formulation.
\begin{equation}
\label{eq20}
\left\{ \begin{aligned}
  & {{\gamma }_{1}}=\frac{\vartheta_1 \left( 2s_{\min}{{l}_{\min }}\left( {{\mathbf{K}}_{1}} \right)-\mu_1 \right)}{{{l}_{\max }}\left( {{\mathbf{M}}^{a}}\left( {{\mathbf{q}}^{a}} \right) \right)} \\ 
 & {{\gamma }_{2}}={{\gamma }_{3}}=\sqrt{\frac{\left( 1-\vartheta_1  \right)\left( 2s_{\min}{{l}_{\min }}\left( {{\mathbf{K}}_{1}} \right)-\mu_1 \right)}{{{l}_{\max }}\left( {{\mathbf{M}}^{a}}\left( {{\mathbf{q}}^{a}} \right) \right)}} \\ 
 & {{\gamma }_{4}}=\frac{{{2}^{\frac{{{\zeta }_{1}}+{{\zeta }_{2}}}{2{{\zeta }_{2}}}}}s_{\min}{{l}_{\min }}\left( {{\mathbf{K}}_{2}} \right)}{{{\left( {{l}_{\max }}\left( {{\mathbf{M}}^{a}}\left( {{\mathbf{q}}^{a}} \right) \right) \right)}^{\frac{{{\zeta }_{1}}+{{\zeta }_{2}}}{2{{\zeta }_{2}}}}}} \\ 
\end{aligned} \right.
\end{equation}
in which $\vartheta_1 \in \left( 0,1 \right)$. Meanwhile, by designing $\gamma_2$ such that ${{\gamma }_{2}}{{k}^{a}}\left( {{\mathbf{q}}^{a}},{{{\mathbf{\dot{q}}}}^{a}} \right)\ge \gamma$, where $\gamma>0$, and using the inequality of norm $\left\| \pmb{\tau }_{ext}^{a} \right\|\le \sqrt{n}{{\left\| \pmb{\tau }_{ext}^{a} \right\|}_{\infty }},\forall \pmb{\tau }_{ext}^{a}\in {{\mathbb{R}}^{n+1}}$, \eqref{eq19} can be rewritten as:
\begin{equation}
\label{eq21}
\begin{aligned}
  & {{{\dot{V}}}_{1}}\,\le -{{\gamma }_{1}}\sqrt{2{{V}_{1}}}{{k}^{a}}\left( {{\mathbf{q}}^{a}},{{{\mathbf{\dot{q}}}}^{a}} \right)-{{\gamma }_{4}}\sqrt{2{{V}_{1}}}{{\left( {{k}^{a}}\left( {{\mathbf{q}}^{a}},{{{\mathbf{\dot{q}}}}^{a}} \right) \right)}^{\frac{{{\zeta }_{1}}+{{\zeta }_{2}}}{2{{\zeta }_{2}}}}} \\ 
 & \,\,\,\,\,\,\,\,\,\,\,\,\,\,\,-{{\gamma }_{3}}\gamma\sqrt{2{{V}_{1}}}+\sqrt{2{{V}_{1}}}\frac{{n}\left\| \pmb{\tau }_{ext}^{a} \right\|_{\infty }^{2}}{2\mu_1}. 
\end{aligned}
\end{equation}
By designing the parameter $\gamma_3$ such that $\gamma_3\gamma\ge\frac{{n}\left\| \pmb{\tau }_{ext}^{a} \right\|_{\infty }^{2}}{2\mu_1}$, we obtain: 
\begin{equation}
\label{eq22}
{{\dot{V}}_{1}}\,\le -{{\gamma }_{1}}\sqrt{2{{V}_{1}}}{{k}^{a}}\left( {{\mathbf{q}}^{a}},{{{\mathbf{\dot{q}}}}^{a}} \right)-{{\gamma }_{4}}\sqrt{2{{V}_{1}}}{{\left( {{k}^{a}}\left( {{\mathbf{q}}^{a}},{{{\mathbf{\dot{q}}}}^{a}} \right) \right)}^{\frac{{{\zeta }_{1}}+{{\zeta }_{2}}}{2{{\zeta }_{2}}}}}.
\end{equation}

For ${{k}^{a}}\left( {{\mathbf{q}}^{a}},{{{\mathbf{\dot{q}}}}^{a}} \right)>k_{d}^{a}+{{\delta }_{3}}$, $\dot{V}_1$ in \eqref{eq22} is negative definite with $\gamma_1,\gamma_4>0$. Then, the system energy ${{k}^{a}}\left( {{\mathbf{q}}^{a}},{{{\mathbf{\dot{q}}}}^{a}} \right)$ converges to $k_{d}^{a}+{{\delta }_{3}}$ in finite time. To calculate the convergence time, \eqref{eq22} is rewritten by using the definition of ${{V}_{1}}\left( e_{E}^{\delta } \right)$.

\begin{equation}
\label{eq23}
\begin{aligned}
  & {{{\dot{V}}}_{1}}\,\le -{{\gamma }_{1}}\sqrt{2{{V}_{1}}}\left( \sqrt{2{{V}_{1}}}+k_{d}^{a}+\delta_3 \right) \\ 
 & \,\,\,\,\,\,\,\,\,\,\,\,\,\,-{{\gamma }_{4}}\sqrt{2{{V}_{1}}}{{\left( \sqrt{2{{V}_{1}}}+k_{d}^{a}+\delta_3 \right)}^{\frac{{{\zeta }_{1}}+{{\zeta }_{2}}}{2{{\zeta }_{2}}}}} \\ 
\end{aligned}
\end{equation}
After the variable transformation from $d{{V}_{1}}$ to $d\left( {{\gamma }_{1}}{{\left( \sqrt{2{{V}_{1}}}+k_{d}^{a}+\delta_3 \right)}^{1-\frac{{{\zeta }_{1}}+{{\zeta }_{2}}}{2{{\zeta }_{2}}}}}+{{\gamma }_{4}} \right)$, it is straightforward to rewrite \eqref{eq23} in the following formulation.
\begin{equation}
\label{eq24}
dt\le -\frac{2{{\zeta }_{2}}}{{{\gamma }_{1}}\left( {{\zeta }_{2}}-{{\zeta }_{1}} \right)}\frac{d\left( {{\gamma }_{1}}{{\left( \sqrt{2{{V}_{1}}}+k_{d}^{a}+\delta_3 \right)}^{1-\frac{{{\zeta }_{1}}+{{\zeta }_{2}}}{2{{\zeta }_{2}}}}}+{{\gamma }_{4}} \right)}{{{\gamma }_{1}}{{\left( \sqrt{2{{V}_{1}}}+k_{d}^{a}+\delta_3 \right)}^{1-\frac{{{\zeta }_{1}}+{{\zeta }_{2}}}{2{{\zeta }_{2}}}}}+{{\gamma }_{4}}}.
\end{equation}
Computing the integral of \eqref{eq24} within the interval $\left[ 0,{{T}_{1}} \right]$, the convergence time $T_1$ is obtained as given in \eqref{eq25}.\\
% \begin{equation}
% \label{eq25}
% {{T}_{1}}\le\frac{2{{\zeta }_{2}}}{{{\gamma }_{1}}\left( {{\zeta }_{2}}-{{\zeta }_{1}} \right)}\ln \left( \frac{{{\gamma }_{1}}{{\left( {{k}^{a}}\left( {{\mathbf{q}}^{a}}\left( 0 \right),{{{\mathbf{\dot{q}}}}^{a}}\left( 0 \right) \right) \right)}^{\frac{{{\zeta }_{2}}-{{\zeta }_{1}}}{2{{\zeta }_{2}}}}}+{{\gamma }_{4}}}{{{\gamma }_{1}}{{\left( k_{d}^{a}+\delta_3+B_{1}^{u} \right)}^{\frac{{{\zeta }_{2}}-{{\zeta }_{1}}}{2{{\zeta }_{2}}}}}+{{\gamma }_{4}}} \right).
% \end{equation}

For ${{k}^{a}}\left( {{\mathbf{q}}^{a}},{{{\mathbf{\dot{q}}}}^{a}} \right)<k_{d}^{a}-{{\delta }_{2}}$, by multiplying both sides of \eqref{eq11} by $\left|e_{E}^{\delta }\right|$ and using \eqref{eq17}-\eqref{eq22}, we obtain $\left|e_{E}^{\delta }\right|\frac{d}{dt}{{k}^{a}}\left( {{\mathbf{q}}^{a}}\left( t \right),{{{\mathbf{\dot{q}}}}^{a}}\left( t \right) \right)\ge-\dot{V}_1\ge0$. Due to $\left|e_{E}^{\delta }\right|>0$ when ${{k}^{a}}\left( {{\mathbf{q}}^{a}},{{{\mathbf{\dot{q}}}}^{a}} \right)<k_{d}^{a}-{{\delta }_{2}}$, it follows that $\frac{d}{dt}{{k}^{a}}\left( {{\mathbf{q}}^{a}}\left( t \right),{{{\mathbf{\dot{q}}}}^{a}}\left( t \right) \right)\ge0$. Then, $\dot{V}_1$ in \eqref{eq22} is negative definite if $\gamma_1,\gamma_4,{{k}^{a}}\left( {{\mathbf{q}}^{a}}\left( 0 \right),{{{\mathbf{\dot{q}}}}^{a}}\left( 0 \right) \right) >0$. The system energy ${{k}^{a}}\left( {{\mathbf{q}}^{a}},{{{\mathbf{\dot{q}}}}^{a}} \right)$ converges to $k_{d}^{a}-{{\delta }_{2}}$ in finite time. Utilizing ${{V}_{1}}\left( e_{E}^{\delta } \right)$, \eqref{eq22} becomes
\begin{equation}
\label{eq26}
\begin{aligned}
  & {{{\dot{V}}}_{1}}\,\le -{{\gamma }_{1}}\sqrt{2{{V}_{1}}}\left( k_{d}^{a}-\delta_2-\sqrt{2{{V}_{1}}} \right) \\ 
 & \,\,\,\,\,\,\,\,\,\,\,\,\,\,-{{\gamma }_{4}}\sqrt{2{{V}_{1}}}{{\left( k_{d}^{a}-\delta_2-\sqrt{2{{V}_{1}}} \right)}^{\frac{{{\zeta }_{1}}+{{\zeta }_{2}}}{2{{\zeta }_{2}}}}}. \\ 
\end{aligned}
\end{equation}
Applying the same approach in \eqref{eq24} and integrating over the interval $[0, T_2]$ yields the convergence time $T_2$ as expressed in \eqref{eq27}.

It should be noted that a dead-zone region is defined by ${{k}^{a}}\left( {{\mathbf{q}}^{a}},{{{\mathbf{\dot{q}}}}^{a}} \right)\in \left[ k_{d}^{a}-{{\delta }_{2}},k_{d}^{a}+{{\delta }_{3}} \right]$ in \eqref{eq9}, within which the energy compensation terms ${{\mathbf{S}}_{1}}\left( {{\mathbf{q}}^{a}},{{{\mathbf{\dot{q}}}}^{a}} \right){{{\mathbf{\dot{q}}}}^{a}}$ and ${{\mathbf{S}}_{2}}\left( {{\mathbf{q}}^{a}},{{{\mathbf{\dot{q}}}}^{a}} \right){{\left\lfloor {{{\mathbf{\dot{q}}}}^{a}} \right\rceil}^{\frac{{{\zeta}_{1}}}{{{\zeta}_{2}}}}}$ in \eqref{eq8} remain inactive. 
%As a result, when  the system energy ${{k}^{a}}\left( {{\mathbf{q}}^{a}},{{{\mathbf{\dot{q}}}}^{a}} \right)$ reaches $k_{d}^{a}+{{\delta }_{3}}$, it is able to traverse the dead-zone and approach domain $k_{d}^{a}-{{\delta }_{2}}$ without exceeding this boundary; and vice versa. 
Accordingly, the system energy ${{k}^{a}}\left( {{\mathbf{q}}^{a}},{{{\mathbf{\dot{q}}}}^{a}} \right)$ is allowed to vary within the dead-zone in response to disturbance effects. Thus, the energy level is bounded by the region $B_1$ in \eqref{eq14} after the finite settling time interval ${{T}_{E}}\le \max \left\{ {{T}_{1}},{{T}_{2}} \right\}$ as presented in \eqref{eq25} and \eqref{eq27}. This completes the proof. \hfill $\blacksquare$
\end{theorem}

\begin{remark}
\label{remark2}
Theorem \ref{thm3} investigates the convergence behavior of the energy level when operating conditions deviate from the designed interval $\left[ k_{d}^a-{{\delta }_{2}},k_{d}^a+{{\delta }_{3}} \right]$ due to the initial setting and external disturbance. In this context, the auxiliary control parameter $\gamma_3$ proportional to $\mathbf{K}_1$ in \eqref{eq8} is designed so that $\gamma_3\gamma\ge\frac{{n}\left\| \pmb{\tau }_{ext}^{a} \right\|_{\infty }^{2}}{2\mu_1}$, where $\gamma>0$, to ensure the convergence of ${{k}^{a}}\left( {{\mathbf{q}}^{a}},{{{\mathbf{\dot{q}}}}^{a}} \right)$ to $B_1$. Meanwhile, the parameter $\gamma_2$ must satisfy the following condition ${{\gamma }_{2}}{{k}^{a}}\left( {{\mathbf{q}}^{a}},{{{\mathbf{\dot{q}}}}^{a}} \right)\ge \gamma$, equivalent to ${{\gamma }_{2}}\ge \Delta =\max \left\{ \frac{\gamma}{{{k}^{a}}\left( \mathbf{q}\left( 0 \right),\mathbf{\dot{q}}\left( 0 \right) \right)},\frac{\max \left\{ {{\delta }_{2}},{{\delta }_{3}} \right\}}{k_{d}^{a}-{{\delta }_{2}}} \right\}$ with ${{{k}^{a}}\left( \mathbf{q}\left( 0 \right),\mathbf{\dot{q}}\left( 0 \right) \right)}>0$. Note that ${{{k}^{a}}\left( \mathbf{q}\left( 0 \right),\mathbf{\dot{q}}\left( 0 \right) \right)}>0$ can be readily ensured through the initial energy of the fictitious flywheel. Because of ${{\gamma }_{2}}={{\gamma }_{3}}$ as defined in \eqref{eq20}, $ {{\gamma }_{2}}={{\gamma }_{3}}\ge\max\left\{\frac{{n}\left\| \pmb{\tau }_{ext}^{a} \right\|_{\infty }^{2}}{2\mu_1\gamma},\Delta\right\}$ should be applied. In addition, the convergence rate is also configured for each specific application through the control parameters $\gamma_1$, $\gamma_4$, $\zeta_1$, and $\zeta_2$ in the explicit settling times  \eqref{eq25} and \eqref{eq27}.
\end{remark}

\subsection{Uniform Ultimate Boundedness Analysis}

In this section, the tracking errors of the mechanical system are proven to be UUB via Lyapunov theory. Beyond the results provided by Theorem \ref{thm3} and Remark \ref{remark2}, the subsequent Lemmas \ref{lm3}-\ref{lm7} are critical in facilitating the proof of Theorem \ref{thm4}.

\begin{lemma}
\label{lm3}
According to Theorem \ref{thm3} and Remark \ref{remark2}, the differential of kinetic energy in the case of $t\ge {{T}_{E}}$ satisfies:
\begin{equation}
\label{eq28}
\left| {{{\dot{k}}}^{a}}\left( {{\mathbf{q}}^{a}},{{{\mathbf{\dot{q}}}}^{a}} \right) \right|\,\le {{\left\| \pmb{\tau }_{ext}^{a}\left(t\right) \right\|}}\sqrt{\frac{2\left( k_{d}^{a}+\delta_3 \right)}{{{l}_{\min }}\left( {{\mathbf{M}}^{a}}\left( {{\mathbf{q}}^{a}} \right) \right)}}.
\end{equation}

\textit{Proof:} Using the differential of kinetic energy ${{{\dot{k}}}^{a}}\left( {{\mathbf{q}}^{a}},{{{\mathbf{\dot{q}}}}^{a}} \right)$ in \eqref{eq11} with ${{\mathbf{S}}_{1}}\left( {{\mathbf{q}}^{a}},{{{\mathbf{\dot{q}}}}^{a}} \right)=\mathbf{0}$ and ${{\mathbf{S}}_{2}}\left( {{\mathbf{q}}^{a}},{{{\mathbf{\dot{q}}}}^{a}} \right)=\mathbf{0}$ due to 
$k_{d}^{a}-{{\delta }_{2}}\le{{k}^{a}}\left( {{\mathbf{q}}^{a}},{{{\mathbf{\dot{q}}}}^{a}} \right)\le k_{d}^{a}+{{\delta }_{3}}$ from Theorem \ref{thm3} and Remark \ref{remark2}, it results in $\left| {{{\dot{k}}}^{a}}\left( {{\mathbf{q}}^{a}},\,{{{\mathbf{\dot{q}}}}^{a}} \right) \right|\,\le \left\| {{{\mathbf{\dot{q}}}}^{a\top}} \right\|{{\left\| \pmb{\tau }_{ext}^{a} \right\|}}$. Based on the second equation of \eqref{eq18}, yields:
\begin{equation}
\label{eq29}
\left| {{{\dot{k}}}^{a}}\left( {{\mathbf{q}}^{a}},{{{\mathbf{\dot{q}}}}^{a}} \right)\, \right|\le {{\left\| \pmb{\tau }_{ext}^{a} \right\|}}\sqrt{\frac{2{{k}^{a}}\left( {{\mathbf{q}}^{a}},{{{\mathbf{\dot{q}}}}^{a}} \right)\,}{{{l}_{\min }}\left( {{\mathbf{M}}^{a}}\left( {{\mathbf{q}}^{a}} \right) \right)}}.
\end{equation}
Adopting the boundedness 
$B_1$ in \eqref{eq14} from Theorem \ref{thm3}, \eqref{eq28} holds. This completes the proof.  \hfill $\blacksquare$
\end{lemma}

\begin{lemma}
\label{lm4}
Utilizing the boundedness of ${{{\dot{k}}}^{a}}\left( {{\mathbf{q}}^{a}},{{{\mathbf{\dot{q}}}}^{a}} \right)$ in Lemma \ref{lm3} and ${{{{k}}}^{a}}\left( {{\mathbf{q}}^{a}},{{{\mathbf{\dot{q}}}}^{a}} \right)$ from Theorem \ref{thm3}, the following inequality holds when $t\ge T_E$: 
\begin{equation}
\label{eq30}
\left| \dot{\alpha }\left( t \right) \right|\le {{\left\| \pmb{\tau }_{ext}^{a}\left(t\right) \right\|}}\sqrt{\frac{k_{d}^{a}+\delta_3}{2{{E}^{a}}{{l}_{\min }}\left( {{\mathbf{M}}^{a}}\left( {{\mathbf{q}}^{a}} \right) \right)\left( k_{d}^{a}-\delta_2 \right)}},
\end{equation}
with $\alpha \left( t \right):=\sqrt{\frac{{{k}^{a}}\left( {{\mathbf{q}}^{a}}\left( t \right),\,{{{\mathbf{\dot{q}}}}^{a}}\left( t \right) \right)}{{{E}^{a}}}}\in \left[ \sqrt{\frac{k_{d}^{a}-\delta_2}{{{E}^{a}}}},\sqrt{\frac{k_{d}^{a}+\delta_3}{{{E}^{a}}}} \right]$.

\textit{Proof:} Taking the derivative of $\alpha \left( t \right)$ with respect to time, we obtain $\dot{\alpha }\left( t \right)=\frac{1}{2}{{\dot{k}}^{a}}\left( {{\mathbf{q}}^{a}},\,{{{\mathbf{\dot{q}}}}^{a}} \right)\sqrt{\frac{1}{{{E}^{a}}{{k}^{a}}\left( {{\mathbf{q}}^{a}},\,{{{\mathbf{\dot{q}}}}^{a}} \right)}}$. The inequality in \eqref{eq30} holds by applying the results in \eqref{eq28} and \eqref{eq14} in the case of $t\ge T_E$. Note that the desired energy level $k_{d}^{a}$ and the control parameter $\delta_2$ can be designed to guarantee $k_{d}^{a}-\delta_2>0$ in \eqref{eq9}. This completes the proof.  \hfill $\blacksquare$
\end{lemma}

\begin{lemma}
\label{lm5}
According to the studies \cite{c22,c27}, the following formulations hold:
\begin{equation}
\label{eq31}
{{\mathbf{w}}^{\top}}{{\mathbf{V}}^{a}}\left( {{\mathbf{q}}^{a}},t \right)=0,
\end{equation}
\begin{equation}
\label{eq32}
\mathbf{R}_1\left( {{\mathbf{q}}^{a}},\,\,{{{\mathbf{\dot{q}}}}^{a}},t \right){{\mathbf{\dot{q}}}^{a}}-\alpha \left( t \right)\mathbf{w}\,\,=\mathbf{R}_1\left( {{\mathbf{q}}^{a}},\,\,{{{\mathbf{\dot{q}}}}^{a}},t \right){{\mathbf{e}}_{v}},
\end{equation}
in which ${{\mathbf{e}}_{v}}:={{\mathbf{\dot{q}}}^{a}}-\alpha \left( t \right){{\mathbf{V}}^{a}}\left( {{\mathbf{q}}^{a}},t \right)$.
\end{lemma}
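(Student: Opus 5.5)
Both identities follow from one fact: by construction, the desired velocity field carries constant kinetic energy $E^a$. I would prove \eqref{eq31} from that fact and then get \eqref{eq32} from \eqref{eq31} by a short algebraic computation.

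\emph{Identity \eqref{eq31}.} By the definition of $V_f$ in \eqref{eq7}, for all $(\mathbf{q},t)$ where the radicand is positive,
\begin{equation*}
\tfrac{1}{2}{{\mathbf{V}}^{a\top}}{{\mathbf{M}}^{a}}\left( {{\mathbf{q}}^{a}} \right){{\mathbf{V}}^{a}}=\tfrac{1}{2}{{\mathbf{V}}^{\top}}\mathbf{M}\left( \mathbf{q} \right)\mathbf{V}+\tfrac{1}{2}m_f V_f^2=E^a .
\end{equation*}
I read ${{\nabla }_{t}}\mathbf{V}^a$ as the total time derivative along the trajectory, in the sense of the Notations, i.e. including the $\dot{\mathbf{q}}$-dependent terms. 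Differentiating the identity above along trajectories then gives
\begin{equation*}
0={{\mathbf{V}}^{a\top}}{{\mathbf{M}}^{a}}{{\nabla }_{t}}{{\mathbf{V}}^{a}}+\tfrac{1}{2}{{\mathbf{V}}^{a\top}}{{\mathbf{\dot{M}}}^{a}}{{\mathbf{V}}^{a}}.
\end{equation*}
Skew-symmetry of ${{\mathbf{\dot{M}}}^{a}}-2{{\mathbf{C}}^{a}}$ holds for any vector, not only for ${{\mathbf{\dot{q}}}^{a}}$. Applied to ${{\mathbf{V}}^{a}}$, it gives $\tfrac{1}{2}{{\mathbf{V}}^{a\top}}{{\mathbf{\dot{M}}}^{a}}{{\mathbf{V}}^{a}}={{\mathbf{V}}^{a\top}}{{\mathbf{C}}^{a}}{{\mathbf{V}}^{a}}$. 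Substituting,
\begin{equation*}
0={{\mathbf{V}}^{a\top}}\left( {{\mathbf{M}}^{a}}{{\nabla }_{t}}{{\mathbf{V}}^{a}}+{{\mathbf{C}}^{a}}{{\mathbf{V}}^{a}}\right)={{\mathbf{V}}^{a\top}}\mathbf{w},
\end{equation*}
which is \eqref{eq31}.

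\emph{Identity \eqref{eq32}.} Since ${{\mathbf{e}}_{v}}={{\mathbf{\dot{q}}}^{a}}-\alpha(t){{\mathbf{V}}^{a}}$, we have $\mathbf{R}_1{{\mathbf{\dot{q}}}^{a}}-\mathbf{R}_1{{\mathbf{e}}_{v}}=\alpha(t)\,\mathbf{R}_1{{\mathbf{V}}^{a}}$. So \eqref{eq32} is equivalent to $\mathbf{R}_1{{\mathbf{V}}^{a}}=\mathbf{w}$. Using the definition of $\mathbf{R}_1$, $\mathbf{P}={{\mathbf{M}}^{a}}{{\mathbf{V}}^{a}}$, and symmetry of ${{\mathbf{M}}^{a}}$,
\begin{equation*}
\mathbf{R}_1{{\mathbf{V}}^{a}}=\frac{1}{2E^a}\left(\mathbf{w}\,{{\mathbf{V}}^{a\top}}{{\mathbf{M}}^{a}}{{\mathbf{V}}^{a}}-{{\mathbf{M}}^{a}}{{\mathbf{V}}^{a}}\,{{\mathbf{w}}^{\top}}{{\mathbf{V}}^{a}}\right).
\end{equation*}
The first term equals $\frac{2E^a}{2E^a}\mathbf{w}=\mathbf{w}$ by the energy identity, and the second vanishes by \eqref{eq31}. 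Hence $\mathbf{R}_1{{\mathbf{V}}^{a}}=\mathbf{w}$, and \eqref{eq32} follows.

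\emph{Main obstacle.} The delicate step is the interpretation in the first part rather than any computation. The argument needs ${{\nabla }_{t}}\mathbf{V}^a$ to be the full time derivative consistent with the Notations, so that it equals $\frac{d}{dt}{{\mathbf{V}}^{a}}$. It also needs the radicand in \eqref{eq7} to stay positive, so that $V_f$ is differentiable and the energy identity holds identically along the trajectory. I would state both explicitly as standing assumptions; the latter is guaranteed by the choice of $E^a$ stated after \eqref{eq7}.
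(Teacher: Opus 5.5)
Your proof is correct. The paper gives no proof of Lemma~\ref{lm5} beyond citing \cite{c22,c27}, and your two steps are the standard PVFC reasoning from those references: constancy of $\tfrac{1}{2}\mathbf{V}^{a\top}\mathbf{M}^a\mathbf{V}^a=E^a$, together with skew-symmetry of $\dot{\mathbf{M}}^a-2\mathbf{C}^a$ applied to $\mathbf{V}^a$, gives $\mathbf{w}^\top\mathbf{V}^a=0$, and then $\mathbf{P}^\top\mathbf{V}^a=2E^a$ gives $\mathbf{R}_1\mathbf{V}^a=\mathbf{w}$. You extend this correctly to the time-varying field, since the energy identity holds identically in $(\mathbf{q},t)$, and your reading of $\nabla_t\mathbf{V}^a$ as the total time derivative matches how the paper itself uses it in \eqref{eq34}.
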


\begin{lemma}
    \label{lm6}
    Using Theorem \ref{thm3}, Lemmas \ref{lm4}, \ref{lm5}, and considering the mechanical system \eqref{eq6} with external disturbance, the robust SPVFC \eqref{eq8}, the time derivative of the following positive function ${{V}_{2}}\left(\mathbf{e}_v \right):=\dfrac{1}{2}{{\mathbf{e}}_{v}}^{\top}{{\mathbf{M}}^{a}}\left( {{\mathbf{q}}^{a}} \right){{\mathbf{e}}_{v}}$ satisfies
    \begin{equation} 
    \label{eq33}
        \begin{aligned}
            {{\dot{V}}_{2}}\le
            &-2\kappa \alpha \left( t \right){{E}^{a}}\beta \left( t \right){{\mathbf{e}}_{v}}^{\top}{{\mathbf{M}}^{a}}\left( {{\mathbf{q}}^{a}} \right){{\mathbf{e}}_{v}}\\
            &+\frac{{{{\dot{k}}}^{a}}\left( {{\mathbf{q}}^{a}},{{{\mathbf{\dot{q}}}}^{a}} \right)}{4{{k}^{a}}\left( {{\mathbf{q}}^{a}},{{{\mathbf{\dot{q}}}}^{a}} \right)}{{\mathbf{e}}_{v}}^{\top}{{\mathbf{M}}^{a}}\left( {{\mathbf{q}}^{a}} \right){{\mathbf{e}}_{v}}+\left\| {{\mathbf{e}}_{v}}^{\top} \right\|\left\| \pmb{\tau }_{ext}^{a} \right\|,
        \end{aligned}
    \end{equation}
with $\beta \left( t \right):=\frac{1}{2}\left( 1+\frac{{{\mathbf{V}}^{a\top}}\left( {{\mathbf{q}}^{a}},t \right){{\mathbf{M}}^{a}}\left( {{\mathbf{q}}^{a}} \right){{{\mathbf{\dot{q}}}}^{a}}}{2\alpha \left( t \right){{E}^{a}}} \right)$, after $t\ge T_E$.

\textit{Proof:}
Differentiating ${{V}_{2}}\left(\mathbf{e}_v \right)$ and using the augmented mechanical system \eqref{eq6}, we obtain:
\begin{equation}
\label{eq34}
\begin{aligned}
  & {{{\dot{V}}}_{2}}=\mathbf{e}_{v}^{\top}{{\mathbf{M}}^{a}}\left( {{\mathbf{q}}^{a}} \right){{{\mathbf{\dot{e}}}}_{v}}+\frac{1}{2}\mathbf{e}_{v}^{\top}{{{\mathbf{\dot{M}}}}^{a}}\left( {{\mathbf{q}}^{a}} \right){{\mathbf{e}}_{v}} \\ 
 & \,\,\,\,\,\,=\frac{1}{2}\mathbf{e}_{v}^{\top}{{{\mathbf{\dot{M}}}}^{a}}\left( {{\mathbf{q}}^{a}} \right){{\mathbf{e}}_{v}}+\mathbf{e}_{v}^{\top}\left( {{\pmb{\tau }}^{a}}+\pmb{\tau }_{ext}^{a} \right) \\
 &\,\,\,\,\,\,\,\,\,\,\,\,-\mathbf{e}_{v}^{\top}{{\mathbf{C}}^{a}}\left( {{\mathbf{q}}^{a}},{{{\mathbf{\dot{q}}}}^{a}} \right)\left({{\mathbf{e}}_{v}}+\alpha \left( t \right){{\mathbf{V}}^{a}}\left( {{\mathbf{q}}^{a}},t \right) \right)\\
 & \,\,\,\,\,\,\,\,\,\,\,\,-\mathbf{e}_{v}^{\top}\mathbf{M}^a\left( {{\mathbf{q}}^{a}} \right)\left( \alpha \left( t \right){{\nabla }_{t}}{{\mathbf{V}}^{a}}\left( {{\mathbf{q}}^{a}},t \right)+\dot{\alpha }\left( t \right){{\mathbf{V}}^{a}}\left( {{\mathbf{q}}^{a}},t \right) \right). \\ 
\end{aligned}
\end{equation}
Substituting the robust SPVFC \eqref{eq8} into \eqref{eq34}, yields:
\begin{equation}
\label{eq35}
\begin{aligned}
  & {{{\dot{V}}}_{2}}=\frac{1}{2}\mathbf{e}_{v}^{\top}\left( {{{\mathbf{\dot{M}}}}^{a}}\left( {{\mathbf{q}}^{a}} \right)-2{{\mathbf{C}}^{a}}\left( {{\mathbf{q}}^{a}},{{{\mathbf{\dot{q}}}}^{a}} \right) \right){{\mathbf{e}}_{v}} \\ 
 & \,\,\,\,\,\,\,\,\,\,\,\,\,-\mathbf{e}_{v}^{\top}{{\mathbf{S}}_{1}}\left( {{\mathbf{q}}^{a}},{{{\mathbf{\dot{q}}}}^{a}} \right){{{\mathbf{\dot{q}}}}^{a}}-\mathbf{e}_{v}^{\top}{{\mathbf{S}}_{2}}\left( {{\mathbf{q}}^{a}},{{{\mathbf{\dot{q}}}}^{a}} \right){{\left\lfloor {{{\mathbf{\dot{q}}}}^{a}} \right\rceil}^{\frac{{{\zeta}_{1}}}{{{\zeta}_{2}}}}} \\ 
 & \,\,\,\,\,\,\,\,\,\,\,\,\,+\mathbf{e}_{v}^{\top}\left( {{\mathbf{R}}_{1}}\left( {{\mathbf{q}}^{a}},{{{\mathbf{\dot{q}}}}^{a}},t \right){{{\mathbf{\dot{q}}}}^{a}}-\alpha \left( t \right)\mathbf{w} \right)+\mathbf{e}_{v}^{\top}\pmb{\tau }_{ext}^{a} \\ 
 & \,\,\,\,\,\,\,\,\,\,\,\,\,+{{\mathbf{\dot{q}}}^{a\top}}{{\mathbf{R}}_{2}}\left( {{\mathbf{q}}^{a}},{{{\mathbf{\dot{q}}}}^{a}},t \right){{\mathbf{\dot{q}}}^{a}}-\dot{\alpha }\left( t \right)\mathbf{e}_{v}^{\top}{{\mathbf{M}}^{a}}\left( {{\mathbf{q}}^{a}} \right){{\mathbf{V}}^{a}}\left( {{\mathbf{q}}^{a}},t \right) \\ 
 & \,\,\,\,\,\,\,\,\,\,\,\,\,-\alpha \left( t \right){{\mathbf{V}}^{a\top}}\left( {{\mathbf{q}}^{a}},t \right){{\mathbf{R}}_{2}}\left( {{\mathbf{q}}^{a}},{{{\mathbf{\dot{q}}}}^{a}},t \right){{\mathbf{\dot{q}}}^{a}}.
\end{aligned}
\end{equation}
By applying Lemma \ref{lm5}, $\mathbf{e}_{v}^{\top}\left( {{{\mathbf{\dot{M}}}}^{a}}\left( {{\mathbf{q}}^{a}} \right)-2{{\mathbf{C}}^{a}}\left( {{\mathbf{q}}^{a}},{{{\mathbf{\dot{q}}}}^{a}} \right) \right){{\mathbf{e}}_{v}}=0$, ${{\mathbf{\dot{q}}}^{aT}}{{\mathbf{R}}_{2}}\left( {{\mathbf{q}}^{a}},{{{\mathbf{\dot{q}}}}^{a}},t \right){{\mathbf{\dot{q}}}^{a}}=0$, and the results ${{\mathbf{S}}_{1}}\left( {{\mathbf{q}}^{a}},{{{\mathbf{\dot{q}}}}^{a}} \right)=\mathbf{0}$, ${{\mathbf{S}}_{2}}\left( {{\mathbf{q}}^{a}},{{{\mathbf{\dot{q}}}}^{a}} \right)=\mathbf{0}$  from Theorem \ref{thm3} after $t\ge {{T}_{E}}$, we have:
\begin{equation}
\label{eq36}
\begin{aligned}
  & {{{\dot{V}}}_{2}}=-\alpha \left( t \right){{\mathbf{V}}^{a\top}}\left( {{\mathbf{q}}^{a}},t \right){{\mathbf{R}}_{2}}\left( {{\mathbf{q}}^{a}},{{{\mathbf{\dot{q}}}}^{a}},t \right){{{\mathbf{\dot{q}}}}^{a}}\\ 
 & \,\,\,\,\,\,\,\,\,\,\,\,\,\,-\dot{\alpha }\left( t \right)\mathbf{e}_{v}^{\top}{{\mathbf{M}}^{a}}\left( {{\mathbf{q}}^{a}} \right){{\mathbf{V}}^{a}}\left( {{\mathbf{q}}^{a}},t \right) \\ 
 & \,\,\,\,\,\,\,\,\,\,\,\,\,\,+\mathbf{e}_{v}^{\top}{{\mathbf{R}}_{1}}\left( {{\mathbf{q}}^{a}},{{{\mathbf{\dot{q}}}}^{a}},t \right){{\mathbf{e}}_{v}}+\mathbf{e}_{v}^{\top}\pmb{\tau }_{ext}^{a} . \\ 
\end{aligned}
\end{equation}
According to the definitions of ${{\mathbf{e}}_{v}}$, ${{\mathbf{R}}_{2}}\left( {{\mathbf{q}}^{a}},{{{\mathbf{\dot{q}}}}^{a}},t \right)$, and $\mathbf{e}_{v}^{\top}{{\mathbf{R}}_{1}}\left( {{\mathbf{q}}^{a}},{{{\mathbf{\dot{q}}}}^{a}},t \right){{\mathbf{e}}_{v}}=0$, \eqref{eq36} can be rewritten as:
\begin{equation}
\label{eq37}
\begin{aligned}
  & {{{\dot{V}}}_{2}}=-\alpha \left( t \right)\kappa {{\mathbf{V}}^{a\top}}\left( {{\mathbf{q}}^{a}},t \right)\left( \mathbf{P}{{\mathbf{p}}^{\top}}-\mathbf{p}{{\mathbf{P}}^{\top}} \right){{{\mathbf{\dot{q}}}}^{a}}+\mathbf{e}_{v}^{\top}\pmb{\tau }_{ext}^{a}\\ 
 & \,\,\,\,\,\,\,\,\,\,\,\,\,-\dot{\alpha }\left( t \right){{\left( {{{\mathbf{\dot{q}}}}^{a}}-\alpha \left( t \right){{\mathbf{V}}^{a}}\left( {{\mathbf{q}}^{a}},t \right) \right)}^{\top}}{{\mathbf{M}}^{a}}\left( {{\mathbf{q}}^{a}} \right){{\mathbf{V}}^{a}}\left( {{\mathbf{q}}^{a}},t \right). \\ 
\end{aligned}
\end{equation}
Substituting $E^a$, ${{k}^{a}}\left( {{\mathbf{q}}^{a}},{{{\mathbf{\dot{q}}}}^{a}} \right)$, and $\alpha \left( t \right)$ into \eqref{eq37}, yields
\begin{equation}
\label{eq38}
\begin{aligned}
  & {{{\dot{V}}}_{2}}=-\alpha \left( t \right)\kappa \left( \begin{aligned}
  & 4{{\alpha }^{2}}\left( t \right){{\left( {{E}^{a}} \right)}^{2}} \\ 
 & \,\,\,\,\,\,\,\,\,\,\,-{{\left( {{\mathbf{V}}^{a\top}}\left( {{\mathbf{q}}^{a}},t \right){{\mathbf{M}}^{a}}\left( {{\mathbf{q}}^{a}} \right){{{\mathbf{\dot{q}}}}^{a}} \right)}^{2}} \\ 
\end{aligned} \right)\\ 
 & \,\,\,\,\,\,\,\,\,\,\,\,\,-\dot{\alpha }\left( t \right){{{\mathbf{\dot{q}}}}^{a\top}}{{\mathbf{M}}^{a}}\left( {{\mathbf{q}}^{a}} \right){{\mathbf{V}}^{a}}\left( {{\mathbf{q}}^{a}},t \right)\\ 
 & \,\,\,\,\,\,\,\,\,\,\,\,\,+2\dot{\alpha }\left( t \right)\alpha \left( t \right){{E}^{a}} +\mathbf{e}_{v}^{\top}\pmb{\tau }_{ext}^{a}. \\ 
\end{aligned}
\end{equation}
Because the formulation $\dfrac{1}{2}\mathbf{e}_{v}^{\top}{{\mathbf{M}}^{a}}\left( {{\mathbf{q}}^{a}} \right){{\mathbf{e}}_{v}}=2{{\alpha }^{2}}\left( t \right){{E}^{a}}-\alpha \left( t \right){{\mathbf{V}}^{a\top}}\left( {{\mathbf{q}}^{a}},t \right){{\mathbf{M}}^{a}}\left( {{\mathbf{q}}^{a}} \right){{\mathbf{\dot{q}}}^{a}}$ holds; we obtain:
\begin{equation}
\label{eq39}
\begin{aligned}
  & {{{\dot{V}}}_{2}}=-2\kappa \alpha \left( t \right){{E}^{a}}\beta \left( t \right)\mathbf{e}_{v}^{\top}{{\mathbf{M}}^{a}}\left( {{\mathbf{q}}^{a}} \right){{\mathbf{e}}_{v}}\\
&\,\,\,\,\,\,\,\,\,\,\,\,\,+\frac{\dot{\alpha }\left( t \right)}{2\alpha \left( t \right)}\mathbf{e}_{v}^{\top}\mathbf{M}\left( {{\mathbf{q}}^{a}} \right){{\mathbf{e}}_{v}}+\mathbf{e}_{v}^{\top}\pmb{\tau }_{ext}^{a}\\ 
 & \,\,\,\,\,\,\,\le -2\kappa \alpha \left( t \right){{E}^{a}}\beta \left( t \right)\mathbf{e}_{v}^{\top}{{\mathbf{M}}^{a}}\left( {{\mathbf{q}}^{a}} \right){{\mathbf{e}}_{v}}\\ 
 & \,\,\,\,\,\,\,\,\,\,\,\,\,+\frac{ \dot{\alpha }\left( t \right) }{2 \alpha \left( t \right) }\mathbf{e}_{v}^{\top}\mathbf{M}\left( {{\mathbf{q}}^{a}} \right){{\mathbf{e}}_{v}}+\left\| \mathbf{e}_{v}^{\top} \right\|{{\left\| \pmb{\tau }_{ext}^{a} \right\|}}. \\ 
\end{aligned}
\end{equation}
By substituting the definition of $\alpha \left( t \right)$ and $\dot{\alpha }\left( t \right)$ from Lemma \ref{lm4} to \eqref{eq39}, Lemma \ref{lm6} holds. This completes the proof.  \hfill $\blacksquare$
\end{lemma}

In Lemma \ref{lm6}, the first term of \eqref{eq33} contains a time-varying variable  $\beta \left( t \right)=\frac{1}{2}\left( 1+\frac{{{\mathbf{V}}^{a\top}}\left( {{\mathbf{q}}^{a}},t \right){{\mathbf{M}}^{a}}\left( {{\mathbf{q}}^{a}} \right){{{\mathbf{\dot{q}}}}^{a}}}{2\alpha \left( t \right){{E}^{a}}} \right)\in\left[0,1\right]$ due to $\left|{{{\mathbf{V}}^{a\top}}\left( {{\mathbf{q}}^{a}},t \right){{\mathbf{M}}^{a}}\left( {{\mathbf{q}}^{a}} \right){{{\mathbf{\dot{q}}}}^{a}}}\right|\le 2\left|(\alpha\left(t\right)E^a\right|$ by Schwartz’s inequality. Therefore, it is not mathematically rigorous to directly conclude the UUB of $\mathbf{e}_v$. Because of $\dfrac{1}{2}\mathbf{e}_{v}^{\top}{{\mathbf{M}}^{a}}\left( {{\mathbf{q}}^{a}} \right){{\mathbf{e}}_{v}}=2{{\alpha }^{2}}\left( t \right){{E}^{a}}-\alpha \left( t \right){{\mathbf{V}}^{a\top}}\left( {{\mathbf{q}}^{a}},t \right){{\mathbf{M}}^{a}}\left( {{\mathbf{q}}^{a}} \right){{\mathbf{\dot{q}}}^{a}}$, it follows that  $\beta \left( t \right)=1-\frac{\mathbf{e}_{v}^{\top}{{\mathbf{M}}^{a}}\left( {{\mathbf{q}}^{a}} \right){{\mathbf{e}}_{v}}}{8{{k}^{a}}\left( {{{\mathbf{\dot{q}}}}^{a}},{{\mathbf{q}}^{a}} \right)}$. To rigorously derive the conclusion in Theorem \ref{thm4}, the following lemma further investigates the condition of $\beta\left(t\right)$ via $\frac{\mathbf{e}_{v}^{\top}{{\mathbf{M}}^{a}}\left( {{\mathbf{q}}^{a}} \right){{\mathbf{e}}_{v}}}{8{{k}^{a}}\left( {{{\mathbf{\dot{q}}}}^{a}},{{\mathbf{q}}^{a}} \right)}$.

\begin{lemma}
    \label{lm7}
In accordance with Theorem \ref{thm3}, Lemmas \ref{lm3}, \ref{lm6}, and under the given condition ${{\bar{V}}_{3}}\in \left( 0,1 \right)$, the continuously differentiable positive-definite function ${{V}_{3}}\left(\mathbf{e}_v \right):=\dfrac{{{\mathbf{e}}_{v}}^{\top}{{\mathbf{M}}^{a}}\left( {{\mathbf{q}}^{a}} \right){{\mathbf{e}}_{v}}}{8{{k}^{a}}\left( {{\mathbf{q}}^{a}},{{{\mathbf{\dot{q}}}}^{a}} \right)}$ satisfies ${{V}_{3}\left( e_v \right)}\le {{\bar{V}}_{3}}$ for all $t\ge 0$ by designing the control parameter $\kappa$ and the initial condition ${{V}_{3}}\left(\mathbf{e}_v\left(0\right) \right)\le {{\bar{V}}_{3}}$ holds.

\textit{Proof:} According to Lemma \ref{lm6}, we obtain ${{V}_{3}}\left(\mathbf{e}_v \right)=\dfrac{{{V}_{2}\left(\mathbf{e}_v\right)}}{4{{k}^{a}}\left( {{\mathbf{q}}^{a}},{{{\mathbf{\dot{q}}}}^{a}} \right)}$. Conducting its time derivative, yields
    \begin{equation}\label{eq40}
        {{\dot{V}}_{3}}=\frac{1}{4}\frac{{{{\dot{V}}}_{2}}}{{{k}^{a}}\left( {{\mathbf{q}}^{a}},{{{\mathbf{\dot{q}}}}^{a}} \right)}-\frac{1}{4}\frac{{{V}_{2}}{{{\dot{k}}}^{a}}\left( {{\mathbf{q}}^{a}},{{{\mathbf{\dot{q}}}}^{a}} \right)}{{{\left( {{k}^{a}}\left( {{\mathbf{q}}^{a}},{{{\mathbf{\dot{q}}}}^{a}} \right) \right)}^{2}}}.
    \end{equation}
Substituting \eqref{eq33} in Lemma \ref{lm6} into \eqref{eq40}, we have
    \begin{equation} \label{eq41}
        \begin{aligned}
 {{{\dot{V}}}_{3}} \le & -\frac{1}{2{{k}^{a}}\left( {{\mathbf{q}}^{a}},{{{\mathbf{\dot{q}}}}^{a}} \right)}\kappa \alpha \left( t \right){{E}^{a}}\beta \left( t \right){{\mathbf{e}}_{v}}^{\top}{{\mathbf{M}}^{a}}\left( {{\mathbf{q}}^{a}} \right){{\mathbf{e}}_{v}} \\ 
 & -\frac{1}{8}\frac{{{V}_{2}}{{{\dot{k}}}^{a}}\left( {{\mathbf{q}}^{a}},{{{\mathbf{\dot{q}}}}^{a}} \right)}{{{\left( {{k}^{a}}\left( {{\mathbf{q}}^{a}},{{{\mathbf{\dot{q}}}}^{a}} \right) \right)}^{2}}}+\frac{\left\| {{\mathbf{e}}_{v}}^{\top} \right\|\left\| \pmb{\tau }_{ext}^{a} \right\|}{4{{k}^{a}}\left( {{\mathbf{q}}^{a}},{{{\mathbf{\dot{q}}}}^{a}} \right)} \\ 
 \le &  -\left( 4\kappa \alpha \left( t \right){{E}^{a}}\left( 1-{{V}_{3}} \right)+\frac{{{{\dot{k}}}^{a}}\left( {{\mathbf{q}}^{a}},{{{\mathbf{\dot{q}}}}^{a}} \right)}{2{{k}^{a}}\left( {{\mathbf{q}}^{a}},{{{\mathbf{\dot{q}}}}^{a}} \right)} \right){{V}_{3}}\\
 &+\frac{\left\| {{\mathbf{e}}_{v}}^{\top} \right\|\left\| \pmb{\tau }_{ext}^{a} \right\|}{4{{k}^{a}}\left( {{\mathbf{q}}^{a}},{{{\mathbf{\dot{q}}}}^{a}} \right)}.  
\end{aligned}
    \end{equation}
Applying the inequalities in Lemma \ref{lm1} for ${\left\| {{\mathbf{e}}_{v}}^{\top} \right\|\left\| \pmb{\tau }_{ext}^{a} \right\|}$ with an arbitrary positive constant $\mu_2$, \eqref{eq18}, $\left\| \pmb{\tau }_{ext}^{a} \right\|\le \sqrt{n}{{\left\| \pmb{\tau }_{ext}^{a} \right\|}_{\infty }},\forall \pmb{\tau }_{ext}^{a}\in {{\mathbb{R}}^{n+1}}$, and the results from Theorem \ref{thm3}, Lemma \ref{lm3}, we obtain 
    \begin{equation} \label{eq42}
        \begin{aligned}
   &{{{\dot{V}}}_{3}}\le-\left( 4\kappa \alpha \left( t \right){{E}^{a}}\left( 1-{{V}_{3}} \right)+\frac{{{{\dot{k}}}^{a}}\left( {{\mathbf{q}}^{a}},{{{\mathbf{\dot{q}}}}^{a}} \right)}{2{{k}^{a}}\left( {{\mathbf{q}}^{a}},{{{\mathbf{\dot{q}}}}^{a}} \right)} \right){{V}_{3}} \\ 
 & \,\,\,\,\,\,\,\,\,\,\,\,\,\,+\frac{\frac{{{\mu }_{2}}}{2}{{\left\| {{\mathbf{e}}_{v}}^{\top} \right\|}^{2}}+\frac{1}{2{{\mu }_{2}}}{{\left\| \pmb{\tau }_{ext}^{a} \right\|}^{2}}}{4{{k}^{a}}\left( {{\mathbf{q}}^{a}},{{{\mathbf{\dot{q}}}}^{a}} \right)} \\ 
 & \le  -\left( \begin{aligned}
  & 4\kappa \alpha \left( t \right){{E}^{a}}\left( 1-{{V}_{3}} \right)-\frac{{{\mu }_{2}}}{{{l}_{\min }}\left( {{\mathbf{M}}^{a}}\left( {{\mathbf{q}}^{a}} \right) \right)} \\ 
 & -\sqrt{n}{{\left\| \pmb{\tau }_{ext}^{a} \right\|}_{\infty }}\sqrt{\frac{ k_{d}^{a}+\delta_3}{2{{l}_{\min }}\left( {{\mathbf{M}}^{a}}\left( {{\mathbf{q}}^{a}} \right) \right){{\left( k_{d}^{a}-\delta_2 \right)}^{2}}}} \\ 
\end{aligned} \right){{V}_{3}} \\ 
 &\,\,\,\,\,\,\, +\frac{n\left\| \pmb{\tau }_{ext}^{a} \right\|_{\infty }^{2}}{8{{\mu }_{2}}\left( k_{d}^{a}-\delta_2 \right)}.
\end{aligned}
    \end{equation}
    
By designing the control parameters, satisfies
    \begin{equation} \label{eq43}
        \begin{aligned}
            \kappa >&\frac{1}{4\alpha \left( t \right){{E}^{a}}\left( 1-{{{\bar{V}}}_{3}} \right)}\\
            &\times\left( \begin{aligned}
  & \sqrt{n}{{\left\| \pmb{\tau }_{ext}^{a} \right\|}_{\infty }}\sqrt{\frac{ k_{d}^{a}+\delta_3}{2{{l}_{\min }}\left( {{\mathbf{M}}^{a}}\left( {{\mathbf{q}}^{a}} \right) \right){{\left( k_{d}^{a}-\delta_2 \right)}^{2}}}}+{{\kappa }^{*}} \\ 
 & +\frac{{{\mu }_{2}}}{{{l}_{\min }}\left( {{\mathbf{M}}^{a}}\left( {{\mathbf{q}}^{a}} \right) \right)}+\frac{n\left\| \pmb{\tau }_{ext}^{a} \right\|_{\infty }^{2}}{8{{\mu }_{2}}\left( k_{d}^{a}-\delta_2 \right){{{\bar{V}}}_{3}}} \\ 
\end{aligned} \right)
        \end{aligned}
    \end{equation}
with $\kappa^*$ is positive constant, and ${{V}_{3}}\left( \mathbf{e}_v\left({{t}^{*}\ge0} \right)\right)\le {{\bar{V}}_{3}}$, yields
    \begin{equation} \label{eq44}
        \begin{aligned}
  {{{\dot{V}}}_{3}\left(\mathbf{e}_v\left(t^*\right)\right)} \le -\left( \frac{1-{{V}_{3}}}{1-{{{\bar{V}}}_{3}}}-1 \right)\\
  &  \hspace{-4cm}\times \left( \begin{aligned}
  & \sqrt{n}{{\left\| \pmb{\tau }_{ext}^{a} \right\|}_{\infty }}\sqrt{\frac{ k_{d}^{a}+\delta_3}{2{{l}_{\min }}\left( {{\mathbf{M}}^{a}}\left( {{\mathbf{q}}^{a}} \right) \right){{\left( k_{d}^{a}-\delta_2 \right)}^{2}}}} \\ 
 & +\frac{{{\mu }_{2}}}{{{l}_{\min }}\left( {{\mathbf{M}}^{a}}\left( {{\mathbf{q}}^{a}} \right)\right)} \\ 
\end{aligned} \right){{V}_{3}} \\ 
 &\hspace{-4cm} -\frac{1-{{V}_{3}}}{1-{{{\bar{V}}}_{3}}}{{\kappa }^{*}}{{V}_{3}}-\left( \frac{1-{{V}_{3}}}{1-{{{\bar{V}}}_{3}}}\frac{{{V}_{3}}}{{{{\bar{V}}}_{3}}}-1 \right)\frac{n\left\| \pmb{\tau }_{ext}^{a} \right\|_{\infty }^{2}}{8{{\mu }_{2}}\left( k_{d}^{a}-\delta_2 \right)}  
\end{aligned}
    \end{equation}
Obviously, ${{{\dot{V}}}_{3}\left(\mathbf{e}_v\left(t^*\right)\right)}$ is non-positive function. Thus, after the time instant $t\ge{{t}^{*}}$, the system does not reveal any tendency to escape the constrained region defined by ${{V}_{3}}\left(\mathbf{e}_v \right)\le {{\bar{V}}_{3}}$. In this manner, if ${{V}_{3}}\left(\mathbf{e}_v\left(0\right) \right)\le {{\bar{V}}_{3}}$, ${{V}_{3}}\left(\mathbf{e}_v \right)\le {{\bar{V}}_{3}}$ for all $t\ge0$. This completes the proof.  \hfill $\blacksquare$
\end{lemma}

\begin{theorem}
\label{thm4}
Under Theorem \ref{thm3}, Lemmas \ref{lm4}, \ref{lm6}, and \ref{lm7}, the position and velocity errors of the closed-loop mechanical system, consisting of the augmented system with external disturbance \eqref{eq6}, the time-varying velocity field \eqref{eq7}, and the robust SPVFC \eqref{eq8}, converge to a UUB region $B_2$ as
\begin{equation}
\label{eq45}
{{B}_{2}}:=\left\{ {{\mathbf{e}}_{s}}\in {{\mathbb{R}}^{2n+1 }}\left| \left\| {{\mathbf{e}}_{s}} \right\|\le B_{2}^{u} \right. \right\},
\end{equation}
where $B_{2}^{u}:=\sqrt{\frac{\Lambda}{{{\left( 1-{{\vartheta }_{2}} \right)}}\underline{\Upsilon } \Xi }}$, ${{\mathbf{e}}_{s}}:={{\left[ \begin{matrix}
   \mathbf{e}_{p}^{\top} & \mathbf{e}_{v}^{\top}  \\
\end{matrix} \right]}^{\top}}$ with ${{\mathbf{e}}_{p}}:=\mathbf{q}\left( t \right)-\mathbf{q}_d\left( t \right)$ and  ${{\mathbf{e}}_{v}}:={{\mathbf{\dot{q}}}^{a}}\left( t \right)-\alpha \left( t \right){{\mathbf{V}}^{a}}\left( {{\mathbf{q}}^{a}},t \right)$ after the settling time
\begin{equation} \label{eq51}
    {{T}_{s}}\le \frac{{{V}_{4}}\left( t=0 \right)-\underline{{\Upsilon } }{{\left( B_{2}^{u} \right)}^{2}}}{{{\vartheta }_{2}}\Xi \underline{\Upsilon }{{\left( B_{2}^{u} \right)}^{2}} }.
\end{equation}
 Here, $\Lambda$ is a function of ${{\left\| {{{\mathbf{\dot{q}}}}_{d}} \right\|}_{\infty }}$ and ${{\left\| \pmb{\tau }_{ext}^{a} \right\|}_{\infty }}$ and the auxiliary parameter $\Xi>0$ is proportional to $\pmb{\psi}$ in \eqref{eq7}, $\kappa$ in \eqref{eq8}, $E^a$, while $\vartheta_2 \in \left( 0,1 \right)$ and $\underline{\Upsilon } =\min \left\{ \frac{1}{2},\frac{{{l}_{\min }}\left( {{\mathbf{M}}^{a}}\left( {{\mathbf{q}}^{a}} \right) \right)}{2} \right\}$.

\textit{Proof:} Considering a Lyapunov function as
\begin{equation}
\label{eq46}
{{V}_{4}}\left( \mathbf{e}_p,\mathbf{e}_v \right):=\frac{1}{2}\mathbf{e}_{p}^{\top}{{\mathbf{e}}_{p}}+{{V}_{2}}\left(\mathbf{e}_v \right).
\end{equation}
Differentiating \eqref{eq46} and using the time-varying velocity field \eqref{eq7} and \eqref{eq33} in Lemma \ref{lm6}, we obtain:
\begin{equation}
\label{eq47}
\begin{aligned}
{{{\dot{V}}}_{4}}\,\le   & -\alpha \left( t \right)\mathbf{e}_{p}^{\top}\pmb{\psi }{{\mathbf{e}}_{p}}+ \left| \alpha \left( t \right)-1 \right|\left\| \mathbf{e}_{p}^{\top} \right\|\left\| {{{\mathbf{\dot{q}}}}_{d}} \right\| \\ 
 &-2\kappa \alpha \left( t \right){{E}^{a}}\beta \left( t \right){{\mathbf{e}}_{v}}^{\top}{{\mathbf{M}}^{a}}\left( {{\mathbf{q}}^{a}} \right){{\mathbf{e}}_{v}}\\
            &+\frac{{{{\dot{k}}}^{a}}\left( {{\mathbf{q}}^{a}},{{{\mathbf{\dot{q}}}}^{a}} \right)}{4{{k}^{a}}\left( {{\mathbf{q}}^{a}},{{{\mathbf{\dot{q}}}}^{a}} \right)}{{\mathbf{e}}_{v}}^{\top}{{\mathbf{M}}^{a}}\left( {{\mathbf{q}}^{a}} \right){{\mathbf{e}}_{v}}+\left\| {{\mathbf{e}}_{v}}^{\top} \right\|\left\| \pmb{\tau }_{ext}^{a} \right\|. \\ 
\end{aligned}
\end{equation}
By using the results in Lemma \ref{lm3}, Theorem \ref{thm3}, and applying the inequalities in Lemma \ref{lm1} for $\left\| {{\mathbf{e}}_{v}}^{\top} \right\|\left\| \pmb{\tau }_{ext}^{a} \right\|$ with an arbitrary positive constant $\mu_2$, we have
\begin{equation}
\label{eq48}
\begin{aligned}
  & {{{\dot{V}}}_{4}}\,\le -\alpha \left( t \right){{l}_{\min }}\left( \pmb{\psi } \right){{{\left\| {{\mathbf{e}}_{p}} \right\|}^{2}}}+ \frac{{{\left\| {{\mathbf{e}}_{p}} \right\|}^{2}}}{2}+\frac{{{\left( \alpha \left( t \right)-1 \right)}^{2}}{{\left\| {{{\mathbf{\dot{q}}}}_{d}} \right\|}^{2}}}{2}\\ 
  & \,\,\,\,\,\,\,\,\,\,\,\,\,\,-2\kappa \alpha \left( t \right){{E}^{a}}\beta \left( t \right)\mathbf{e}_{v}^{\top}{{\mathbf{M}}^{a}}\left( {{\mathbf{q}}^{a}} \right){{\mathbf{e}}_{v}} \\
 & \,\,\,\,\,\,\,\,\,\,\,\,\,\,+\frac{{{\mu_2\left\| {{\mathbf{e}}_{v}} \right\|}^{2}}}{2}+\frac{\left\| \pmb{\tau }_{ext}^{a} \right\|^{2}}{2\mu_2}\\
 &\,\,\,\,\,\,\,\,\,\,\,\,\,\,+{{{\frac{\left\| \pmb{\tau }_{ext}^{a} \right\|}{2}}}\sqrt{\frac{ k_{d}^{a}+\delta_3}{2{{l}_{\min }}\left( {{\mathbf{M}}^{a}}\left( {{\mathbf{q}}^{a}} \right) \right){{\left( k_{d}^{a}-\delta_2 \right)}^{2}}}}}\mathbf{e}_{v}^{\top}\mathbf{M}\left( {{\mathbf{q}}^{a}} \right){{\mathbf{e}}_{v}} \\  
 & \,\,\,\,\,\,\,\le -\chi_1\frac{{{\left\| {{\mathbf{e}}_{p}} \right\|}^{2}}}{2}+\frac{{{\left( \alpha \left( t \right)-1 \right)}^{2}}{{\left\| {{{\mathbf{\dot{q}}}}_{d}} \right\|}^{2}}}{2} \\ 
 & \,\,\,\,\,\,\,\,\,\,\,\,\,\,+\frac{\left\| \pmb{\tau }_{ext}^{a} \right\|^{2}}{2\mu_2}-\chi_2 \frac{\mathbf{e}_{v}^{\top}{{\mathbf{M}}^{a}}\left( {{\mathbf{q}}^{a}} \right){{\mathbf{e}}_{v}}}{2}, \\ 
\end{aligned}
\end{equation}
where the term $\beta \left( t \right)=1-\frac{\mathbf{e}_{v}^{\top}{{\mathbf{M}}^{a}}\left( {{\mathbf{q}}^{a}} \right){{\mathbf{e}}_{v}}}{8{{k}^{a}}\left( {{{\mathbf{\dot{q}}}}^{a}},{{\mathbf{q}}^{a}} \right)}=1-V_3\ge {1-{{\bar{V}}_{3}}}>0$ by choosing $1-\beta \left( t=0 \right)\le {{\bar{V}}_{3}}$, as mentioned in Lemma \ref{lm7}. Furthermore, the control parameters are designed such as ${{\chi }_{1}}=2\alpha \left( t \right){{l}_{\min }}\left( \pmb{\psi } \right)-1>0$ and $\chi_2 =4\kappa \left( {1-{{\bar{V}}_{3}}} \right)\sqrt{{{E}^{a}}\left( k_{d}^{a}-\delta_2 \right)}-\frac{\mu_2}{{{l}_{\min }}\left( {{\mathbf{M}}^{a}}\left( {{\mathbf{q}}^{a}} \right) \right)}-\sqrt{n}{{\left\| \pmb{\tau }_{ext}^{a} \right\|}_{\infty }}\sqrt{\frac{k_{d}^{a}+\delta_3}{2{{l}_{\min }}\left( {{\mathbf{M}}^{a}}\left( {{\mathbf{q}}^{a}} \right) \right){{\left( k_{d}^{a}-\delta_2 \right)}^{2}}}}>0$ with $\left\| \pmb{\tau }_{ext}^{a} \right\|\le \sqrt{n}{{\left\| \pmb{\tau }_{ext}^{a} \right\|}_{\infty }},\forall \pmb{\tau }_{ext}^{a}\in {{\mathbb{R}}^{n+1}}$. By considering the parameter such that $\Xi =\min \left\{ \chi_1,\chi_2  \right\}$ and using the inequality of norm $\left\| {{{\mathbf{\dot{q}}}}_{d}} \right\|\le \sqrt{n}{{\left\| {{{\mathbf{\dot{q}}}}_{d}} \right\|}_{\infty }},\,\forall {{\mathbf{\dot{q}}}_{d}}\in {{\mathbb{R}}^{n}}$, it is straightforward to derive the following formulation.
\begin{equation}
\label{eq49}
\begin{aligned}
  & {{{\dot{V}}}_{4}}\le \,-\Xi {{V}_{4}}+\frac{{{\left( \alpha \left( t \right)-1 \right)}^{2}}n\left\| {{{\mathbf{\dot{q}}}}_{d}} \right\|_{\infty }^{2}}{2}+\frac{{n}\left\| \pmb{\tau }_{ext}^{a} \right\|_{\infty }^{2}}{2\mu_2}  \\ 
 & \,\,\,\,\,\,\,\le -{{\vartheta }_{2}}\Xi {{V}_{4}}-\,\left( 1-{{\vartheta }_{2}} \right)\Xi {{V}_{4}}+\Lambda  \\ 
\end{aligned}
\end{equation}
in which $\vartheta_2 \in \left( 0,1 \right)$ and $\Lambda=\frac{{{\left( \sqrt{\frac{k_{d}^{a}+\delta_3}{{{E}^{a}}}}-1 \right)}^{2}}n\left\| {{{\mathbf{\dot{q}}}}_{d}} \right\|_{\infty }^{2}}{2}+\frac{{n}\left\| \pmb{\tau }_{ext}^{a} \right\|_{\infty }^{2}}{2\mu_2}$ with the bounded term $\alpha \left( t \right)$ in Lemma \ref{lm4}. For ${{V}_{4}}>\frac{\Lambda}{{{\left( 1-{{\vartheta }_{2}} \right)}}\Xi }$, it results in ${{\dot{V}}_{4}}\,\le -{{{\vartheta }_{2}}}\Xi {{V}_{4}}$; therefore, based on Theorem \ref{thm1}, the decrease of the Lyapunov function $V_4$ drives the norm of the errors of the closed-loop system $\left\| {{\mathbf{e}}_{s}} \right\|$ into a bounded region ${B_{2}^{u}}=\sqrt{\frac{\Lambda}{{{\left( 1-{{\vartheta }_{2}} \right)}}\underline{\Upsilon } \Xi }}$ with  $\underline{\Upsilon } =\min \left\{ \frac{1}{2},\frac{{{l}_{\min }}\left( {{\mathbf{M}}^{a}}\left( {{\mathbf{q}}^{a}} \right) \right)}{2} \right\}$.

To determine the convergence time of the position and velocity errors toward the bounded region ${B_{2}^{u}}$, the time derivative of $V_4$ can be rewritten in the following form ${{\dot{V}}_{4}}\le -{{\vartheta }_{2}}\Xi \underline{\Upsilon }{{\left( B_{2}^{u} \right)}^{2}}$ for
${{V}_{4}}>\frac{\Lambda}{{{\left( 1-{{\vartheta }_{2}} \right)}}\Xi }$. By calculating its integral within the interval $\left[ 0,{{T}_{s}} \right]$, we obtain:
\begin{equation} \label{eq50}
\int\limits_{V_4\left(t=0\right)}^{{V_4\left(t={T}_{s}\right)}}{d{{V}_{4}}}\le -\int\limits_{t=0}^{{{T}_{s}}}{{{\vartheta }_{2}}\Xi \underline{\Upsilon }{{\left( B_{2}^{u} \right)}^{2}} dt}
\end{equation}
It is straightforward to drive the settling time $T_s$ in \eqref{eq51} by solving \eqref{eq50}.

Thus, the position and velocity errors converge to the UUB region $B_2$ after the finite settling time interval ${{T}_{s}}$. This completes the proof. \hfill $\blacksquare$
\end{theorem}

\begin{remark}
\label{remark3}
As established in Lemma \ref{lm6} and Theorem \ref{thm4}, the control parameters $\kappa$, $E^a$, and $\pmb{\psi}$ are designed to satisfy the condition in \eqref{eq43} and $\Xi>0$. In addition, these parameters enable adjusting the specification of the bounded region $B_2$ as well as the convergence rate $T_s$ of the system states. For example, larger values of the control parameters $\kappa$, $E^a$, and $\pmb{\psi}$ in $\Xi$ lead to a tighter bounded region $B_2$, thereby enhancing the task performance, while potentially increasing the settling time $T_s$; and vice versa. Thus, parameter selection should reflect application-specific priorities, such as performance requirements and responsiveness in dynamic environments. 

%However, excessively conservative or aggressive parameter choices may compromise control quality. Obviously, from the bounded region in Theorem \ref{thm4}, larger values of $\kappa$, $E^a$, and $\pmb{\psi}$ in $\Xi$ can reduce the bound size, but may induce chattering or actuator saturation. Conversely, selecting smaller values yields smoother responses and lower sensitivity while degrading performance. Furthermore, as mentioned in \eqref{eq51}, the convergence time can be decreased if the bounded region is relaxed; and vice versa. Thus, parameter selection should reflect application-specific priorities, such as responsiveness, safety margins, performance requirements, and actuator limitations.
\end{remark}

\begin{remark}
\label{remark4}
It is worth emphasizing that robust time-varying SPVFC \eqref{eq8} guarantees the convergence of both the energy level and the system states of the closed-loop mechanical system, as established in Theorems \ref{thm3} and \ref{thm4}, respectively. The convergence of the energy level is first investigated in Theorem \ref{thm3} to drive the closed-loop mechanical system toward the safe operating region $B_1$. However, the result from Theorem \ref{thm3} does not indicate whether the system is performing the desired task, as it only reflects the magnitude of motion. As a result, under Theorem \ref{thm3}, the ability of the system to perform the desired task is further assessed through the convergence of state errors in Theorem \ref{thm4}. Furthermore, although both Theorems are derived under the same control law, their results are complementary and do not exhibit conflict. This is because the outcomes in Theorem \ref{thm3} are regulated by the parameters $\mathbf{K_1}$, $\mathbf{K_2}$, ${{\zeta}_{1}}$, ${{\zeta}_{2}}$, ${{\eta }_{\max }}$, ${{\eta }_{\min }}$, ${{\delta }_{1}}$, ${{\delta }_{2}}$, ${{\delta }_{3}}$, ${{\delta }_{4}}$  in ${{\mathbf{S}}_{1}}\left( {{\mathbf{q}}^{a}},{{{\mathbf{\dot{q}}}}^{a}} \right)$ and ${{\mathbf{S}}_{2}}\left( {{\mathbf{q}}^{a}},{{{\mathbf{\dot{q}}}}^{a}} \right)$, whereas Theorem \ref{thm4} relies on different control functions ${{\mathbf{R}}_{1}}\left( {{\mathbf{q}}^{a}},{{{\mathbf{\dot{q}}}}^{a}},t \right)$ and ${{\mathbf{R}}_{2}}\left( {{\mathbf{q}}^{a}},{{{\mathbf{\dot{q}}}}^{a}},t \right)$ with the parameters $\kappa$, $E^a$, and $\pmb{\psi}$. In addition to the conditions established in Theorems \ref{thm3} and 
\ref{thm4}, actuator limitations must also be considered in practical implementations. From the controller in \eqref{eq8}, the control parameters $\mathbf{K_1}$, $\mathbf{K_2}$, ${{\zeta}_{1}}$, ${{\zeta}_{2}}$, ${{\eta }_{\max }}$, ${{\eta }_{\min }}$, $\kappa$, and $E^a$ directly influence the magnitude of the control input. Specifically, selecting excessively aggressive parameter values may improve responsiveness and reduce the bound size, but can also produce large control amplitudes or even actuator saturation. Conversely, smaller parameter values generally lead to smoother control actions and lower control effort, at the expense of slower convergence and degraded tracking performance. Therefore, the controller parameters implicitly determine the practical bounds of the control input and should be selected such that the resulting control signals remain within actuator limitations while guaranteeing the control objectives. 
\end{remark}

\subsection{Power Flow Constraint}

The theoretical evidence above reveals that the proposed approach can relax the conservative nature of PBC in a controlled manner while constraining the energy level of the closed-loop system and performing the desired task under external disturbances. However, high instantaneous power flows may compromise safety during physical contacts. Therefore, the subsection analyzes the power flow behaviors between the closed-loop system and its physical environment.

\begin{corollary}
 \label{coro1}
Under the same conditions as Theorems \ref{thm2}, \ref{thm3} and \ref{thm4}, the power flow between the closed-loop mechanical system and its physical environment is bounded by $\left| P\left( t \right) \right|\le \left\| {{{\mathbf{\dot{q}}}}^{\top}} \right\|{{\left\| \pmb{\tau }_{ext}^{a} \right\|}}\le {{B}_{3}}\sqrt{n}{{\left\| \pmb{\tau }_{ext}^{a} \right\|}_{\infty }}$ after $t\ge T_s$, where ${{B}_{3}}\ge 0$ satisfies $\left\| \mathbf{\dot{q}}\left( t \right) \right\|\le {{B}_{3}}\le \sqrt{3}\left( \left\| {{\mathbf{X}}_{d}} \right\|+B_{2}^{u} \right)$ with ${{\mathbf{X}}_{d}}={{\left[ \begin{matrix}
   \mathbf{q}_{d}^{\top}\left( t \right) & \alpha \left( t \right){{\mathbf{V}}^{\top}}\left( \mathbf{q},t \right) & \alpha \left( t \right){{V}_{f}}\left( \mathbf{q},t \right)  \\
\end{matrix} \right]}^{\top}}$. 

\textit{Proof:} Considering the following power flow formulation and combining with the augmented mechanical system \eqref{eq6} and the robust time-varying SPVFC \eqref{eq8}, we obtain
\begin{equation}
\label{eq52}
\begin{aligned}
  & P\left( t \right)={{{\mathbf{\dot{q}}}}^{\top}}{{\pmb{\tau }}_{ext}}={{{\mathbf{\dot{q}}}}^{a\top}}\pmb{\tau }_{ext}^{a} \\ 
 & \,\,\,\,\,\,\,\,\,\,\,\,\,={{{\dot{k}}}^{a}}\left( {{\mathbf{q}}^{a}},{{{\mathbf{\dot{q}}}}^{a}} \right)+{{{\mathbf{\dot{q}}}}^{a\top}}\,{{\mathbf{S}}_{1}}\left( {{\mathbf{q}}^{a}},{{{\mathbf{\dot{q}}}}^{a}} \right){{{\mathbf{\dot{q}}}}^{a}} \\ 
 & \,\,\,\,\,\,\,\,\,\,\,\,\,\,\,\,\,\,\,+{{{\mathbf{\dot{q}}}}^{a\top}}\,{{\mathbf{S}}_{2}}\left( {{\mathbf{q}}^{a}},{{{\mathbf{\dot{q}}}}^{a}} \right){{\left\lfloor {{{\mathbf{\dot{q}}}}^{a}} \right\rceil}^{\frac{{{\zeta}_{1}}}{{{\zeta}_{2}}}}}. \\ 
\end{aligned}
\end{equation}
in which $P\left( t \right)>0$ expresses the power flow from the physical environment to the mechanical system, $P\left( t \right)<0$ indicates the opposite direction of the power flow. According to Theorem \ref{thm3}, for $t<{{T}_{E}}$, the energy level ${{k}^{a}}\left( {{\mathbf{q}}^{a}},{{{\mathbf{\dot{q}}}}^{a}} \right)$ approaches ${{B}_{1}}$, which corresponds to the negative energy derivative ${{{\dot{k}}}^{a}}\left( {{\mathbf{q}}^{a}},{{{\mathbf{\dot{q}}}}^{a}} \right)$, when ${{k}^{a}}\left( {{\mathbf{q}}^{a}},{{{\mathbf{\dot{q}}}}^{a}} \right)>k_{d}^{a}+{{\delta }_{3}}$. Additionally, the terms ${{{\mathbf{\dot{q}}}}^{a\top}}{{\mathbf{S}}_{1}}\left( {{\mathbf{q}}^{a}},{{{\mathbf{\dot{q}}}}^{a}} \right){{{\mathbf{\dot{q}}}}^{a}}$ and ${{{\mathbf{\dot{q}}}}^{a\top}}{{\mathbf{S}}_{2}}\left( {{\mathbf{q}}^{a}},{{{\mathbf{\dot{q}}}}^{a}} \right){{\left\lfloor {{{\mathbf{\dot{q}}}}^{a}} \right\rceil}^{\frac{{{\zeta}_{1}}}{{{\zeta}_{2}}}}}$ in \eqref{eq52} are the positively defined values in the case of ${{k}^{a}}\left( {{\mathbf{q}}^{a}},{{{\mathbf{\dot{q}}}}^{a}} \right)>k_{d}^{a}+{{\delta }_{3}}$. Meanwhile, for ${{k}^{a}}\left( {{\mathbf{q}}^{a}},{{{\mathbf{\dot{q}}}}^{a}} \right)<k_{d}^{a}-{{\delta }_{2}}$, the differential of kinetic energy is positive, whereas both terms on the right-hand side of \eqref{eq52} are negative. As a result, the power flow $P\left( t \right)$ is gradually mitigated when $t<{{T}_{E}}$. It is worth noting that the compensation of the two terms ${{{\mathbf{\dot{q}}}}^{a\top}}{{\mathbf{S}}_{1}}\left( {{\mathbf{q}}^{a}},{{{\mathbf{\dot{q}}}}^{a}} \right){{{\mathbf{\dot{q}}}}^{a}}$ and ${{{\mathbf{\dot{q}}}}^{a\top}}{{\mathbf{S}}_{2}}\left( {{\mathbf{q}}^{a}},{{{\mathbf{\dot{q}}}}^{a}} \right){{\left\lfloor {{{\mathbf{\dot{q}}}}^{a}} \right\rceil}^{\frac{{{\zeta}_{1}}}{{{\zeta}_{2}}}}}$ is consistently regulated by the energy level through \eqref{eq9}, thereby preventing excessive compensation.

Until $t\ge {{T}_{E}}$, based on Theorem \ref{thm3} and Remark \ref{remark1}, \eqref{eq52} can be rewritten in the following formulation $P\left( t \right)={{{\dot{k}}}^{a}}\left( {{\mathbf{q}}^{a}},{{{\mathbf{\dot{q}}}}^{a}} \right)$ due to $k_{d}^{a}-{{\delta }_{2}}\le{{k}^{a}}\left( {{\mathbf{q}}^{a}},{{{\mathbf{\dot{q}}}}^{a}} \right)\le k_{d}^{a}+{{\delta }_{3}}$. Using the result in \eqref{eq28} from Lemma \ref{lm3}, we obtain:
\begin{equation}
\label{eq53}
\begin{aligned}
  & \left| P\left( t \right) \right|\le\left| {{{\dot{k}}}^{a}}\left( {{\mathbf{q}}^{a}},{{{\mathbf{\dot{q}}}}^{a}} \right) \right|\le \left\| {{{\mathbf{\dot{q}}}}^{a\top}} \right\|{{\left\| \pmb{\tau }_{ext}^{a} \right\|}} \\
  &\,\,\,\,\,\,\,\,\,\,\,\,\,\,\,\,\, \le \sqrt{n}{{\left\| \pmb{\tau }_{ext}^{a} \right\|}_{\infty }}\sqrt{\frac{2\left( k_{d}^{a}+\delta_3 \right)}{{{l}_{\min }}\left( {{\mathbf{M}}^{a}}\left( {{\mathbf{q}}^{a}} \right) \right)}} \\
\end{aligned}
\end{equation}
However, the bounded domain for power flow in \eqref{eq53} is relatively large, attributed to the use of the energy-bounded region of the augmented system, as presented in Theorem \ref{thm3}. According to Theorem \ref{thm4} after $t\ge{T_s}$ and the triangle inequality reverse form, we have
\begin{equation}
\label{eq54}
\left| \left\| \mathbf{X} \right\|-\left\| {{\mathbf{X}}_{d}} \right\| \right|\le \left\| {{\mathbf{e}}_{s}} \right\|\le B_{2}^{u}, \\
\end{equation}
in which $\mathbf{X}={{\left[ \begin{matrix}
   {{\mathbf{q}}^{\top}}\left( t \right) & {{{\mathbf{\dot{q}}}}^{\top}}\left( t \right) & {\dot{q}_{f}}\left( t \right)  \\
\end{matrix} \right]}^{\top}}$ and ${{\mathbf{X}}_{d}}={{\left[ \begin{matrix}
   \mathbf{q}_{d}^{\top}\left( t \right) & \alpha \left( t \right){{\mathbf{V}}^{\top}}\left( \mathbf{q},t \right) & \alpha \left( t \right){{V}_{f}}\left( \mathbf{q},t \right)  \\
\end{matrix} \right]}^{\top}}$. Based on \eqref{eq54} and using the Cauchy-Schwarz’s inequality in \cite{c31}, \eqref{eq54} can be rewritten as
\begin{equation}
\label{eq55}
\left\| \mathbf{q}\left( t \right) \right\|+\left\| \mathbf{\dot{q}}\left( t \right) \right\|+\left| {{{\dot{q}}}_{f}}\left( t \right) \right|\le \sqrt{3}\left\| \mathbf{X} \right\|\le \sqrt{3}\left( \left\| {{\mathbf{X}}_{d}} \right\|+B_{2}^{u} \right). \\
\end{equation}
From \eqref{eq55}, there exists a ${{B}_{3}}\ge 0$ such that $\left\| \mathbf{\dot{q}}\left( t \right) \right\|\le {{B}_{3}}\le \sqrt{3}\left( \left\| {{\mathbf{X}}_{d}} \right\|+B_{2}^{u} \right)$. In this manner, the bounded region of power flow can be presented in the following formulation $\left| P\left( t \right) \right|\le \left\| {{{\mathbf{\dot{q}}}}^{\top}} \right\|{{\left\| \pmb{\tau }_{ext}^{a} \right\|}}\le {{B}_{3}}\sqrt{n}{{\left\| \pmb{\tau }_{ext}^{a} \right\|}_{\infty }}$.  
This completes the proof.  \hfill $\blacksquare$
\end{corollary}

\begin{remark}
\label{remark5}
In comparison with the classical PBC in the context of physical contacts introduced in \cite{c18,c19,c20,c21,c22,c23,c24}, the proposed approach explicitly addresses the conservativeness issue by allowing a controlled relaxation of passivity, as detailed in Section \ref{3b}, rather than enforcing purely passive behaviors. In contrast to the switching passivity method proposed in \cite{c28}, which may cause discontinuities and undesirable oscillations, this work ensures a smooth and continuous transition between passive and non-passive behaviors. Notably, while the existing papers \cite{c25,c26,c27} ignore external perturbations commonly encountered in unstructured environments, the present work rigorously analyzes the convergence of both the energy level and the system states toward the bounded regions $B_1$ and $B_2$ in the presence of external forces, as established in Theorems \ref{thm3} and \ref{thm4}. This ensures reliable task execution under realistic physical interactions. Furthermore, the instantaneous power flow constraint is also analyzed in Corollary \ref{coro1}  to further enhance physical interaction safety.
\end{remark}

% \begin{equation}
% \label{eq111}
% {{\mathbf{M}}^{a}}\left( {{\mathbf{q}}^{a}} \right){{\mathbf{\ddot{q}}}^{a}}+{{\mathbf{C}}^{a}}\left( {{\mathbf{q}}^{a}},{{{\mathbf{\dot{q}}}}^{a}} \right){{\mathbf{\dot{q}}}^{a}}={{\pmb{\tau }}^{a}}+\pmb{\tau }_{ext}^{a},
% \end{equation}

\section{Simulation Example}
Numerical simulation examples of the robot manipulator are given in this section to validate the effectiveness of the proposed method.
\subsection{Simulation Settings}
\label{4a}
\begin{figure}[t]
    \centering
\includegraphics[width=0.8\linewidth]{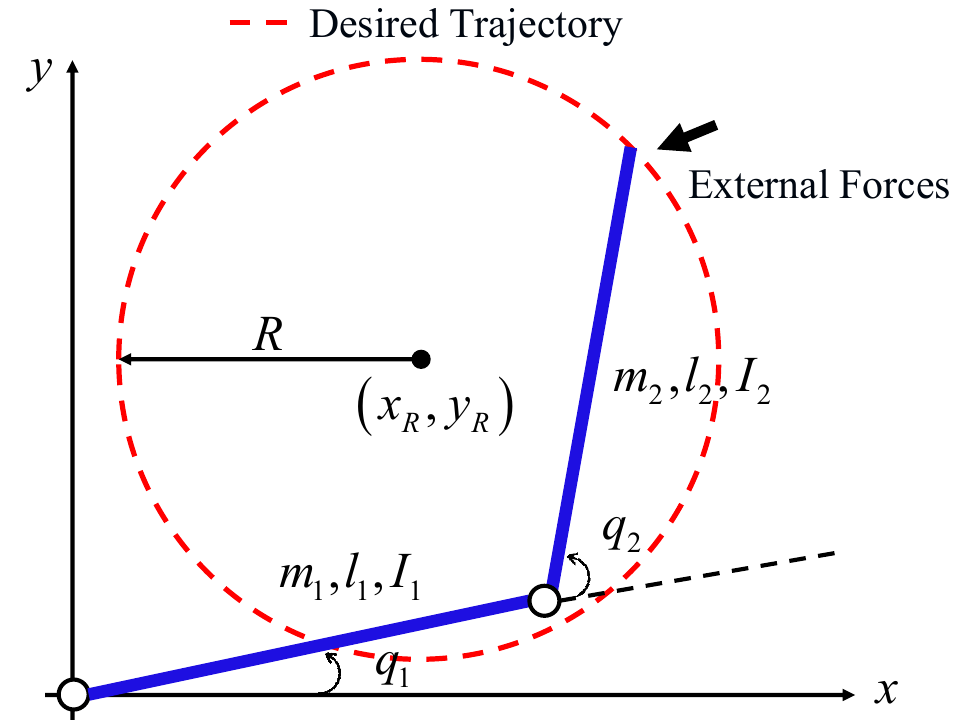}
    \caption{Simulation settings for two-link robot manipulator.}
    \label{fig1}
\end{figure}
A fully actuated two-link robot manipulator is considered on a horizontal surface, introduced in Fig. \ref{fig1}. Accordingly, the kinematics of the robot are presented in the following equation.
\begin{equation}
\mathbf{x}\left( \mathbf{q} \right)=\left[ \begin{matrix}
   {{l}_{1}}\cos \left( {{q}_{1}} \right)+{{l}_{2}}\cos \left( {{q}_{1}}+{{q}_{2}} \right)  \\
   {{l}_{1}}\sin \left( {{q}_{1}} \right)+{{l}_{2}}\sin \left( {{q}_{1}}+{{q}_{2}} \right)  \\
\end{matrix} \right],
\label{eq56}
\end{equation}
in which $\mathbf{x}\left( \mathbf{q} \right)={{\left[ \begin{matrix}
   x & y  \\
\end{matrix} \right]}^{\top}}
$ and $\mathbf{q}={{\left[ \begin{matrix}
   q_1 & q_2 \\
\end{matrix} \right]}^{\top}}
$ represent the position of the end-effector and the angles of the links, respectively. The Jacobian matrix is computed as $\mathbf{J}\left( \mathbf{q} \right)=\frac{\partial \mathbf{x}}{\partial \mathbf{q}}$. The inertia matrix and the Coriolis, centrifugal matrix in the form of \eqref{eq1} are modeled as
\begin{equation}
\begin{matrix}
   \mathbf{M}\left( \mathbf{q} \right)=\left[ \begin{matrix}
   {{M}_{1}}+{{M}_{2}}+2R\cos \left( {{q}_{2}} \right) & {{M}_{2}}+R\cos \left( {{q}_{2}} \right)  \\
   {{M}_{2}}+R\cos \left( {{q}_{2}} \right) & {{M}_{2}}  \\
\end{matrix} \right],  \\
   \mathbf{C}\left( \mathbf{q},\mathbf{\dot{q}} \right)=\left[ \begin{matrix}
   -R{{{\dot{q}}}_{2}}\sin \left( {{q}_{2}} \right) & -R\left( {{{\dot{q}}}_{1}}+{{{\dot{q}}}_{2}} \right)\sin \left( {{q}_{2}} \right)  \\
   R{{{\dot{q}}}_{1}}\sin \left( {{q}_{2}} \right) & 0  \\
\end{matrix} \right],  \\
\end{matrix}
\label{eq57}
\end{equation}
where ${{M}_{1}}=l_{1}^{2}\left( \frac{{{m}_{1}}}{4}+{{m}_{2}} \right)+{{I}_{1}}$, ${{M}_{2}}=\frac{{{m}_{2}}l_{2}^{2}}{4}+{{I}_{2}}$, and $R=\frac{{{m}_{2}}{{l}_{1}}{{l}_{2}}}{2}$. The mass, length, and inertia of the two links are accordingly installed as: ${{m}_{1}}={{m}_{2}}=3.05\text{kg}$, ${{l}_{1}}={{l}_{2}}=0.5\text{m}$, and ${{I}_{2}}={{I}_{1}}=0.0414\text{kg}{{\text{m}}^{\text{2}}}$. The mass of the fictitious flywheel is ${{m}_{f}}=10\text{kg}$. The initial conditions are set to ${{\mathbf{q}}^{a}}\left( 0 \right)={{\left[ \begin{matrix}
   1.29 & -1.67 & 0  \\
\end{matrix} \right]}^{\top}}\text{rad}
$, ${{\mathbf{\dot{q}}}^{a}}\left( 0 \right)={{\left[ \begin{matrix}
   0.5 & 0.5 & 1.3  \\
\end{matrix} \right]}^{\top}}\text{rad/s}
$, and $\mathbf{x}\left( 0 \right)={{\left[ \begin{matrix}
   0.6 & 0.3  \\
\end{matrix} \right]}^{\top}}\text{m}
$.

From Fig. \ref{fig1}, the desired trajectory is considered as the circular trajectory with the radius $R=0.3\,\text{m}$ and the center $x_R=y_R=0.35\,\text{m}$. In addition, the control parameters of the proposed method in \eqref{eq7}-\eqref{eq9} are installed as follows: $\pmb{\psi}=30\mathbf{I}_2$, ${{E}^{a}}=10\,\text{J}$, ${{\zeta}_{1}}=3$, ${{\zeta}_{2}}=5$, $k_{a}^{d}=10\text{J}$,
${{\mathbf{K}}_{1}}=2{{\mathbf{I}}_{3}}$, ${{\mathbf{K}}_{2}}=2{{\mathbf{I}}_{3}}$, $\kappa =0.5$, ${{\delta }_{1}}=\delta_4=0.01$, ${{\delta }_{2}}=\delta_3=1$, ${{\eta }_{\min }}={{\eta }_{\max }}=1$.
\subsection{Numerical Simulation under External Disturbances}
\label{4b}
%and its adaptability to control gain adjustment.
In this section, three distinct types of external disturbances from the physical environment are considered to evaluate the effectiveness of the proposed method. Specifically, 
\begin{itemize}
\item{Disturbance 1 (Artificial disturbances): Periodic external disturbances 
\begin{equation}
{{\pmb{\tau }}_{ext}}={{\left[ \begin{matrix}
   0.5\cos \left( 2t \right) & 0.5\sin \left( \frac{3}{2}t \right)  \\
\end{matrix} \right]}^{\top}},
\label{eq58}
\end{equation}
are applied to simulate oscillatory interactions with the physical environment.} 
\item{Disturbance 2 (Friction disturbances): Combined viscous and Coulomb friction is considered to mimic resistive forces from physical contact with the environment.
\begin{equation}
\pmb{\tau}_{{ext}} = 
\begin{bmatrix}
   -\text{sgn}(\dot{q}_1)\left( c_1 |\dot{q}_1| + \bar{\tau}_1 \right) \\
   -\text{sgn}(\dot{q}_2)\left( c_2 |\dot{q}_2| + \bar{\tau}_2 \right)
\end{bmatrix},
\label{eq59}
\end{equation}
with ${{c}_{1}}={{c}_{2}}=0.1,\,{{\bar{\tau }}_{1}}={{\bar{\tau }}_{2}}=0.5$.}
\item{Disturbance 3 (Pushing disturbances): The disturbances have the same direction as the robot motion.
\begin{equation}
{{\pmb{\tau }}_{ext}}={{\left[ \begin{matrix}
   {{{\bar{\tau }}}_{3}}\frac{{{{\dot{q}}}_{1}}}{\left| {{{\dot{q}}}_{1}} \right|} & {{{\bar{\tau }}}_{4}}\frac{{{{\dot{q}}}_{2}}}{\left| {{{\dot{q}}}_{2}} \right|}  \\
\end{matrix} \right]}^{\top}},
\label{eq60}
\end{equation}
with ${{\bar{\tau }}_{3}}={{\bar{\tau }}_{3}}=0.9$.}
\end{itemize}

\begin{figure*}
    \centering
    \includegraphics[width=0.9\linewidth]{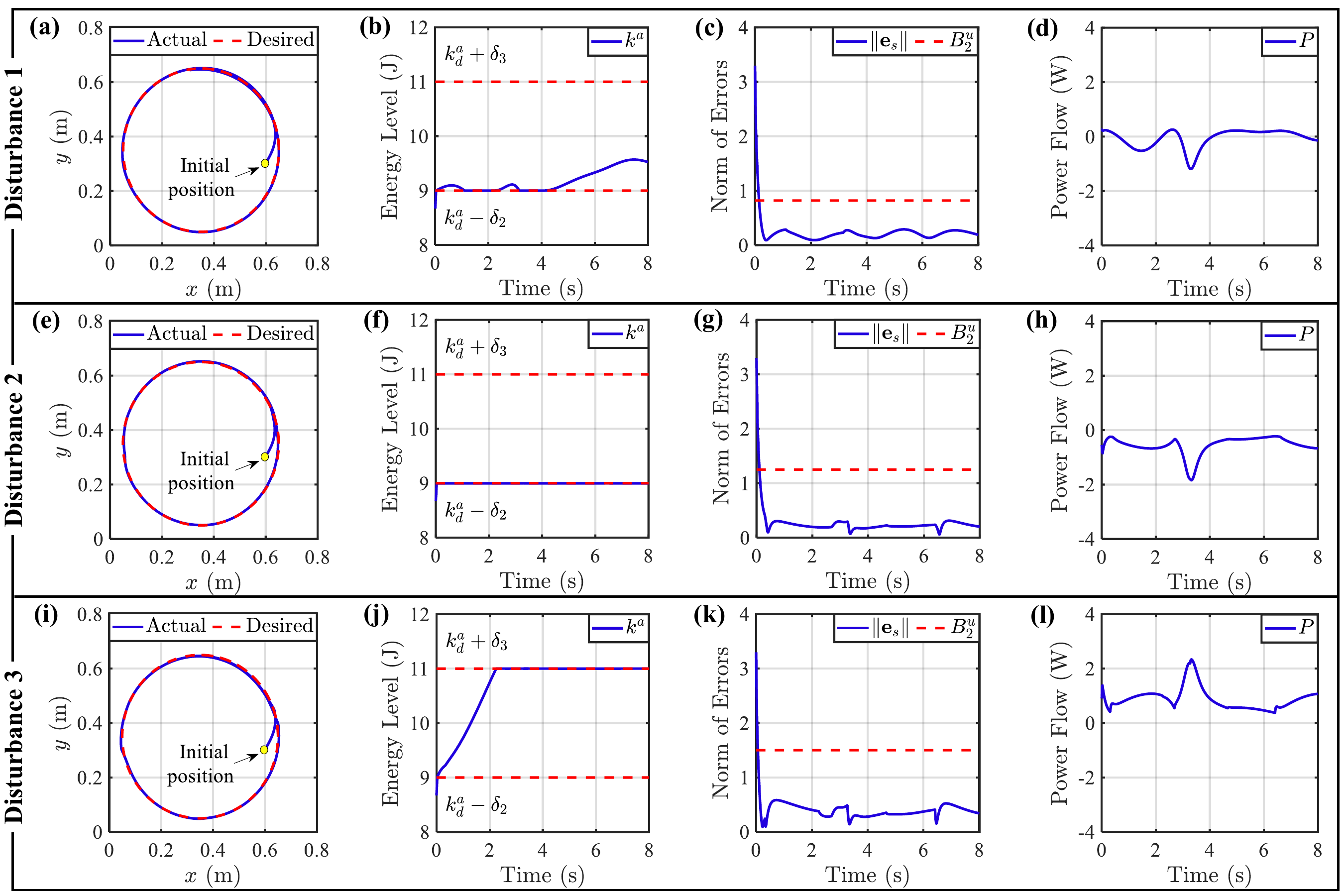}
    \caption{Responses of the closed-loop system in Section \ref{4b}, where (a), (e), (i) are actual and desired trajectories; (b), (f), (j) are energy levels; (c), (g), (k) are norms of tracking errors; (d), (h), (l) are power flows under Disturbances 1, 2, and 3, respectively.}
    \label{f2}
\end{figure*}

Fig. \ref{f2} shows the simulation results under three different external disturbances in \eqref{eq58}, \eqref{eq59}, and \eqref{eq60}. Specifically, the trajectory tracking performance of the robot manipulator under Disturbances 1, 2, and 3 are illustrated in Figs. \ref{f2}(a), (e), (i), respectively. Although the robot system is affected by external disturbances from the physical environment, it demonstrates robust capability in tracking the desired trajectory. In Figs. \ref{f2}(b), (f), (j), the energy levels of the closed-loop system converge to the bounded region $\left[ k_{d}^{a}+{{\delta }_{2}},k_{d}^{a}+{{\delta }_{3}} \right]=\left[ 9,11 \right]\text{J}$ within a finite-time interval (approximately 0.1s) in all external disturbance scenarios, as proven in Theorem \ref{thm3}. For Disturbance 1, the energy level in Fig. \ref{f2}(b) exhibits variation within the bounded domain due to the artificial disturbances in \eqref{eq58} constructed by sine and cosine functions. Fig. \ref{f2}(f) under Disturbance 2 maintains the energy level at the lower bound because of frictional forces \eqref{eq59} acting in the direction opposite to the robot's motion, whereas Fig. \ref{f2}(j) under Disturbance 3 tends to approach the upper bound as a result of external disturbances \eqref{eq60} that reinforce the movement. Additionally, the results in Figs. 2(c), (g), and (k) demonstrate the correctness as stated in Theorem 4. The norms of tracking errors, including position and velocity errors, converge to the bounded regions $B_2\le B_{2}^{u}$ of Disturbances 1, 2, and 3 within finite-time intervals and remain in these domains. Due to the different magnitudes of external disturbances, the upper bounds and the settling time of Disturbances 1, 2, and 3 are $ B_{2}^{u}=0.82$, $1.25$, $1.50$, and $T_s=0.24\text{s}$, $0.14\text{s}$, $0.05\text{s}$, respectively. As indicated by \eqref{eq51} in Proof of Theorem \ref{thm4}, the settling time $T_s$ is inversely related to the error bound $B_{2}^{u}$, which accounts for the observed convergence rates in Disturbances 1, 2, and 3. Furthermore, the power flows between the closed-loop robotic system and its physical environment for Disturbances 1, 2, and 3 are also shown in Figs. \ref{f2}(d), (h), (l). From theoretical viewpoints, the power flows in all external disturbance scenarios satisfy the bounded region in Corollary \ref{coro1}, which is derived from the results of Theorems \ref{thm2}, \ref{thm3}, and \ref{thm4}. Due to the influence of external disturbances, the power flow under Disturbance 1 has directional variations, whereas under Disturbances 2 and 3, unidirectional power flows are observed, from the robotic system to the environment in Disturbance 2 $\left(P\left( t \right)<0\right)$, and from the environment to the robotic system in Disturbance 3 $\left(P\left( t \right)>0\right)$.

%In each example, we consider three scenarios as:
%begin{itemize}
   % \item{The first scenario: The external disturbances in \eqref{eq58}, \eqref{eq59}, \eqref{eq60}, and the sum of them are applied directly.} 
   % \item{The second scenario: The higher-magnitude disturbances are considered by doubling the disturbance magnitudes of the first scenario in each example.} 
   % \item{The third scenario:} 
%\end{itemize}
%As indicated by \eqref{eq51} in Proof of Theorem \ref{thm4}, the settling time decreases with increasing control parameters and disturbance magnitudes, which accounts for the observed convergence rates in Scenarios 1.1, 1.2, and 1.3. 

\subsection{Numerical Simulation of Control Parameter Effect}
\label{4c}
To investigate the effect of the control parameter, the following conditions are first considered under Disturbance 1 in \eqref{eq58}.  
\begin{itemize}
\item{Condition 1: Nominal control parameters as defined in Section \ref{4a} under the artificial disturbances \eqref{eq58}.} 
\item{Condition 2: The control parameter $\kappa=1.0$ rather than $\kappa=0.5$ in Section \ref{4a} under the disturbances \eqref{eq58}.}
\item{Condition 3: The control parameter $\kappa = 1.0$ under disturbances of 1.5-times amplified amplitude in \eqref{eq58}.}
\end{itemize}
The increase in the control parameter $\kappa$ is motivated by two primary considerations. First, according to the results in Section \ref{4b}, the energy levels are bounded within the operational region under all scenarios, thereby indicating that tuning should prioritize parameters related to tracking performance. Second, as demonstrated in the proof of Theorem \ref{thm4}, the parameter $\kappa$ plays a vital role in improving tracking accuracy, as it directly influences both the error bound and the settling time of the system.
\begin{figure*}
    \centering
    \includegraphics[width=0.9\linewidth]{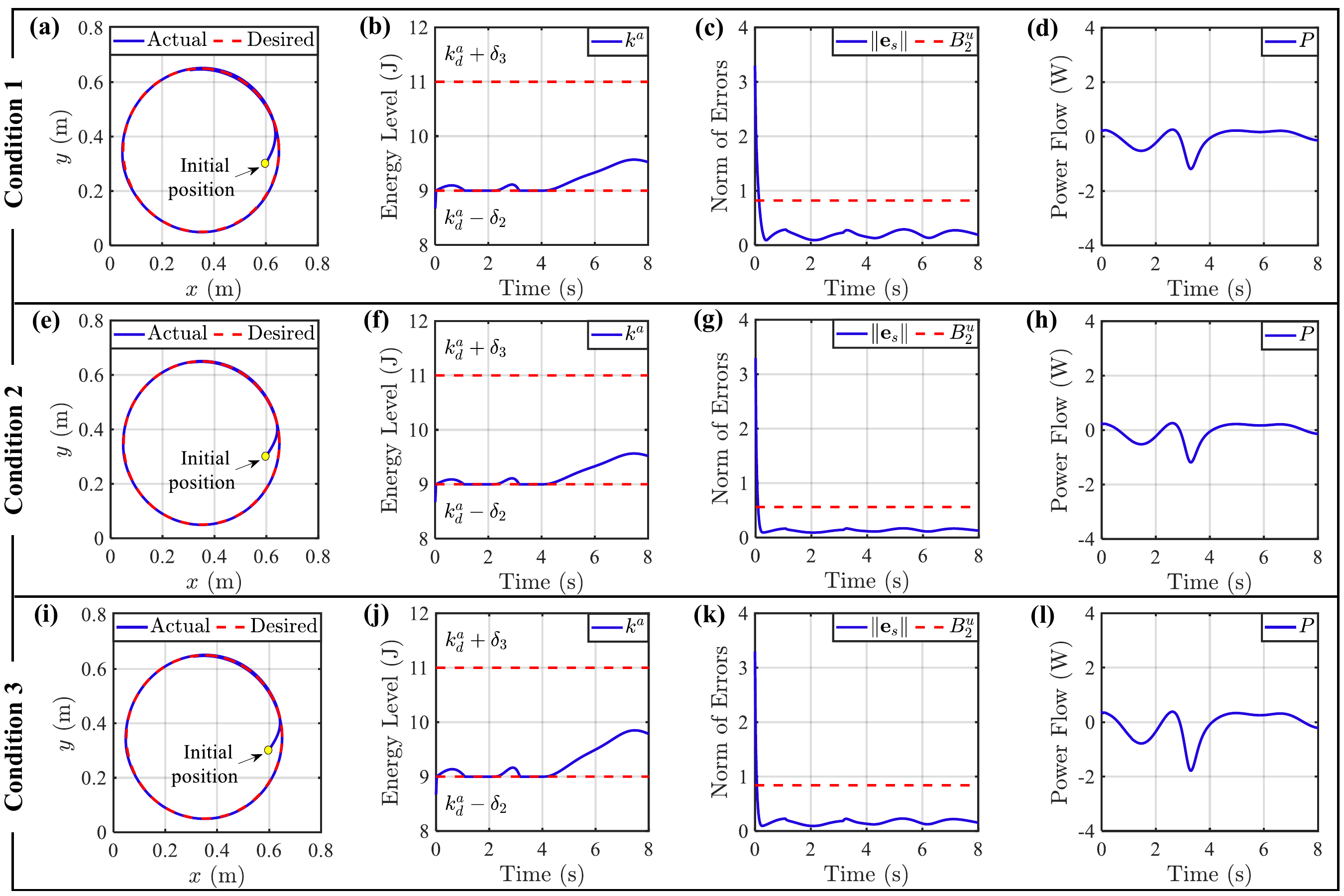}
    \caption{Responses of the closed-loop system under Disturbance 1 with control parameter adjustment, where (a), (e), (i) are actual and desired trajectories; (b), (f), (j) are energy levels; (c), (g), (k) are norms of tracking errors; (d), (h), (l) are power flows of Conditions 1, 2, and 3, respectively.}
    \label{f3}
\end{figure*}

The simulation results of Conditions 1, 2, and 3 under Disturbance 1 are presented in Fig. \ref{f3}. In comparison with the results in Figs. \ref{f3}(a), (c) of Condition 1, the tracking performance in Figs. \ref{f3} (e), (g), and (i), (k) of Conditions 2 and 3 are significantly enhanced even under the higher-magnitude disturbances in Condition 3. Specifically, the norm of tracking errors under Condition 2 converges to a bounded region $B_2\le B_{2}^{u}=0.56$, which is approximately 1.5-fold as tight as the region $B_2\le B_{2}^{u}=0.82$ in Condition 1. For Condition 3, despite the presence of intensified disturbances, the norm of tracking errors is bounded within $B_2 \le B_2^{u} = 0.84$ after a finite-time interval, offering a similar bound as in Condition 1. Furthermore, Conditions 2 and 3 exhibit faster convergence, with settling times of $0.12\text{s}$ and $0.05\text{s}$, respectively, rather than $0.24\text{s}$ observed in Condition 1. Accordingly, the application of the increased control parameter in Conditions 2 and 3 leads to notable reductions in tracking errors, bounded region sizes, and convergence times, consistent with the properties and analytical insights established in Theorem \ref{thm4} and Remark \ref{remark3}. Meanwhile, the energy levels across all three conditions in Figs. \ref{f3}(b), (f), (j) consistently satisfy the prescribed bounded working region $\left[ k_{d}^{a}+{{\delta }_{2}},k_{d}^{a}+{{\delta }_{3}} \right]=\left[ 9,11 \right]\text{J}$, regardless of variations in the control parameter $\kappa$ and disturbance magnitude, as theoretically established in Theorems \ref{thm2}, \ref{thm3} and further corroborated by the analyses in Remarks \ref{remark2}, \ref{remark4}. Moreover, the power flow in Fig. \ref{f3}(l) of Condition 3 is roughly 1.5 times greater than that in Figs. \ref{f3}(d), (h) of Conditions 1, 2, primarily owing to the proportional increase in disturbance magnitudes and the linear nature of the power flow formulation in \eqref{eq52}. The power flows in Fig. \ref{f3}(d) of Condition 1 and Fig. \ref{f3}(h) of Condition 2 are nearly identical because they are subjected to the same external disturbances and exhibit only minor differences in velocity tracking errors.

Note that the simulation results of Conditions 1, 2, and 3 under Disturbances 2 and 3 are also presented in Figs. \ref{f6} and \ref{f7} of Appendix. It is necessary to state that the aforementioned evaluations for Conditions 1, 2, and 3 under Disturbance 1 are also observed in Figs. \ref{f6} and \ref{f7}.

\subsection{Comparative Simulations}
\label{4d}

\begin{figure*}
    \centering
    \includegraphics[width=0.9\linewidth]{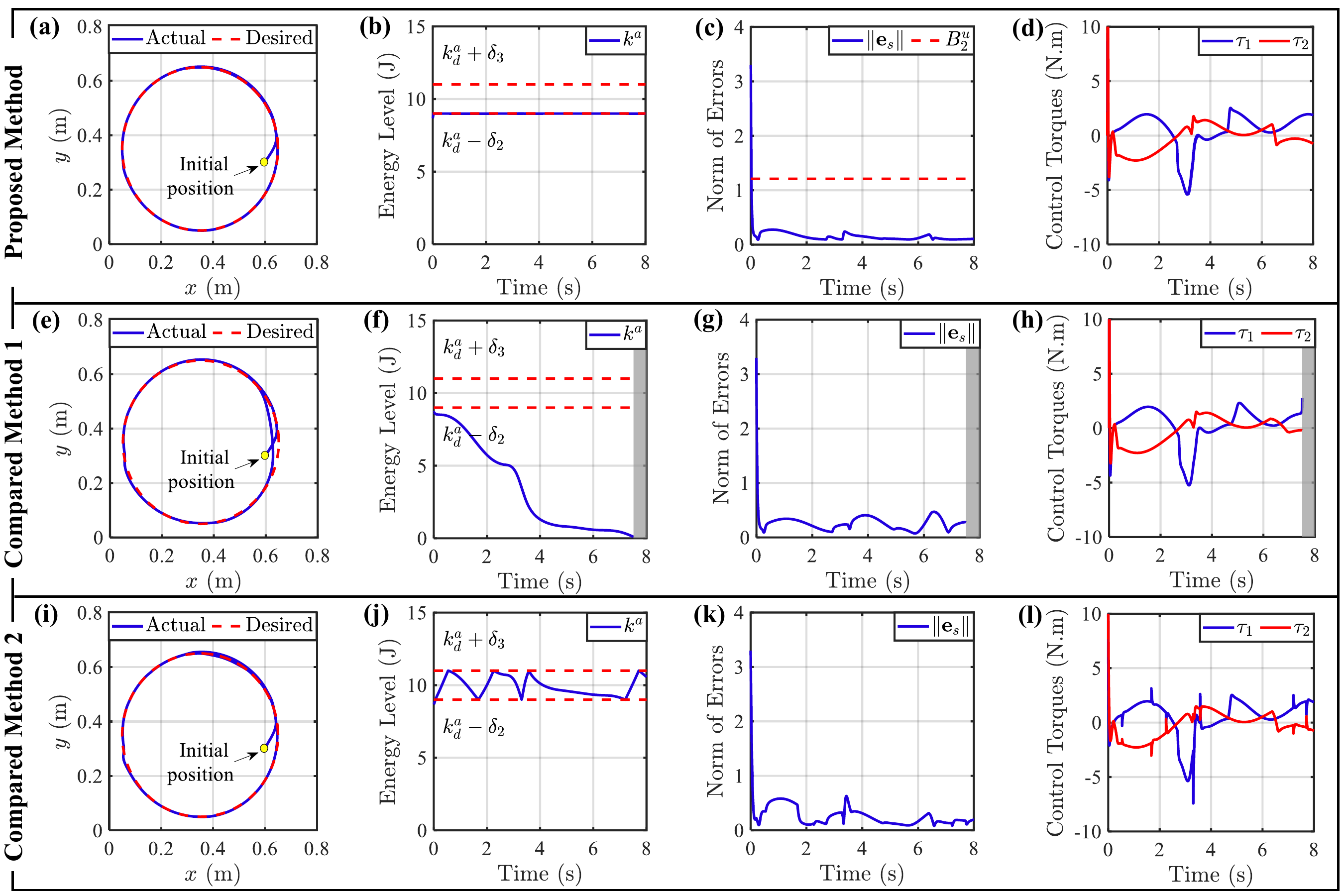}
    \caption{Comparative results under the combined effect of doubled Disturbances 1 and 2, where (a), (e), (i) are actual and desired trajectories; (b), (f), (j) are energy levels; (c), (g), (k) are norms of tracking errors; (d), (h), (l) are robot control torques of three methods.}
    \label{f4}
\end{figure*}
\begin{figure*}
    \centering
    \includegraphics[width=0.9\linewidth]{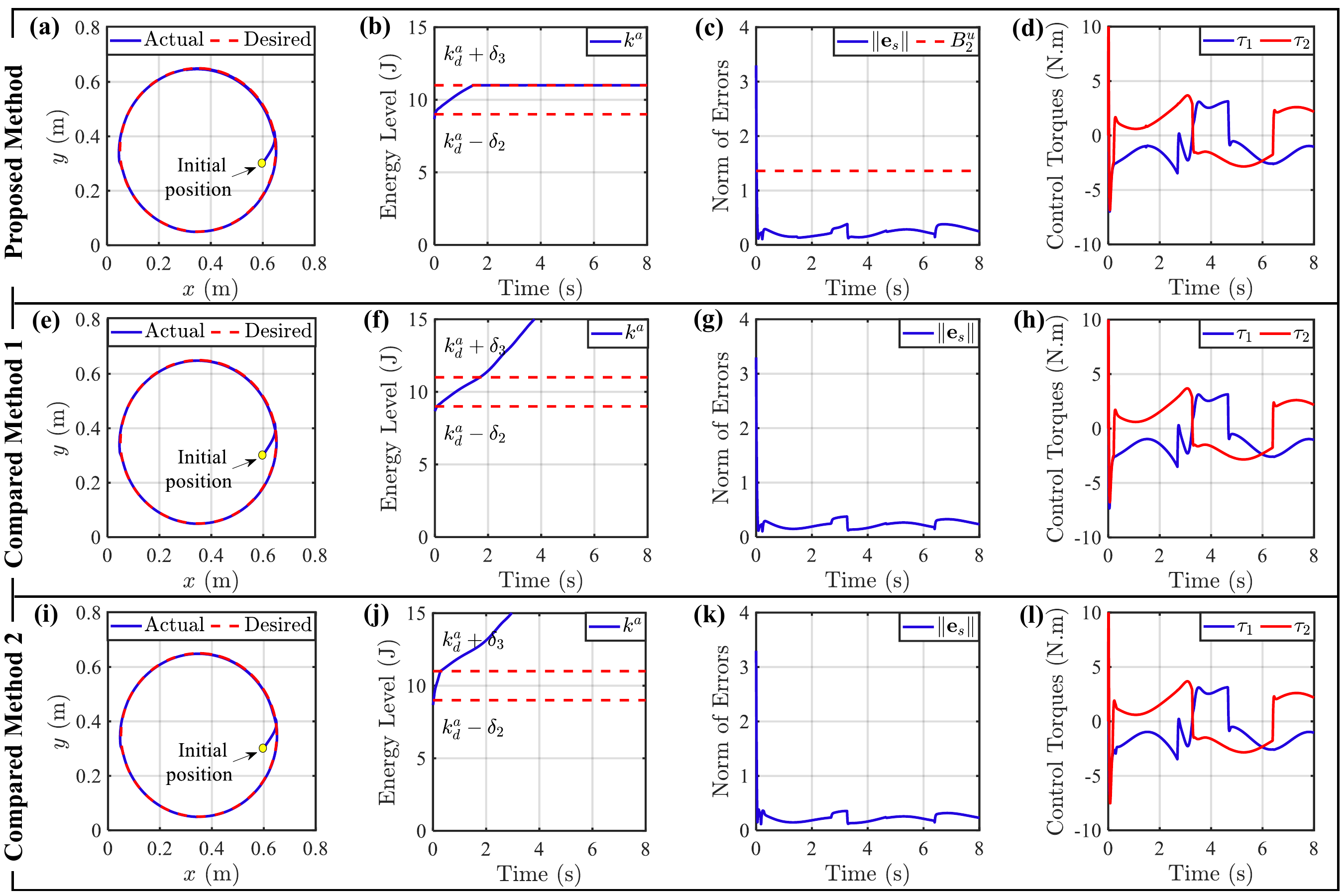}
    \caption{Comparative results under the combined effect of doubled Disturbances 1 and 3, where (a), (e), (i) are actual and desired trajectories; (b), (f), (j) are energy levels; (c), (g), (k) are norms of tracking errors; (d), (h), (l) are robot control torques of three methods.}
    \label{f5}
\end{figure*}

To highlight the effectiveness of the proposed method in guaranteeing task performance and safe interaction, comparative numerical simulations are implemented in this section. Specifically,
\begin{itemize}
\item{Proposed Method: The continuous SPVFC in \eqref{eq8}.} 
\item{Compared Method 1: The original PVFC in \cite{c22} with the time-varying velocity field \eqref{eq7}.}
\item{Compared Method 2: The semi-passive control framework in \cite{c28}, switching between conservative and nominal controllers. To ensure a fair comparison, the conservative controller is selected as the original PVFC \cite{c22} to preserve energetic passivity, whereas the nominal controller, defined as ${{\pmb{\tau }}^{a}}\left( {{\mathbf{q}}^{a}},{{{\mathbf{\dot{q}}}}^{a}},t \right)={{\mathbf{R}}_{1}}\left( {{\mathbf{q}}^{a}},{{{\mathbf{\dot{q}}}}^{a}},t \right){{{\mathbf{\dot{q}}}}^{a}}+{{\mathbf{R}}_{2}}\left( {{\mathbf{q}}^{a}},{{{\mathbf{\dot{q}}}}^{a}},t \right){{{\mathbf{\dot{q}}}}^{a}}+\mathbf{K}{{{\mathbf{\dot{q}}}}^{a}}$ with the injected energy term $\mathbf{K}{{{\mathbf{\dot{q}}}}^{a}}$, is adopted to achieve the nominal performance. The method switches to the nominal mode when the energy falls below $k_{d}^a-\delta_2$, and switches to the conservative mode when it exceeds $k_{d}^a+\delta_3$; otherwise, the current control mode is maintained.}
\end{itemize}
It should be noted that all three methods use the parameters in Section \ref{4a}, $\kappa=2.0$, and $\mathbf{K}=\mathbf{K}_1=\mathbf{K}_2=2{{\mathbf{I}}_{3}}$.

Fig. \ref{f4} presents comparative results of the three methods under the combined disturbance, where the magnitudes of Disturbances 1 \eqref{eq58} and 2 \eqref{eq59} are doubled. Specifically, the effectiveness of the Proposed Method further demonstrates the statements in Section \ref{4b}, \ref{4c} about robustness, energy constraint, and tracking error boundedness via the results in Figs. \ref{f4}(a), (b), (c). Meanwhile, the Compared Method 1 exhibits an energy-depletion phenomenon at $t\approx7.5 \text{s}$ in Fig. \ref{f4}(f) due to the external disturbances and the absence of an energy-compensation mechanism. Consequently, its task performance is significantly degraded, as illustrated in Figs. \ref{f4}(e), (g), and the robotics system is terminated at $t\approx7.5 \text{s}$. The gray-shaded regions in Figs. \ref{f4}(f), (g), and (h) of the Compared Method 1 denote the periods during which the robot operation is halted. Although the Compared Method 2 is capable of compensating for the energy level, as shown in Fig. \ref{f4}(j), its discontinuous mode-transition behavior induces transient peaking phenomena that degrade the overall tracking performance, as observed in Figs. \ref{f4}(i) and (k) and discussed in Remark 7 of the original work \cite{c28}. Furthermore, the robot control inputs corresponding to the three methods are presented in Figs. \ref{f4}(d), (h), and (l), where the control torques in Fig. \ref{f4}(h) further illustrate the peaking behavior in the Compared Method 2.

Fig. \ref{f5} shows comparative results under combined disturbance conditions with doubled Disturbances 1 \eqref{eq58} and 3 \eqref{eq60}, which tend to inject energy into the closed-loop system. In this context, there is no energy-depletion issue in the Compared Method 1, and both the Compared Methods 1 and 2 achieve task performance as the Proposed Method, as presented in Figs. \ref{f5}(a), (c), (e), (g), (i), (k). While the Proposed Method preserves the prescribed energy bounds in Fig. \ref{f5}(b), the Compared Methods 1 and 2 accumulate all energy injected from the physical environment, thereby violating the upper energy bound in Figs. \ref{f5}(f) and (j). This implies that excessive energy accumulation in the Compared Methods 1 and 2 may lead to the sudden release of a large amount of stored energy, thereby posing potential safety and stability risks. Moreover, the robot control torques of the three methods are given in Figs. \ref{f5} (d), (h), and (l). It should be noted that almost no peaking phenomenon caused by the switching behavior is observed in the Compared Method 2, since the conservative controller is primarily maintained in this situation.

\section{Discussion and Limitation}
In the present study, the robust time-varying semi-passive velocity field control method for the class of nonlinear mechanical systems subjected to external disturbances was developed, analyzed, and validated. From theoretical viewpoints, the proposed method enables relaxing the conservative nature of conventional PVFC in a controlled manner based on the information of the energy level, as mentioned in Section \ref{3b}. According to Theorems \ref{thm3} and \ref{thm4}, the proposed control strategy simultaneously ensured the convergence of both the energy level and the tracking errors of the closed-loop system to their respective bounded regions under external disturbances, addressing a limitation unconsidered in the preliminary work \cite{c27}. Based on the theoretical analyses from Theorems \ref{thm2}, \ref{thm3}, and \ref{thm4}, systematic remarks were given to emphasize how the control parameters could be effectively configured for practical implementation, as summarized in Remarks \ref{remark1}, \ref{remark2}, \ref{remark3}, and \ref{remark4}. Furthermore, the correctness of the proposed method was also validated through simulation examples of a two-link robot manipulator subjected to multiple external disturbances. Beyond ensuring robustness and UUB, the simulation results in Section \ref{4c} further demonstrated the effectiveness of tuning control parameters in specific conditions. Moreover, the comparative results in Section \ref{4d} highlighted the capability of the proposed method to relax the conservative nature, eliminate the discontinuous behavior, and mitigate the potential risks in \cite{c22} and \cite{c28}. Additionally, although the power flows in all scenarios satisfy the bounded region in Corollary \ref{coro1} from theoretical viewpoints, the bounded domain of the power flow remains relatively broad due to using the results of Theorems \ref{thm2}, \ref{thm3}, and \ref{thm4} directly. Therefore, it is necessary to incorporate power flow adjustment mechanisms into the current framework in future work to achieve a tighter bound.

\section{Conclusion}

In this paper, we have introduced a robust time-varying semi-passive velocity field control approach to address the inherent conservativeness of classical PBC and enhance physical interaction safety. The proposed method enabled relaxing conservative nature in a controlled manner by ensuring that the system behaved passively with respect to the pair of external force and velocity when the internal energy exceeded a designed threshold, while allowing non-passive behavior when necessary for task execution. Theoretical results have demonstrated the convergence of both the energy level and system states toward the bounded regions despite the existence of external disturbances. Furthermore, a constraint on instantaneous power flow was also guaranteed by the proposed approach. The effectiveness and advantages of the proposed approach were validated by numerical simulation examples. 

For future work, the proposed method will be further investigated under input saturation to improve its suitability for practical implementations. Moreover, a power valve-like mechanism will be developed to directly regulate the rate of energy delivered between the closed-loop mechanical system and its physical environment. Building upon this foundation, future research will also explore the application of the proposed approach to physical interaction scenarios, such as human--robot interaction tasks, and investigate its integration with preference-based optimization to enable personalized or task-specific adaptation of control parameters.
% A conclusion section is not required. Although a conclusion may review the 
% main points of the paper, do not replicate the abstract as the conclusion. A 
% conclusion might elaborate on the importance of the work or suggest 
% applications and extensions. 

%\appendices
\section*{Appendix}
\appendices
\label{app}
Figs. \ref{f6} and \ref{f7} show the results of Conditions 1, 2, and 3 under Disturbances 2 and 3, respectively.
\renewcommand{\thefigure}{A.\arabic{figure}}
\setcounter{figure}{0}

\begin{figure*}
    \centering
    \includegraphics[width=0.9\linewidth]{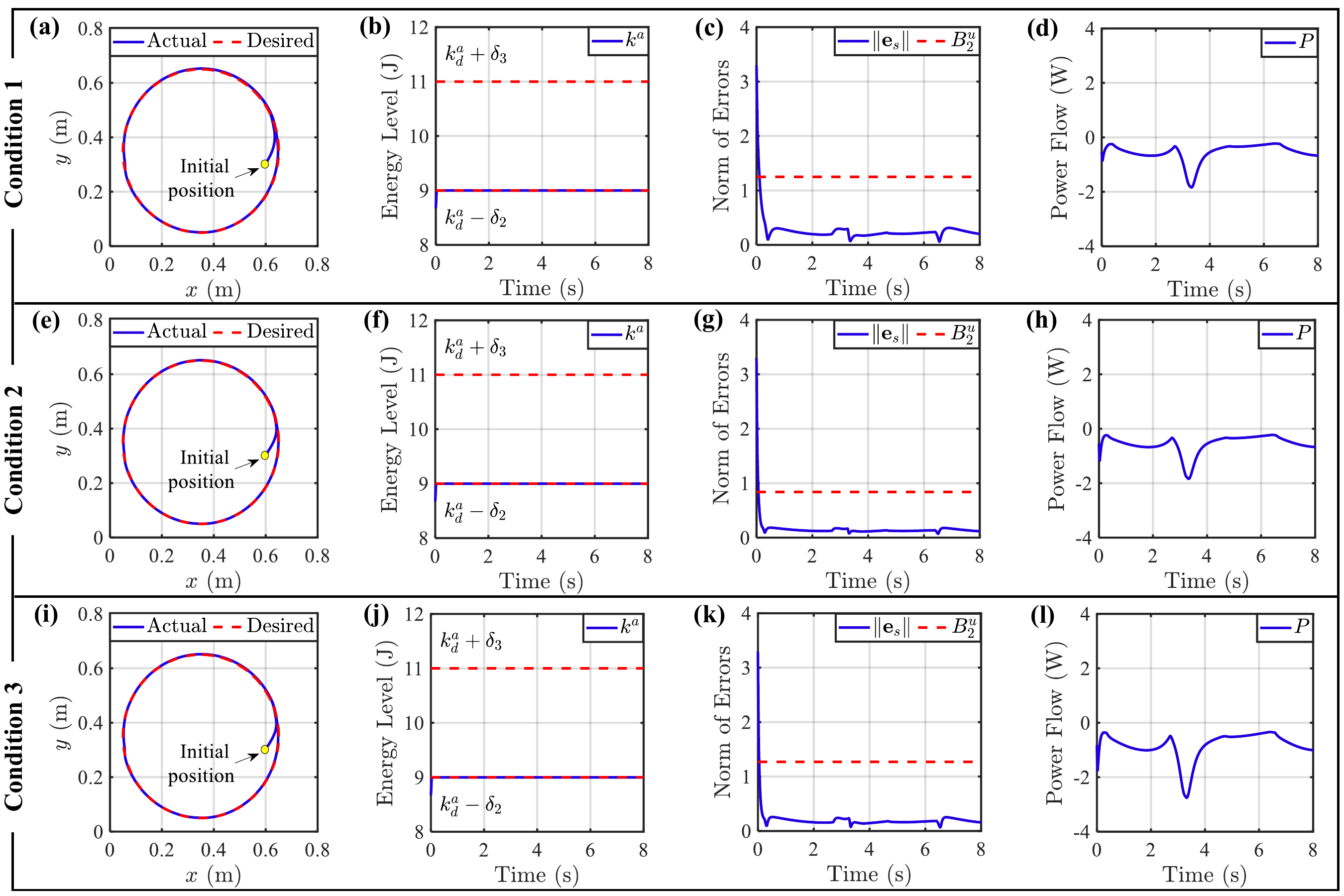}
    \caption{Responses of the closed-loop system under Disturbance 2 with control parameter adjustment, where (a), (e), (i) are actual and desired trajectories; (b), (f), (j) are energy levels; (c), (g), (k) are norms of tracking errors; (d), (h), (l) are power flows of Conditions 1, 2, and 3, respectively.}
    \label{f6}
\end{figure*}

\begin{figure*}
    \centering
    \includegraphics[width=0.9\linewidth]{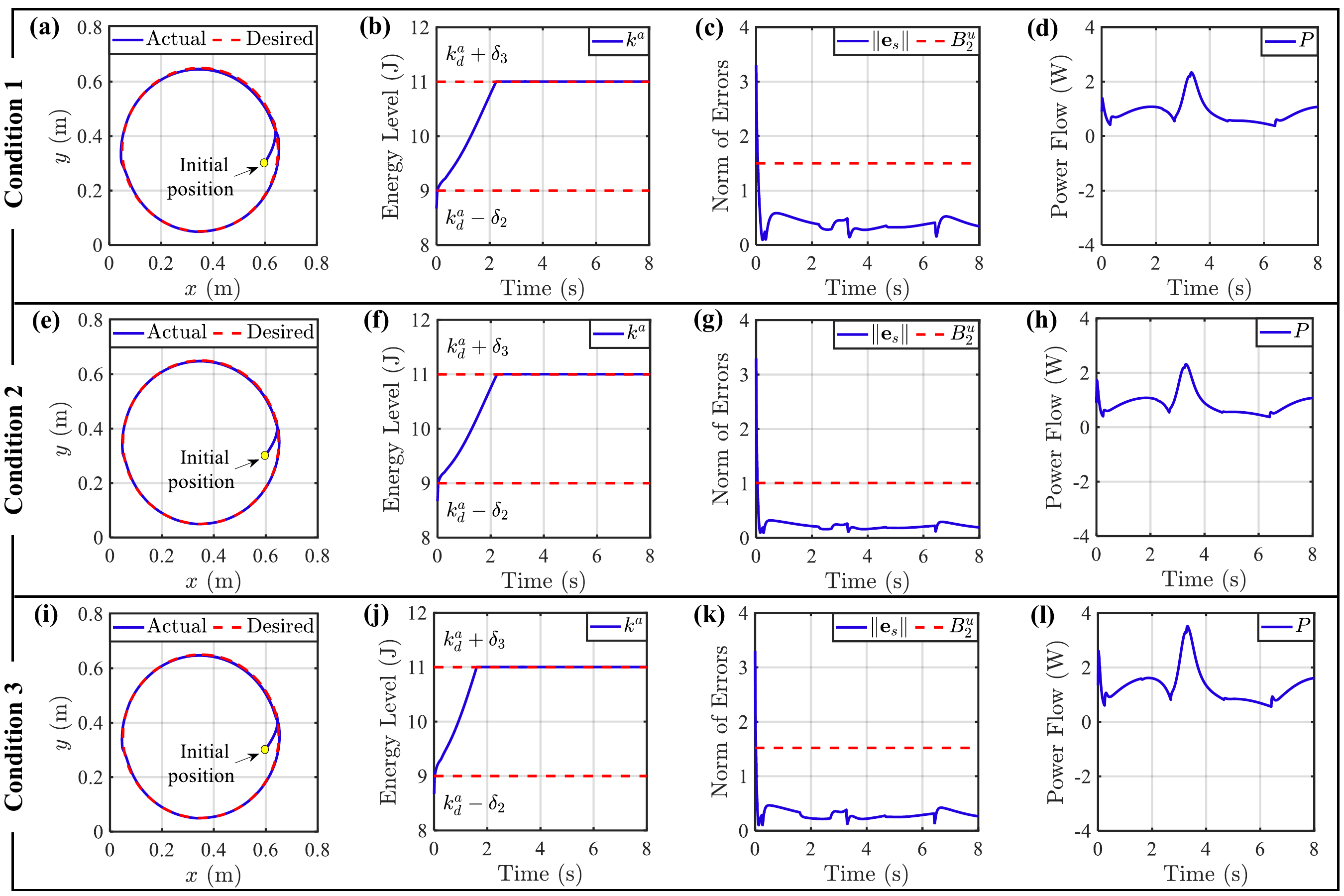}
    \caption{Responses of the closed-loop system under Disturbance 3 with control parameter adjustment, where (a), (e), (i) are actual and desired trajectories; (b), (f), (j) are energy levels; (c), (g), (k) are norms of tracking errors; (d), (h), (l) are power flows of Conditions 1, 2, and 3, respectively.}
    \label{f7}
\end{figure*}

\vspace{-3.75em}
\begin{IEEEbiography}[{\includegraphics[width=1in,height=1.25in,clip,keepaspectratio]{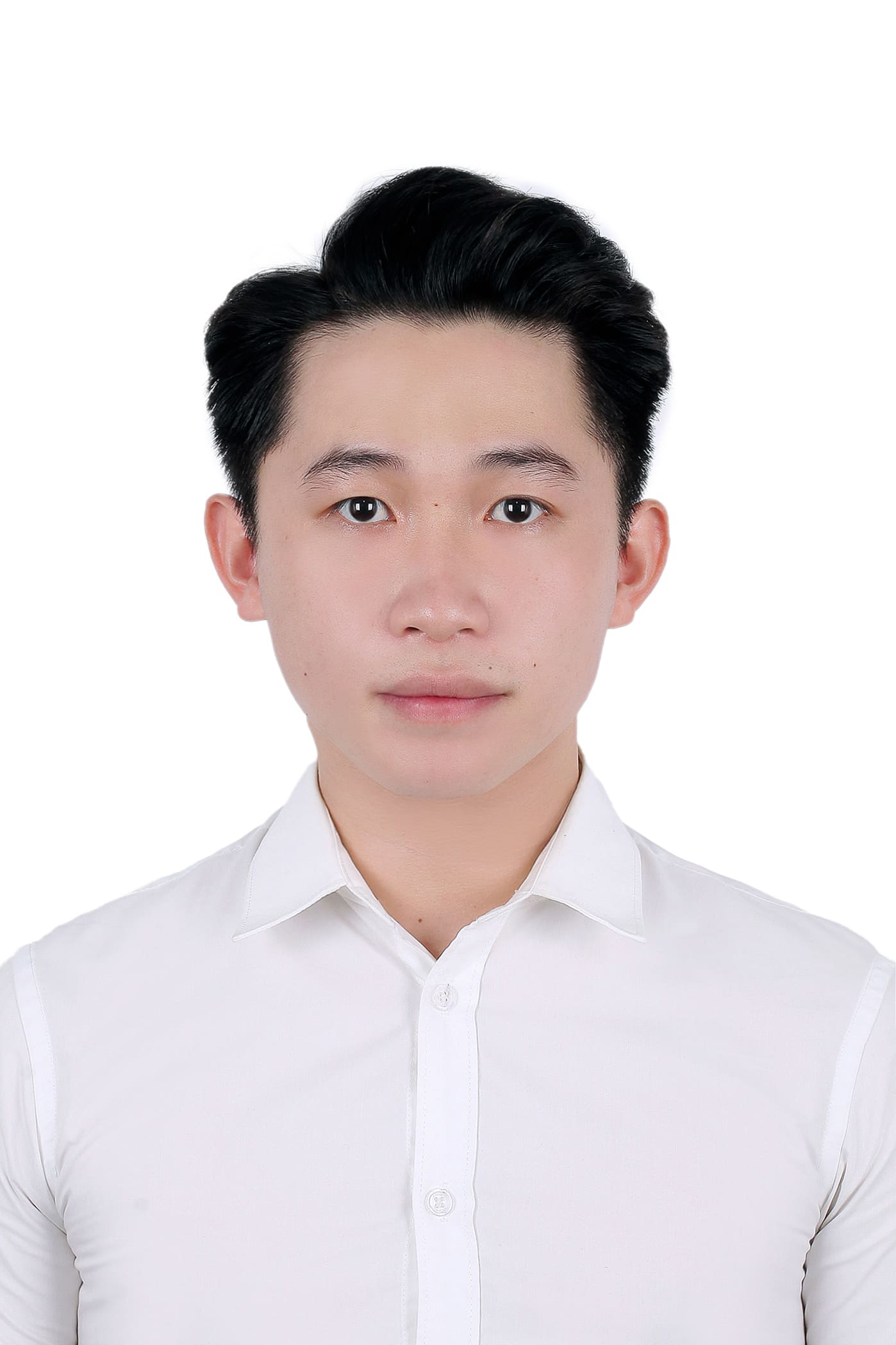}}]{Van Trong Dang} (Graduate Student Member, IEEE) received the B.E. and M.Sc. degrees in control engineering and automation from the School of Electrical and Electronic Engineering, Hanoi University of Science and Technology, Hanoi, Vietnam, in 2021 and 2022, respectively. 

 Since 2023, he has been a Ph.D. student with the Division of Information Science, Nara Institute of Science and Technology, Nara, Japan. His research interests include control theory and its applications, with special emphasis on stabilization, passivity-based control, energy-based control, safety-critical control, and collaborative human-robot systems.

\end{IEEEbiography}
\vspace{-4.5em}
\begin{IEEEbiography}[{\includegraphics[width=1in,height=1.25in,clip,keepaspectratio]{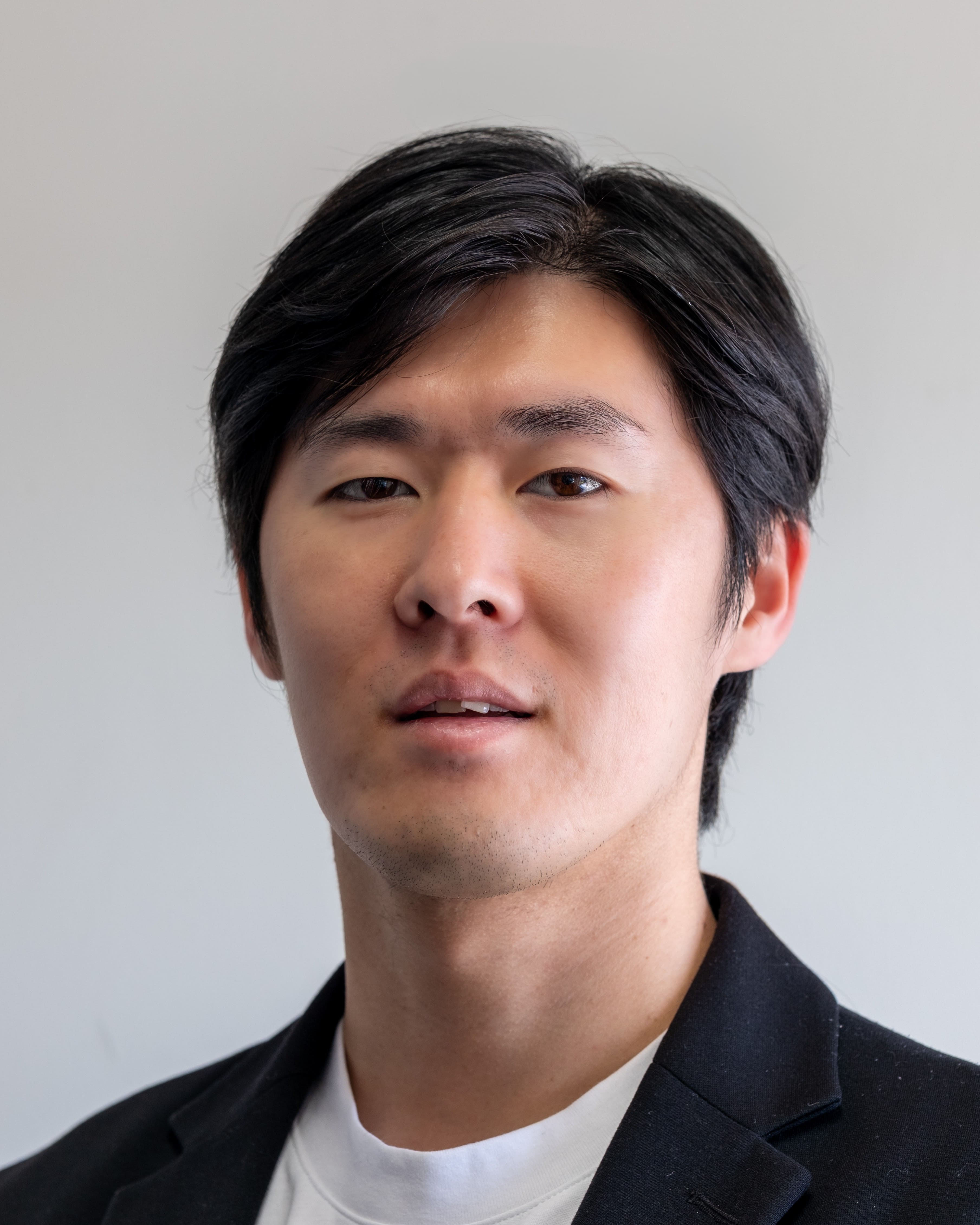}}]{Sumitaka Honji} (Member, IEEE) received the B.E., M.E., and Ph.D. degrees in mechanical engineering from Kyushu University, Fukuoka, Japan, in 2019, 2021, and 2024, respectively. 

 Since 2024, he has been an assistant professor with the Graduate School of Science and Technology, Nara Institute of Science and Technology, Ikoma, Japan. His research interests include modeling and control of soft robots, control theory, and human-robot interaction.

 Dr. Honji is a member of the Robotics Society of Japan
and the Japan Society of Mechanical Engineers.

\end{IEEEbiography}
\vspace{-4.5em}
\begin{IEEEbiography}[{\includegraphics[width=1in,height=1.25in,clip,keepaspectratio]{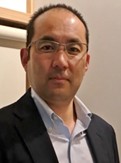}}]{Takahiro Wada} (Member, IEEE) received the B.S. degree in mechanical engineering, the M.S. degree in information science and systems engineering, and the Ph.D. degree in robotics from Ritsumeikan University, Japan, in 1994, 1996, and 1999, respectively.  

 In 1999, he became an Assistant Professor with Ritsumeikan University. In 2000, he joined Kagawa University, Takamatsu, Japan, as an Assistant Professor in the Department of Intelligent Mechanical Systems Engineering as a Faculty of engineering. He was promoted to associate professor in 2003. In 2012, he joined the College of Information Science and Engineering at Ritsumeikan University as a Full Professor. Since 2021, he has been a Full Professor at the Nara Institute of Science and Technology (NAIST), Japan. In 2006 and 2007, he spent half a year at the University of Michigan Transportation Research Institute in Ann Arbor as a Visiting Researcher. His current research interests include human–machine systems, robotics, and human modeling. 

\end{IEEEbiography}

 %\begin{IEEEbiographynophoto}{Second B. Author,} photograph and biography not available at the
% time of publication.
 %\end{IEEEbiographynophoto}

% \begin{IEEEbiographynophoto}{Third C. Author Jr.} (Member, IEEE), photograph and biography not available at the
% time of publication.
% \end{IEEEbiographynophoto}

\end{document}